\documentclass{article}
\pdfoutput=1
\PassOptionsToPackage{numbers, compress}{natbib}

\usepackage[final]{neurips_2023}




\usepackage[utf8]{inputenc} 
\usepackage[T1]{fontenc}    
\usepackage{hyperref}       
\usepackage{url}            
\usepackage{booktabs}       
\usepackage{amsfonts}       
\usepackage{nicefrac}       
\usepackage{microtype}      
\usepackage{xcolor}         
\usepackage{wrapfig}

\usepackage[ruled]{algorithm2e} 
\usepackage{tikz}
\usetikzlibrary{arrows}
\usetikzlibrary{shapes.multipart}
\usepackage{caption}

\SetAlFnt{\small}
\SetAlCapFnt{\small}
\SetAlCapNameFnt{\small}
\SetAlCapHSkip{0pt}
\IncMargin{-\parindent}

\hypersetup{
    colorlinks=true,
    linkcolor=blue,
    filecolor=magenta,      
    urlcolor=cyan,
    citecolor=green!60!black,
    pdfpagemode=FullScreen,
    }

\usepackage{mathtools}
\usepackage{amsthm}
\usepackage{bbm}
\usepackage{tikz} 
\usepackage{floatrow}
\usepackage{enumitem}
\usepackage{dsfont}

\usepackage{parskip}
\DeclareMathOperator{\reg}{Reg}
\DeclareMathOperator{\ols}{OLS}
\DeclareMathOperator{\etc}{ETC}

\usepackage{color-edits}
\addauthor{kh}{red}
\addauthor{cp}{blue}
\addauthor{sw}{brown}

\usepackage{caption}
\usepackage{subcaption}
\usepackage{cleveref}

\newtheorem{theorem}{Theorem}[section]
\newtheorem{lemma}[theorem]{Lemma}
\newtheorem{condition}[theorem]{Condition}

\newtheorem{proposition}[theorem]{Proposition}
\newtheorem{assumption}[theorem]{Assumption}

\newtheorem{definition}[theorem]{Definition}

\newcommand{\ve}{\mathbf{e}}
\newcommand{\vx}{\mathbf{x}}

\newcommand{\vbeta}{\boldsymbol{\beta}}
\newcommand{\vtheta}{\boldsymbol{\theta}}
\newcommand{\hvtheta}{\widehat{\vtheta}}
\newcommand{\vomega}{\boldsymbol{\omega}}
\newcommand{\E}{\mathbb{E}}

\newcommand{\calC}{\mathcal{C}}
\newcommand{\calO}{\mathcal{O}}
\newcommand{\tcalO}{\widetilde{\calO}}

\newcommand{\strat}{\texttt{\upshape strat}}
\newcommand{\stackel}{\texttt{\upshape Stackel}}
\renewcommand{\E}{\mathbb{E}}
\newcommand{\cI}{\mathcal{I}}
\newcommand{\cE}{\mathcal{E}}
\newcommand{\cD}{\mathcal{D}}
\newcommand{\cX}{\mathcal{X}}

\newcommand{\eps}{\varepsilon}
\newcommand{\calE}{\mathcal{E}}
\newcommand{\hell}{\widehat{\ell}}
\newcommand{\1}{\mathds{1}}
\newcommand\numberthis{\addtocounter{equation}{1}\tag{\theequation}}
\newcommand{\xhdr}[1]{\vspace{0mm} \noindent{\bf #1}}

\newenvironment{assumption*}
 {\ifnum\value{subassumption}=0 \stepcounter{assumption}\fi\subassumption}
 {\endsubassumption}

\NewDocumentCommand{\pv}{m e{_} m}{%
  #1\IfValueT{#2}{_{#2}}^{(#3)}%
}

\title{Strategic Apple Tasting}
\begin{document}

\author{Keegan Harris\\
School of Computer Science\\
Carnegie Mellon University\\
Pittsburgh, PA 15213\\
\texttt{keeganh@cmu.edu}\\
\And
Chara Podimata\\
Sloan School of Management\\
MIT\\
Cambridge, MA 02142\\
\texttt{podimata@mit.edu}\\
\And
Zhiwei Steven Wu\\
School of Computer Science\\
Carnegie Mellon University\\
Pittsburgh, PA 15213\\
\texttt{zstevenwu@cmu.edu}
}

\date{}

\maketitle

\begin{abstract}
Algorithmic decision-making in high-stakes domains often involves assigning \emph{decisions} to agents with \emph{incentives} to strategically modify their input to the algorithm.
In addition to dealing with incentives, in many domains of interest (e.g. lending and hiring) the decision-maker only observes feedback regarding their policy for rounds in which they assign a positive decision to the agent; this type of feedback is often referred to as \emph{apple tasting} (or \emph{one-sided}) feedback.
We formalize this setting as an online learning problem with apple-tasting feedback where a \emph{principal} makes decisions about a sequence of $T$ \emph{agents}, each of which is represented by a \emph{context} that may be strategically modified.
Our goal is to achieve sublinear \emph{strategic regret}, which compares the performance of the principal to that of the best fixed policy in hindsight, \emph{if the agents were truthful when revealing their contexts}.
Our main result is a learning algorithm which incurs $\Tilde{\mathcal{O}}(\sqrt{T})$ strategic regret when the sequence of agents is chosen \emph{stochastically}.
We also give an algorithm capable of handling \emph{adversarially-chosen} agents, albeit at the cost of $\Tilde{\mathcal{O}}(T^{(d+1)/(d+2)})$ strategic regret (where $d$ is the dimension of the context). 
Our algorithms can be easily adapted to the setting where the principal receives \emph{bandit} feedback---this setting generalizes both the linear contextual bandit problem (by considering agents with incentives) and the strategic classification problem (by allowing for partial feedback).\looseness-1
\end{abstract}

\section{Introduction}
Algorithmic systems have recently been used to aid in or automate decision-making in high-stakes domains (including lending and hiring) in order to, e.g., improve efficiency or reduce human bias \cite{berman_2021, monster}. 
%
%
When subjugated to algorithmic decision-making 
in high-stakes settings, 
individuals have an incentive to \emph{strategically} modify their observable attributes to appear more qualified. 
Such behavior is often observed in practice.
For example, credit scores are often used to predict the likelihood an individual will pay back a loan on time if given one.
Online articles with titles like \emph{``9 Ways to Build and Improve Your Credit Fast''} are ubiquitous and offer advice such as ``pay credit card balances strategically'' in order to improve one's credit score with minimal effort \cite{o'shea_2022}. 
In hiring, common advice ranges from curating a list of keywords to add to one's resume, to using white font in order to ``trick'' automated resume scanning software \cite{eilers_2023, the_cv_store}. 
If left unaccounted for, such strategic manipulations could result in individuals being awarded opportunities for which they are not qualified for, possibly at the expense of more deserving candidates.
As a result, it is critical to keep individuals' incentives in mind when designing algorithms for learning and decision-making in 
high-stakes settings.\looseness-1

In addition to dealing with incentives, another challenge of designing learning algorithms for high-stakes settings is the possible \emph{selection bias} introduced by the way decisions are made.
In particular, decision-makers often only have access to feedback about the deployed policy from individuals that have received positive decisions (e.g., the applicant is given the loan, the candidate is hired to the job and then we can evaluate how good our decision was). In the language of online learning, this type of feedback is known as \emph{apple tasting} (or \emph{one-sided}) feedback.
%
\emph{When combined, these two complications (incentives \& one-sided feedback) have the potential to amplify one other, as algorithms can learn only when a positive decision is made, but individuals have an incentive to strategically modify their attributes in order to receive such positive decisions, which may interfere with the learning process.}
\subsection{Contributions}
We formalize our setting as a game between a \emph{principal} and a sequence of $T$ \emph{strategic agents}, each with an associated \emph{context} $\vx_t$ which describes the agent.
At every time $t \in \{1, \ldots, T\}$, the principal deploys a \emph{policy} $\pi_t$, a mapping from contexts to binary \emph{decisions} (e.g., whether to accept/reject a loan applicant).
Given policy $\pi_t$, agent $t$ then presents a (possibly modified) context $\vx_t'$ to the algorithm, and receives a decision $a_t = \pi_t(\vx_t')$. 
If $a_t = 1$, the principal observes \emph{reward} $r_t(a_t) = r_t(1)$; if $a_t = 0$ they receive no feedback. 
$r_t(0)$ is assumed to be known and constant across rounds.\footnote{We relax this assumption at later parts of the paper with virtually no impact on our results.}
Our metric of interest is \emph{strategic regret}, i.e., regret with respect to the best fixed policy in hindsight, \emph{if agents were truthful when reporting their contexts}.\looseness-1

Our main result is an algorithm which achieves $\Tilde{O}(\sqrt{T})$ strategic regret (with polynomial per-round runtime) when there is sufficient randomness in the distribution over agents (\Cref{alg:sa-ols}).
At a high level, our algorithm deploys a linear policy at every round which is appropriately shifted to account for the agents' strategic behavior. 
We identify a \emph{sufficient} condition under which the data received by the algorithm at a given round is ``clean'', i.e. has not been strategically modified. 
\Cref{alg:sa-ols} then online-learns the relationship between contexts and rewards by only using data for which it is sure is clean.\looseness-1

In contrast to performance of algorithms which operate in the non-strategic setting, the regret of~\Cref{alg:sa-ols} depends on an exponentially-large constant $c(d, \delta) \approx (1 - \delta)^{-d}$ due to the one-sided feedback available for learning, where $d$ is the context dimension and $\delta \in (0,1)$ is a parameter which represents the agents' ability to manipulate.
While this dependence on $c(d, \delta)$ is insignificant when the number of agents $T \rightarrow \infty$ (i.e. is very large), it may be problematic for the principal whenever $T$ is either small or unknown. 
To mitigate this issue, we show how to obtain $\Tilde{O}(d \cdot T^{2/3})$ strategic regret by playing a modified version of the well-known \emph{explore-then-commit} algorithm (\Cref{alg:etc-apple}). 
At a high level,~\Cref{alg:etc-apple} ``explores'' by always assigning action $1$ for a fixed number of rounds (during which agents do not have an incentive to strategize) in order to collect sufficient information about the data-generating process. 
It then ``exploits'' by using this data learn a strategy-aware linear policy.
Finally, we show how to combine~\Cref{alg:sa-ols} and~\Cref{alg:etc-apple} to achieve $\Tilde{O}(\min\{c(d, \delta) \cdot \sqrt{T}, d \cdot T^{2/3}\})$ strategic regret whenever $T$ is unknown.

While the assumption of stochastically-chosen agents is well-motivated in general, it may be overly restrictive in some specific settings. Our next result is an algorithm which obtains $\Tilde{O}(T^{(d+1)/(d+2)})$ strategic regret when agents are chosen \emph{adversarially} (\Cref{alg:UPD-ONE}). 
\Cref{alg:UPD-ONE} uses a variant of the popular Exp3 algorithm to trade off between a carefully constructed set of (exponentially-many) policies \cite{auer2002nonstochastic}.
As a result, it achieves sublinear strategic regret when agents are chosen adversarially, but requires an exponentially-large amount of computation at every round.

Finally, we note that while our primary setting of interest is that of one-sided feedback, all of our algorithms can be easily extended to the more general setting in which the principal receives \emph{bandit feedback} at each round, i.e. $r_t(0)$ is not constant and must be learned from data. 
To the best of our knowledge, we are the first to consider strategic learning in the contextual bandit setting.
\subsection{Related work}\label{sec:related}
\xhdr{Strategic responses to algorithmic decision-making}
There is a growing line of work at the intersection of economics and computation on algorithmic decision-making with incentives, under the umbrella of \emph{strategic classification} or \emph{strategic learning} \cite{hardt2016strategic, ahmadi2021strategic, levanon2021strategic, eilat2022strategic} focusing on online learning settings \cite{dong2018strategic, chen2020learning}, causal learning \cite{shavit2020causal, harris2022strategic, harris2022strategyproof, horowitz2023causal}, incentivizing desirable behavior \cite{kleinberg2020classifiers, harris2021stateful, bechavod2022information, levanon2022generalized}, incomplete information \cite{harris2021bayesian, ghalme2021strategic, jagadeesan2021alternative}.
In its most basic form, a principal makes either a binary or real-valued prediction about a strategic agent, and receives \emph{full feedback} (e.g., the agent's \emph{label}) after the decision is made. 
While this setting is similar to ours, it crucially ignores the one-sided feedback structure present in many strategic settings of interest.
In our running example of hiring, full feedback would correspond to a company not offering an applicant a job, and yet still getting to observe whether they would have been a good employee!
As a result, such methods are not applicable in our setting.
Concurrent work \cite{chen2023performative} studies the effects of bandit feedback in the related problem of \emph{performative prediction} \cite{perdomo2020performative}, which considers data distribution shifts at the \emph{population level} in response to the deployment of a machine learning model. 
In contrast, our focus is on strategic responses to machine learning models at the \emph{individual level} under apple tasting and bandit feedback.
\citet{ahmadi2023fundamental} study an online strategic learning problem in which they consider ``bandit feedback'' \emph{with respect to the deployed classifier}. 
In contrast, we use the term ``bandit feedback'' to refer to the fact that we only see the outcome when for the action/decision taken. 

\xhdr{Apple tasting and online learning}
\citet{helmbold2000apple} introduce the notion of apple-tasting feedback for online learning.
In particular, they study a binary prediction task over ``instances'' (e.g., fresh/rotten apples), in which a positive prediction is interpreted as accepting the instance (i.e. ``tasting the apple'') and a negative prediction is interpreted as rejecting the instance (i.e., \emph{not} tasting the apple). 
The learner only gets feedback when the instance is accepted (i.e., the apple is tasted).
While we are the first to consider classification under incentives with apple tasting feedback, similar feedback models have been studied in the context of algorithmic fairness \cite{bechavod2019equal}, partial-monitoring games \cite{antos2013toward}, and recidivism prediction \cite{ensign2018decision}.
A related model of feedback is that of \emph{contaminated controls} \cite{lancaster1996case}, which considers learning from (1) a treated group which contains only \emph{treated} members of the agent population and (2) a ``contaminated'' control group with samples from the \emph{entire} agent population (not just those under \emph{control}). Technically, our results are also related to a line of work in contextual bandits which shows that greedy algorithms without explicit exploration can achieve sublinear regret as long as the underlying context distribution is sufficiently diverse \cite{GreedyOR, BBK21, GreedyKannan, SivakumarWB20, RaghavanSVW18}.\looseness-1

\xhdr{Bandits and agents}
A complementary line of work to ours is that of \emph{Bayesian incentive-compatible} (BIC) exploration in multi-armed bandit problems \cite{mansour2015bayesian, hu2022incentivizing, sellke2021price, immorlica2019bayesian, ngo2021incentivizing, ngo2023incentivized}.
Under such settings, the goal of the principal is to \emph{persuade} a sequence of $T$ agents with incentives to explore across several different actions with bandit feedback. 
In contrast, in our setting it is the principal, not the agents, who is the one taking actions with partial feedback.
As a result there is no need for persuasion, but the agents now have an incentive to strategically modify their behavior in order to receive a more desirable decision/action.\looseness-1

\xhdr{Other related work} 
Finally, our work is broadly related to the literature on learning in repeated Stackelberg games~\cite{balcan2015commitment, zhao2023online}, online Bayesian persuasion~\cite{castiglioni2020online, castiglioni2021multi, bernasconi2023optimal}, and online learning in principal-agent problems~\cite{conitzer2006learning, ho2014adaptive, zhu2022sample}. 
In the repeated Stackelberg game setting, the principal (leader) commits to a mixed strategy over a finite set of actions, and the agent (follower) best-responds by playing an action from a finite set of best-responses. 
Unlike in our setting, both the principal’s and agent’s payoffs can be represented by matrices. 
In contrast, in our setting the principal commits to a pure strategy from a continuous set of actions, and the agent best-responds by playing an action from a continuous set.
In online Bayesian persuasion, the principal (sender) commits to a ``signaling policy'' (a random mapping from ``states of the world'' to receiver actions) and the agent (receiver) performs a posterior update on the state based on the principal’s signal, then takes an action from a (usually finite) set. 
In both this setting and ours, the principal’s action is a policy. 
However in our setting the policy is a linear decision rule, whereas in the Bayesian persuasion setting, the policy is a set of conditional probabilities which form an ``incentive compatible'' signaling policy. 
This difference in the policy space for the principal typically leads to different algorithmic ideas being used in the two settings.
Strategic learning problems are, broadly speaking, instances of principal-agent problems. 
In contract design, the principal commits to a contract (a mapping from ``outcomes'' to agent payoffs). 
The agent then takes an action, which affects the outcome.
In particular, they take the action which maximizes their expected payoff, subject to some cost of taking the action. 
The goal of the principal is to design a contract such that their own expected payoff is maximized. 
While the settings are indeed similar, there are several key differences. 
First, in online contract design the principal always observes the outcome, whereas in our setting the principal only observes the reward if a positive decision is made.
Second, the form of the agent’s best response is different, which leads to different agent behavior and, as a result, different online algorithms for the principal.
\section{Setting and background}
%
We consider a game between a \emph{principal} and a sequence of $T$ \emph{agents}. Each agent is associated with a \emph{context} $\vx_t \in \cX \subseteq \mathbb{R}^d$, which characterizes their attributes (e.g., a loan applicant's credit history/report).
At time $t$, the principal commits to a \emph{policy} $\pi_t : \cX \rightarrow \{1, 0\}$, which maps from contexts to binary \emph{decisions} (e.g., whether to accept/reject the loan application).
We use $a_t = 1$ to denote the the principal's positive decision at round $t$ (e.g., agent $t$'s loan application is approved), and $a_t = 0$ to denote a negative decision (e.g., the loan application is rejected).
Given $\pi_t$, agent $t$ \emph{best-responds} by strategically modifying their context within their \emph{effort budget} as follows:\looseness-1
\begin{definition}[Agent best response; lazy tiebreaking]\label{def:br}
    Agent $t$ best-responds to policy $\pi_t$ by modifying their context according to the following optimization program.
    \begin{equation*}
        \begin{aligned}
            \vx_t' \in &\arg \max_{\vx' \in \mathcal{X}} \; \mathbbm{1}\{\pi_t(\vx') = 1\}\\
            \text{s.t.} \; &\|\vx' - \vx_t \|_2 \leq \delta
        \end{aligned}
    \end{equation*}
    Furthermore, we assume that if an agent is indifferent between two (modified) contexts, they choose the one which requires the least amount of effort to obtain (i.e., agents are \emph{lazy} when tiebreaking).
\end{definition}
In other words, every agent wants to receive a positive decision, but has only a limited ability to modify their (initial) context (represented by $\ell_2$ budget $\delta$).\footnote{Our results readily extend to the setting in which the agent's effort constraint takes the form of an ellipse rather than a sphere. Under this setting, the agent effort budget constraint in~\Cref{def:br} would be $\|A^{1/2}(\vx' - \vx_t)\|_2 \leq \delta$, where $A \in \mathbb{R}^{d \times d}$ is some positive definite matrix. If $A$ is known to the principal, this may just be viewed as a linear change in the feature representation.}
Such an effort budget may be induced by time or monetary constraints and is a ubiquitous model of agent behavior in the strategic learning literature (e.g., \cite{kleinberg2020classifiers, harris2021stateful, chen2020learning, bechavod2021gaming}).
We focus on \emph{linear thresholding policies} where the principal assigns action $\pi(\vx') = 1$, if and only if $\langle \vbeta, \vx' \rangle \geq \gamma$ for some $\vbeta \in \mathbb{R}^d$, $\gamma \in \mathbb{R}$. We refer to $\langle \vbeta, \vx_t' \rangle = \gamma$ as the \emph{decision boundary}.
%
%
For linear thresholding policies, the agent's best-response according to~\Cref{def:br} is to modify their context in the direction of $\vbeta/\|\vbeta\|_2$ until the decision-boundary is reached (if it can indeed be reached).
While we present our results for \emph{lazy tiebreaking} for ease of exposition, all of our results can be readily extended to the setting in which agents best-respond with a ``trembling hand'', i.e. \emph{trembling hand tiebreaking}.
Under this setting, we allow agents who strategically modify their contexts to ``overshoot'' the decision boundary by some bounded amount, which can be either stochastic or adversarially-chosen.
See~\Cref{app:trembling} for more details.

The principal observes $\vx_t'$ and plays action $a_t = \pi_t(\vx_t')$ according to policy $\pi_t$.
If $a_t = 0$, the principal receives some known, \emph{constant} reward $r_t(0) := r_0 \in \mathbb{R}$.
On the other hand, if the principal assigns action $a_t = 1$, we assume that the reward the principal receives is linear in the agent's \emph{unmodified} context, i.e.,
\begin{equation}\label{eq:r1}
    r_t(1) := \langle \pv{\vtheta}{1}, \vx_t \rangle + \epsilon_t
\end{equation}
for some \emph{unknown} $\pv{\vtheta}{1} \in \mathbb{R}^d$, where $\epsilon_t$ is i.i.d. zero-mean sub-Gaussian random noise with (known) variance $\sigma^2$.
Note that $r_t(1)$ is observed \emph{only} when the principal assigns action $a_t = 1$, and \emph{not} when $a_t = 0$.
Following~\citet{helmbold2000apple}, we refer to such feedback as \emph{apple tasting} (or \emph{one-sided}) feedback.
Mapping to our lending example, the reward a bank receives for
rejecting a particular loan applicant is the same across all applicants, whereas their reward for a positive decision could be anywhere between a large, negative reward (e.g., if a loan is never repaid) to a large, positive reward (e.g., if the loan is repaid on time, with interest).\looseness-1

The most natural measure of performance in our setting is that of \emph{Stackelberg regret}, which compares the principal's reward over $T$ rounds with that of the optimal policy \emph{given that agents strategize}.\looseness-1

\begin{definition}[Stackelberg regret]\label{def:stackel}
    The Stackelberg regret of a sequence of policies $\{\pi_t\}_{t \in [T]}$ on agents $\{\vx_t\}_{t \in [T]}$ is
    \begin{equation*}
        \reg_{\stackel}(T) := \sum_{t \in [T]} r_t(\Tilde{\pi}^*(\tilde{\vx}_t)) - \sum_{t \in [T]} r_t(\pi_t(\vx_t'))
    \end{equation*}
    where $\tilde{\vx}_t$ is the best-response from agent $t$ to policy $\tilde{\pi}^*$ and $\Tilde{\pi}^*$ is the optimal-in-hindsight policy, given that agents best-respond according to~\Cref{def:br}.
\end{definition}
A stronger measure of performance is that of \emph{strategic regret}, which compares the principal's reward over $T$ rounds with that of the optimal policy \emph{had agents reported their contexts truthfully}.
\begin{figure}
    \centering
    \noindent \fbox{\parbox{\textwidth}{
    \vspace{0.1cm}
    \textbf{Classification under agent incentives with apple tasting feedback}
    
    \vspace{-0.2cm}
    \noindent\rule{13.9cm}{0.4pt}\\
    For $t = 1, \ldots, T$:
    \begin{enumerate}
        \item Principal publicly commits to a mapping $\pi_t: \mathcal{X} \rightarrow \{1, 0\}$. 
        \item Agent $t$ arrives with context $\vx_t \in \cX$ (hidden from the principal).
        \item Agent $t$ strategically modifies context from $\vx_t$ to $\vx_t'$ according to~\Cref{def:br}.
        \item Principal observes (modified) context $\vx_t'$ and plays action $a_t = \pi_t(\vx_t')$.
        \item Principal observes $r_t(1) := \langle \pv{\vtheta}{1}, \vx_t \rangle + \epsilon_t$, if and only if $a_t = 1$.
    \end{enumerate}
    }}
    \caption{Summary of our model.}
    \label{fig:summary}
\end{figure}
\begin{definition}[Strategic regret]
    The strategic regret of a sequence of policies $\{\pi_t\}_{t \in [T]}$ on agents $\{\vx_t\}_{t \in [T]}$ is
    \begin{equation*}
        \reg_{\strat}(T) := \sum_{t \in [T]} r_t(\pi^*(\vx_t)) - \sum_{t \in [T]} r_t(\pi_t(\vx_t'))
    \end{equation*}
    where $\pi^*(\vx_t) = 1$ if $\langle \pv{\vtheta}{1}, \vx_t \rangle \geq r_0$ and $\pi^*(\vx_t) = 0$ otherwise.
\end{definition}
\begin{proposition}\label{prop:stronger}
Strategic regret is a \emph{stronger} performance notion compared to Stackelberg regret, i.e., $\reg_{\stackel}(T) \leq \reg_{\strat}(T)$.
\end{proposition}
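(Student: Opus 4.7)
The plan is to reduce the inequality to a comparison between the two benchmark terms, since the algorithm-side sum $\sum_t r_t(\pi_t(\vx_t'))$ appears identically in both definitions of regret. Thus it suffices to show that $\sum_t r_t(\pi^*(\vx_t)) \geq \sum_t r_t(\Tilde{\pi}^*(\tilde{\vx}_t))$ (in expectation over the reward noise $\{\epsilon_t\}$).

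The key observation is structural: by~\eqref{eq:r1}, the reward depends on the \emph{unmodified} context $\vx_t$, so no matter what policy is deployed and no matter how the agent strategically responds, the per-round reward takes one of only two values---namely $r_0$ if action $0$ is selected, or $\langle \pv{\vtheta}{1}, \vx_t \rangle + \epsilon_t$ if action $1$ is selected. Taking expectation over $\epsilon_t$, the maximum expected reward any policy can achieve on round $t$ is therefore $\max\{r_0,\, \langle \pv{\vtheta}{1}, \vx_t \rangle\}$.

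Next I would verify that $\pi^*$ attains this upper bound on every round, while $\Tilde{\pi}^*$ in general does not. Since $\pi^*$ has access to the true context $\vx_t$ and selects action $1$ exactly when $\langle \pv{\vtheta}{1}, \vx_t \rangle \geq r_0$, its expected per-round reward is exactly $\max\{r_0,\, \langle \pv{\vtheta}{1}, \vx_t \rangle\}$. In contrast, $\Tilde{\pi}^*$ acts on the strategically modified context $\tilde{\vx}_t$, so the action $\Tilde{\pi}^*(\tilde{\vx}_t)$ need not agree with the expected-reward-maximizing action on the true context. Hence the expected per-round reward of the Stackelberg benchmark is $\Tilde{\pi}^*(\tilde{\vx}_t)\, \langle \pv{\vtheta}{1}, \vx_t \rangle + (1-\Tilde{\pi}^*(\tilde{\vx}_t))\, r_0 \;\leq\; \max\{r_0,\, \langle \pv{\vtheta}{1}, \vx_t \rangle\}$. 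Summing over $t$ and subtracting the common algorithm-side term yields $\reg_{\stackel}(T) \leq \reg_{\strat}(T)$.

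There is no significant technical obstacle; the only subtle point is the treatment of the noise $\epsilon_t$. Pathwise, a particularly negative realization of $\epsilon_t$ could make the realized reward of $\pi^*$ dip below that of $\Tilde{\pi}^*$ on individual rounds when the two benchmarks disagree, so the inequality should be interpreted in expectation over $\{\epsilon_t\}$ (equivalently, in the standard pseudo-regret sense used in the contextual bandit literature). Under this conventional reading the proof is a one-line calculation built on the two observations above.
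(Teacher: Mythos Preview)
Your proposal is correct and takes essentially the same approach as the paper: cancel the shared algorithm-side sum and argue that the strategic benchmark dominates the Stackelberg benchmark because the reward $r_t(\cdot)$ depends only on the action and the \emph{unmodified} context $\vx_t$. Your treatment is slightly more careful in making the comparison in expectation over the noise $\epsilon_t$, a subtlety the paper's one-line argument glosses over.
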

\begin{proof}
The proof follows from the corresponding regret definitions and the fact that the principal's reward is determined by the original (unmodified) agent contexts.
\begin{align*}
  R_{\stackel}(T)  &:= \sum_{t \in [T]} r_t(\Tilde{\pi}^*(\tilde{\vx}_t)) - \sum_{t \in [T]} r_t(\pi_t(\vx_t'))\\
  &= \sum_{t \in [T]} r_t(\Tilde{\pi}^*(\tilde{\vx}_t)) - \sum_{t \in [T]} r_t(\pi^*(\vx_t)) + \sum_{t \in [T]} r_t(\pi^*(\vx_t)) - \sum_{t \in [T]} r_t(\pi_t(\vx_t'))\\
  &\leq 0 + R_{\strat}(T)
\end{align*}
where the last line follows from the fact that the principal’s reward from the optimal policy when the agent strategizes is at most their optimal reward when agents do not strategize. 
\end{proof}
Because of~\Cref{prop:stronger}, we focus on strategic regret, and use the shorthand $\reg_{\strat}(T) = \reg(T)$ for the remainder of the paper. 
Strategic regret is a strong notion of optimality, as we are comparing the principal's performance with that of the optimal policy for an easier setting, in which agents do not strategize. 
Moreover, the apple tasting feedback introduces additional challenges which require new algorithmic ideas to solve, since the principal needs to assign actions to both (1) learn about $\pv{\vtheta}{1}$ (which can only be done when action $1$ is assigned) and (2) maximize rewards in order to achieve sublinear strategic regret. 
See~\Cref{fig:summary} for a summary of the setting we consider. 

We conclude this section by pointing out that our results also apply to the more challenging setting of \emph{bandit feedback}, in which $r_t(1)$ is defined as in~\Cref{eq:r1}, $r_t(0) := \langle \pv{\vtheta}{0}, \vx_t \rangle + \epsilon_t$ and only $r_t(a_t)$ is observed at each time-step. 
We choose to highlight our results for apple tasting feedback since this is the type of feedback received by the principal in our motivating examples.
Finally, we note that $\tcalO(\cdot)$ hides polylogarithmic factors, and that all proofs can be found in the Appendix.

\section{Strategic classification with apple tasting feedback}\label{sec:stochastic}
In this section, we present our main results: provable guarantees for online classification of strategic agents under apple tasting feedback.
Our results rely on the following assumption.
\begin{assumption}[Bounded density ratio]\label{ass:bdr}
Let $f_{U^d}: \cX \rightarrow \mathbb{R}_{\geq 0}$ denote the density function of the uniform distribution over the $d$-dimensional unit sphere.
We assume that agent contexts $\{\vx_t\}_{t \in [T]}$ are drawn i.i.d. from a distribution over the $d$-dimensional unit sphere with density function $f: \cX \rightarrow \mathbb{R}_{\geq 0}$ such that $\frac{f(\vx)}{f_{U\cpedit{^d}}(\vx)} \geq c_0 > 0$, $\forall \vx \in \cX$.\footnote{Our restriction to the \emph{unit} sphere is without loss of generality. 
All of our results and analysis extend readily to the setting where contexts are drawn from a distribution over the $d$-dimensional sphere with radius $R > 0$.}
\end{assumption}
\Cref{ass:bdr} is a condition on the \emph{initial} agent contexts $\{\vx_t\}_{t \in [T]}$, \emph{before} they are strategically modified. 
Indeed, one would expect the distribution over \emph{modified} agent contexts to be highly discontinuous in a way that depends on the sequence of policies deployed by the principal.
Furthermore, none of our algorithms need to know the value of $c_0$.
As we will see in the sequel, this assumption allows us to handle apple tasting feedback by \emph{relying on the inherent diversity in the agent population for exploration}; a growing area of interest in the online learning literature (see references in~\Cref{sec:related}).
Moreover, such assumptions often hold in practice. 
For example, in the related problem of (non-strategic) contextual bandits (we will later show how our results extend to the strategic version of this problem), \citet{bietti2021contextual} find that a greedy algorithm with no explicit exploration achieved the second-best empirical performance across a large number of datasets when compared to many popular contextual bandit algorithms.
In our settings of interest (e.g. lending, hiring), such an assumption is reasonable if there is sufficient diversity in the applicant pool.
In~\Cref{sec:adversarial} we show how to remove this assumption, albeit at the cost of worse regret rates and exponential computational complexity. 

At a high level, our algorithm (formally stated in~\Cref{alg:sa-ols}) relies on three key ingredients to achieve sublinear strategic regret:\looseness-1
\begin{enumerate}
    \item A running estimate of $\pv{\vtheta}{1}$ is used to compute a linear policy, which separates agents who receive action $1$ from those who receive action $0$.
    Before deploying, we shift the decision boundary by the effort budget $\delta$ to account for the agents strategizing.\looseness-1
    \item We maintain an estimate of $\pv{\vtheta}{1}$ (denoted by $\pv{\widehat \vtheta}{1}$) and only updating it when $a_t = 1$ and we can ensure that $\vx_t' = \vx_t$.
    \item We assign actions ``greedily'' (i.e. using no explicit exploration) w.r.t. the shifted linear policy.
\end{enumerate}
\begin{algorithm}[t]
        \SetAlgoNoLine
        \SetAlgoNoEnd
        Assign action $1$ for the first $d$ rounds.\\
        Set $\cD_{d+1} = \{(\vx_s, \pv{r}{1}_s)\}_{s=1}^d$.\\
        \For{$t = d+1, \ldots, T$}
        {   
            Estimate $\pv{\vtheta}{1}$ as $\pv{\widehat \vtheta}{1}_t$ using OLS and data $\cD_t$.\\
            Assign action $a_t = 1$ if $\langle \pv{\widehat \vtheta}{1}_t, \vx_t' \rangle \geq \delta \cdot \|\pv{\widehat \vtheta}{1}_t\|_2 + r_0$.\\
            \uIf{$\langle \pv{\widehat \vtheta}{1}_t, \vx_t' \rangle > \delta \| \pv{\widehat \vtheta}{1}_t \|_2 + r_0$}{
                Conclude that $\vx_t' = \vx_t$.\\
                $\cD_{t+1} = \cD_t \cup \{(\vx_t, \pv{r}{1}_t)\}$\\
              }
              \Else{
                $\cD_{t+1} = \cD_t$\\
              }
        }
        \caption{Strategy-Aware OLS with Apple Tasting Feedback (\texttt{SA-OLS})}
        \label{alg:sa-ols}
\end{algorithm}
\xhdr{Shifted linear policy}
If agents were \emph{not} strategic, assigning action $1$ if $\langle \pv{\widehat \vtheta}{1}_t, \vx_t \rangle \geq r_0$ and action $0$ otherwise would be a reasonable strategy to deploy, given that $\pv{\widehat \vtheta}{1}_t$ is our ``best estimate'' of $\pv{\vtheta}{1}$ so far.
Recall that the strategically modified context $\vx_t'$ is s.t., $\|\vx_t' - \vx_t\|\leq \delta$.
Hence, in~\Cref{alg:sa-ols}, we shift the linear policy by $\delta \| \pv{\widehat \vtheta}{1} \|_2$ to account for strategically modified contexts.
Now, action $1$ is only assigned if $\langle \pv{\widehat \vtheta}{1}_t, \vx_t \rangle \geq \delta \| \pv{\widehat \vtheta}{1} \|_2 + r_0$.
This serves two purposes: 
(1) It makes it so that any agent with unmodified context $\vx$ such that $\langle \pv{\widehat \vtheta}{1}_t, \vx \rangle < r_0$ cannot receive action $1$, no matter how they strategize.
(2) It forces some agents with contexts in the band $r_0 \leq \langle \pv{\widehat \vtheta}{1}_t, \vx \rangle < \delta \| \pv{\widehat \vtheta}{1} \|_2 + r_0$ to strategize in order to receive action $1$. 
\xhdr{Estimating $\pv{\vtheta}{1}$}
After playing action $1$ for the first $d$ rounds,~\Cref{alg:sa-ols} forms an initial estimate of $\pv{\vtheta}{1}$ via ordinary least squares (OLS).
Note that since the first $d$ agents will receive action $1$ regardless of their context, they have no incentive to modify and thus $\vx_t' = \vx_t$ for $t \leq d$.
In future rounds, the algorithm's estimate of $\pv{\vtheta}{1}$ is only updated whenever $\vx_t'$ lies \emph{strictly} on the positive side of the linear decision boundary.
We call these contexts \emph{clean}, and can infer that $\vx_t' = \vx_t$ due to the lazy tiebreaking assumption in \Cref{def:br} (i.e. agents will not strategize more than is necessary to receive the positive classification).
\begin{condition}[Sufficient condition for $\vx' = \vx$]\label{cond:clean}
    Given a shifted linear policy parameterized by $\pv{\vbeta}{1} \in \mathbb{R}^d$, we say that a context $\vx'$ is \emph{clean} if $\langle \pv{\vbeta}{1}, \vx' \rangle > \delta \| \pv{\vbeta}{1} \|_2 + r_0$.
\end{condition}
\xhdr{Greedy action assignment}
By assigning actions greedily according to the current (shifted) linear policy, we are relying on the diversity in the agent population for implicit exploration (i.e., to collect more datapoints to update our estimate of $\pv{\vtheta}{1}$).
As we will show, this implicit exploration is sufficient to achieve $\tcalO(\sqrt{T})$ strategic regret under~\Cref{ass:bdr}, albeit at the cost of an exponentially-large (in $d$) constant which depends on the agents' ability to manipulate ($\delta$).

We are now ready to present our main result: strategic regret guarantees for~\Cref{alg:sa-ols} under apple tasting feedback.
\begin{theorem}[Informal; detailed version in~\Cref{thm:stochastic-main-app}]\label{thm:stochastic-main}
With probability $1 - \gamma$, \Cref{alg:sa-ols} achieves the following performance guarantee:
\begin{equation*}
\begin{aligned}
    \reg(T) &\leq \tcalO \left( \frac{1}{c_0 \cdot c_1(d, \delta) \cdot c_2(d, \delta)} \sqrt{d \sigma^2 T \log(4d T/ \gamma)} \right)
\end{aligned}
\end{equation*}
where $c_0$ is a lower bound on the density ratio as defined in~\Cref{ass:bdr}, $c\cpedit{_1}(d, \delta) := \mathbb{P}_{\vx \sim U^d}(\vx[1] \geq \delta) \geq \Theta \left(\frac{(1 - \delta)^{d/2}}{d^2} \right)$ for sufficiently large $d$ and $c_2(d, \delta) := \mathbb{E}_{\vx \sim U^d}[\vx[2]^2 | \vx[1] \geq \delta] \geq \left(\frac{3}{4} - \frac{1}{2} \delta - \frac{1}{4} \delta^2 \right)^3$, where $\vx[i]$ denotes the $i$-th coordinate of a vector $\vx$.\footnote{While we assume that $\delta$ is known to the principal, ~\Cref{alg:sa-ols} is fairly robust to overestimates of $\delta$, in the sense that (1) it will still produce a consistent estimate for $\pv{\vtheta}{1}$ (albeit at a rate which depends on the over-estimate instead of the actual value of $\delta$) and (2) it will incur a constant penalty in regret which is proportional to the amount of over-estimation.} 
\end{theorem}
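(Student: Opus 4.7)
The plan is to reduce the strategic regret to a sum of per-round parameter estimation errors, then control these errors using a minimum-eigenvalue lower bound on the Gram matrix of contexts that the algorithm has accepted as clean.

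\textbf{Step 1 (regret decomposition).} Because the decision rule is shifted by $\delta\|\pv{\widehat{\vtheta}}{1}_t\|_2$ and agents use lazy tiebreaking, an agent receives action $1$ if and only if their \emph{unmodified} context satisfies $\langle \pv{\widehat{\vtheta}}{1}_t, \vx_t\rangle \geq r_0$. The optimal benchmark $\pi^*$ plays $1$ iff $\langle \pv{\vtheta}{1}, \vx_t\rangle \geq r_0$. Whenever the two disagree, the instantaneous regret in expectation over the noise is at most $|\langle \pv{\vtheta}{1}, \vx_t\rangle - r_0| \leq |\langle \pv{\vtheta}{1} - \pv{\widehat{\vtheta}}{1}_t, \vx_t\rangle| \leq \|\pv{\vtheta}{1} - \pv{\widehat{\vtheta}}{1}_t\|_2$ by Cauchy--Schwarz and $\|\vx_t\|_2 = 1$. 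Summing over $t$ and absorbing the first $d$ rounds, plus a standard sub-Gaussian tail bound on the $\epsilon_t$'s, bounds the regret by $\sum_{t=d+1}^T \|\pv{\widehat{\vtheta}}{1}_t - \pv{\vtheta}{1}\|_2$ plus lower-order terms.

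\textbf{Step 2 (OLS error via minimum eigenvalue).} Since $\pv{\widehat{\vtheta}}{1}_t$ is the OLS estimator on $\cD_t$, the self-normalized sub-Gaussian bound yields $\|\pv{\widehat{\vtheta}}{1}_t - \pv{\vtheta}{1}\|_2 \leq \tcalO\bigl(\sigma\sqrt{d\log(T/\gamma)/\lambda_{\min}(G_t)}\bigr)$, where $G_t := \sum_{\vx \in \cD_t}\vx\vx^\top$. The problem therefore reduces to showing that $\lambda_{\min}(G_t)$ grows linearly in $t$ at rate $\Omega(c_0 \cdot c_1(d,\delta) \cdot c_2(d,\delta))$.

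\textbf{Step 3 (minimum eigenvalue lower bound).} Fix a unit vector $\vv$, and at round $s$ write $\vu_s = \pv{\widehat{\vtheta}}{1}_s/\|\pv{\widehat{\vtheta}}{1}_s\|_2$ and $\vv = \alpha\vu_s + \vw$ with $\vw \perp \vu_s$. By Condition~\ref{cond:clean}, a context $\vx_s$ is clean only if it lies in the spherical cap $\langle \vu_s, \vx_s\rangle > \delta + r_0/\|\pv{\widehat{\vtheta}}{1}_s\|_2$. Under \Cref{ass:bdr} this event has probability at least $c_0 \cdot c_1(d,\delta)$, and by rotational symmetry of the uniform law on the sphere, $\E[\langle \vw, \vx_s\rangle^2 \mid \vx_s \text{ clean}] \geq c_2(d,\delta)$. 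Combining, $\E[\langle \vv,\vx_s\rangle^2 \,\1\{\vx_s \text{ clean}\}] \geq \Omega(c_0 c_1 c_2)$. A matrix Bernstein or $\varepsilon$-net argument then upgrades this in-expectation bound to a high-probability bound $\lambda_{\min}(G_t) \geq \Omega(c_0 c_1 c_2 \cdot t)$ uniformly over $t \geq \tcalO(d)$. Sharpening the geometric calculation of the spherical-cap probability and the conditional second moment is what produces the explicit $c_1(d,\delta) \geq \Theta((1-\delta)^{d/2}/d^2)$ and $c_2(d,\delta) \geq (3/4 - \delta/2 - \delta^2/4)^3$.

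\textbf{Step 4 (assemble).} Plugging the eigenvalue bound into Step~2 gives $\|\pv{\widehat{\vtheta}}{1}_t - \pv{\vtheta}{1}\|_2 \leq \tcalO\bigl(\sigma\sqrt{d\log(dT/\gamma)/(c_0 c_1 c_2 t)}\bigr)$, and summing yields the claimed $\tcalO\bigl(\tfrac{1}{c_0 c_1 c_2}\sqrt{d\sigma^2 T\log(dT/\gamma)}\bigr)$ bound after carrying through the concentration constants.

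\textbf{Main obstacle.} The key difficulty is Step~3: the clean set and the estimator $\pv{\widehat{\vtheta}}{1}_t$ are adaptively coupled, so the direction $\vu_s$ in which ``cleanness'' is measured depends on past data. Establishing the per-round spectral lower bound while controlling this coupling, and obtaining the sharp exponential scaling in $d$ and $\delta$, is the technical heart of the argument; everything else is a fairly standard combination of OLS concentration, Cauchy--Schwarz, and a union bound.
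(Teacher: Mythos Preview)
Your proposal is correct and follows essentially the same path as the paper's proof: the paper also reduces regret to $\sum_t \|\pv{\widehat\vtheta}{1}_t - \pv{\vtheta}{1}\|_2$ via a case analysis on $a_t$ versus $a_t^*$, bounds the OLS error as $\|\sum_s \vx_s\epsilon_s \1\{\text{clean}_s\}\|_2 / \lambda_{\min}(G_t)$, and lower-bounds $\lambda_{\min}(G_t)$ via the same rotational-invariance calculation you sketch combined with a matrix martingale inequality. The only cosmetic differences are that the paper bounds the numerator and denominator separately rather than invoking a self-normalized inequality, and it uses matrix Azuma---the martingale concentration tool actually required by the adaptive coupling you correctly flag---rather than matrix Bernstein.
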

\begin{proof}[Proof sketch]
Our analysis begins by using properties of the strategic agents and shifted linear decision boundary to upper-bound the per-round strategic regret for rounds $t > d$ by a term proportional to $\|\pv{\widehat \vtheta}{1}_t - \pv{\vtheta}{1} \|_2$, i.e., our instantaneous estimation error for $\pv{\vtheta}{1}$.
Next we show that $$\|\pv{\widehat \vtheta}{1}_t - \pv{\vtheta}{1} \|_2 \leq \frac{\left \| \sum_{s=1}^t \vx_s \epsilon_s \mathbbm{1}\{\pv{\cI}{1}_s\} \right \|_2}{\lambda_{min}(\sum_{s=1}^t \vx_s \vx_s^\top \mathbbm{1}\{\pv{\cI}{1}_s\})}$$
where $\lambda_{min}(M)$ is the minimum eigenvalue of (symmetric) matrix $M$, and $\pv{\cI}{1}_s = \{\langle \pv{\widehat \vtheta}{1}_s, \vx_s \rangle \geq \delta \| \pv{\widehat \vtheta}{1}_s\|_2 + r_0\}$ is the event that~\Cref{alg:sa-ols} assigns action $a_s = 1$ and can verify that $\vx_s' = \vx_s$.
We upper-bound the numerator using a variant of Azuma's inequality for martingales with subgaussian tails.
Next, we use properties of Hermitian matrices to show that $\lambda_{min}(\sum_{s=1}^t \vx_s \vx_s^\top \mathbbm{1}\{\pv{\cI}{1}_s\})$ is lower-bounded by two terms: one which may be bounded w.h.p. by using the extension of Azuma's inequality for matrices, and one of the form $\sum_{s=1}^t \lambda_{min}(\mathbb{E}_{s-1}[\vx_s \vx_s^\top \mathbbm{1}\{\pv{\cI}{1}_s\}])$, where $\mathbb{E}_{s-1}$ denotes the expected value conditioned on the filtration up to time $s$.
Note that up until this point, we have only used the fact that contexts are drawn i.i.d. from a \emph{bounded} distribution.

Using~\Cref{ass:bdr} on the bounded density ratio, we can lower bound $\lambda_{min}(\mathbb{E}_{s-1}[\vx_s \vx_s^\top \mathbbm{1}\{\pv{\cI}{1}_s\}])$ by $\lambda_{min}(\mathbb{E}_{U^d, s-1}[\vx_s \vx_s^\top \mathbbm{1}\{\pv{\cI}{1}_s\}])$, \emph{where the expectation is taken with respect to the uniform distribution over the $d$-dimensional ball}.
We then use properties of the uniform distribution to show that $\lambda_{min}(\mathbb{E}_{U^d, s-1}[\vx_s \vx_s^\top \mathbbm{1}\{\pv{\cI}{1}_s\}]) \geq \mathcal{O}(c_0 \cdot c(d, \delta))$.
Putting everything together, we get that $\|\pv{\widehat \vtheta}{1}_t - \pv{\vtheta}{1} \|_2 \leq (c_0 \cdot c(d, \delta) \cdot \sqrt{t})^{-1}$ with high probability.
Via a union bound and the fact that $\sum_{t \in [T]} \frac{1}{\sqrt{t}} \leq 2T$, we get that $\reg(T) \leq \tcalO(\frac{1}{c_0 \cdot c(d, \delta)}\sqrt{T})$.
Finally, we use tools from high-dimensional geometry to lower bound the volume of a spherical cap and we show that for sufficiently large $d$, $c_1(d, \delta) \geq \Theta \left(\frac{(1 - \delta)^{d/2}}{d^2} \right).$
\end{proof}\looseness-1

\subsection{High-dimensional contexts}\label{sec:high-d}
\begin{algorithm}[t]
        \SetAlgoNoLine
        \SetAlgoNoEnd
        \SetKwInOut{Input}{Input}
        \Input{Time horizon $T$, failure probability $\gamma$}
        Set $T_0$ according to~\Cref{thm:etc-reg}\\
        Assign action $1$ for the first $T_0$ rounds\\
        Estimate $\pv{\vtheta}{1}$ as $\pv{\hat \vtheta}{1}_{T_0}$ via OLS\\
        \For{$t = T_0+1, \ldots, T$}
        {   
            Assign action $a_t = 1$ if $\langle \pv{\hat \vtheta}{1}_{T_0}, \vx_t \rangle \geq \delta \cdot \|\pv{\hat \vtheta}{1}_{T_0}\|_2$ and action $a_t = 0$ otherwise\\
        }
        \caption{Explore-Then-Commit}
        \label{alg:etc-apple}
\end{algorithm}
While we typically think of the number of agents $T$ as growing and the context dimension $d$ as constant in our applications of interest, there may be situations in which $T$ is either unknown or small.
Under such settings, the $\nicefrac{1}{c(d, \delta)}$ dependence in the regret bound (where $c(d, \delta) = c_1(d, \delta) \cdot c_2(d, \delta)$) may become problematic if $\delta$ is close to $1$.
This begs the question: ``Why restrict the OLS estimator in~\Cref{alg:sa-ols} to use only clean contexts (as defined in~\Cref{cond:clean})?''
Perhaps unsurprisingly, we show in~\Cref{app:stochastic} that the estimate $\pv{\widehat \vtheta}{1}$ given by OLS will be inconsistent if even a constant fraction of agents strategically modify their contexts.

Given the above, it seems reasonable to restrict ourselves to learning procedures which only use data from agents for which the principal can be sure that $\vx' = \vx$.
Under such a restriction, it is natural to ask whether there exists some sequence of linear polices which maximizes the number of points of the form $(\vx_t', r_t(1))$ for which the principal can be sure that $\vx_t' = \vx_t$.
Again, the answer is no:
\begin{proposition}\label{cor:c1}
    For any sequence of linear policies $\{\vbeta_t\}_t$, the expected number of clean points is: 
    \begin{equation*}
        \mathbb{E}_{\vx_1, \ldots, \vx_T \sim U^d} \left[\sum_{t \in [T]} \mathbbm{1}\{\langle \vx_t, \vbeta_t \rangle > \delta \|\vbeta_t\|_2\} \right] = c_1(d, \delta) \cdot T
    \end{equation*}
    when (initial) contexts are drawn uniformly from the $d$-dimensional unit sphere.
\end{proposition}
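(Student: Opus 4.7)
The plan is to use linearity of expectation together with the rotational invariance of the uniform distribution on the sphere. First I would write
\begin{equation*}
\mathbb{E}_{\vx_1, \ldots, \vx_T \sim U^d}\left[\sum_{t \in [T]} \mathbbm{1}\{\langle \vx_t, \vbeta_t \rangle > \delta \|\vbeta_t\|_2\}\right] = \sum_{t \in [T]} \mathbb{E}\bigl[\mathbbm{1}\{\langle \vx_t, \vbeta_t \rangle > \delta \|\vbeta_t\|_2\}\bigr],
\end{equation*}
and the goal is to argue that each summand equals $c_1(d,\delta)$.

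The subtlety is that $\vbeta_t$ is in general not a fixed vector, but a (possibly randomized) function of the history $\vx_1,\ldots,\vx_{t-1}$ (and the principal's internal randomness), since the policies are chosen by an online algorithm. To handle this, I would condition on the filtration $\mathcal{F}_{t-1}$ generated by everything observed before round $t$. Since $\vbeta_t$ is $\mathcal{F}_{t-1}$-measurable and $\vx_t$ is drawn i.i.d.\ and independent of $\mathcal{F}_{t-1}$, we have
\begin{equation*}
\mathbb{E}\bigl[\mathbbm{1}\{\langle \vx_t, \vbeta_t \rangle > \delta \|\vbeta_t\|_2\} \mid \mathcal{F}_{t-1}\bigr] = \mathbb{P}_{\vx \sim U^d}\bigl(\langle \vx, \vb \rangle > \delta \|\vb\|_2\bigr)\Big|_{\vb = \vbeta_t}.
\end{equation*}

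Now I would invoke rotational invariance: for any nonzero $\vb \in \mathbb{R}^d$, let $Q$ be any orthogonal matrix with $Q \vb/\|\vb\|_2 = \ve_1$. Then $Q\vx \sim U^d$ whenever $\vx \sim U^d$, and $\langle \vx, \vb\rangle/\|\vb\|_2 = \langle Q\vx, \ve_1\rangle = (Q\vx)[1]$, so
\begin{equation*}
\mathbb{P}_{\vx \sim U^d}\bigl(\langle \vx, \vb \rangle > \delta \|\vb\|_2\bigr) = \mathbb{P}_{\vy \sim U^d}(\vy[1] > \delta) = \mathbb{P}_{\vy \sim U^d}(\vy[1] \geq \delta) = c_1(d,\delta),
\end{equation*}
where the middle equality uses that $U^d$ has no atoms (so $\mathbb{P}(\vy[1] = \delta) = 0$). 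The edge case $\vbeta_t = 0$ can be handled separately by convention (or by noting it contributes the trivially correct value since the inequality $0 > 0$ is false and one can absorb this into the measure-zero boundary). Substituting back and applying the tower property yields the claim.

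There is no real obstacle here: the only thing worth emphasizing in the write-up is that the statement holds \emph{regardless of how the $\vbeta_t$ are chosen}, i.e., even for fully history-adaptive algorithms. This is a direct consequence of the spherical symmetry of $U^d$, and it is exactly what makes clean data fundamentally scarce and motivates the dependence on $c_1(d,\delta)$ (which is exponentially small in $d$ for $\delta$ bounded away from $0$) appearing in Theorem~\ref{thm:stochastic-main}.
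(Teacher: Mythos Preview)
Your proposal is correct and follows essentially the same approach as the paper: linearity of expectation followed by rotational invariance of $U^d$ via an orthogonal map sending $\vbeta_t/\|\vbeta_t\|_2$ to $\ve_1$. You are in fact slightly more careful than the paper's own argument, explicitly handling history-adaptive $\vbeta_t$ via conditioning on $\mathcal{F}_{t-1}$, the strict-vs-nonstrict inequality (via atomlessness of $U^d$), and the degenerate case $\vbeta_t = 0$.
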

The proof follows from the rotational invariance of the uniform distribution over the unit sphere.
Intuitively,~\Cref{cor:c1} implies that any algorithm which wishes to learn $\pv{\vtheta}{1}$ using clean samples will only have $c_1(d, \delta) \cdot T$ datapoints in expectation.
Observe that this dependence on $c_1(d, \delta)$ arises as a direct result of the agents' ability to strategize.
We remark that a similar constant often appears in the regret analysis of BIC bandit algorithms (see~\Cref{sec:related}).
Much like our work, \cite{mansour2015bayesian} find that their regret rates depend on a constant which may be arbitrarily large, depending on how hard it is to persuade agents to take the principal's desired action in their setting.
The authors conjecture that this dependence is an inevitable ``price of incentive-compatibility''.
While our results do not rule out better strategic regret rates in $d$ for more complicated algorithms (e.g., those which deploy non-linear policies), it is often unclear how strategic agents would behave in such settings, both in theory (\Cref{def:br} would require agents to solve a non-convex optimization with potentially no closed-form solution) and in practice, making the analysis of such nonlinear policies difficult in strategic settings.\looseness-1

We conclude this section by showing that polynomial dependence on $d$ is possible, at the cost of $\tcalO(T^{2/3})$ strategic regret.
Specifically, we provide an algorithm (\Cref{alg:unknown_T}) which obtains the following regret guarantee whenever $T$ is small or unknown, which uses~\Cref{alg:sa-ols} and a variant of the explore-then-commit algorithm (\Cref{alg:etc-apple}) as subroutines:
\begin{theorem}[Informal; details in~\Cref{thm:best-both}]\label{thm:best-both-main}
    \Cref{alg:unknown_T} incurs expected strategic regret
    \begin{equation*}
        \mathbb{E}[\reg(T)] = \tcalO \left( \min \left\{\frac{d^{5/2}}{(1 - \delta)^{d/2}} \cdot \sqrt{T}, d \cdot T^{2/3} \right\} \right),
    \end{equation*}
    where the expectation is taken with respect to the sequence of contexts $\{\vx_t\}_{t \in [T]}$ and random noise $\{\epsilon_t\}_{t \in [T]}$. 
\end{theorem}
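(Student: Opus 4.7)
The plan is to analyze \Cref{alg:unknown_T} as a meta-algorithm that schedules fresh runs of \Cref{alg:sa-ols} and \Cref{alg:etc-apple} over doubling epochs so as to adapt to an unknown horizon $T$. Write $B_{\ols}(n) = \tcalO(d^{5/2}(1-\delta)^{-d/2}\sqrt{n})$ for the bound of \Cref{thm:stochastic-main} and $B_{\etc}(n) = \tcalO(d\, n^{2/3})$ for the ETC guarantee preceding this theorem. Setting $B_{\ols}(T^\star) = B_{\etc}(T^\star)$ yields a crossover horizon $T^\star = \Theta\bigl(d^{9}/(1-\delta)^{3d}\bigr)$; the targeted quantity $\min\{B_{\ols}(T), B_{\etc}(T)\}$ equals $B_{\etc}(T)$ for $T \leq T^\star$ and $B_{\ols}(T)$ otherwise.

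First I would specify the schedule: partition time into epochs $\{T_k\}_{k \geq 1}$ with $T_k = 2^k$, and in epoch $k$ run (from scratch) whichever subroutine has the smaller predicted bound at horizon $T_k$, i.e.\ \Cref{alg:etc-apple} while $T_k \leq T^\star$ and \Cref{alg:sa-ols} thereafter. Because agents in each epoch are an i.i.d.\ subsequence satisfying \Cref{ass:bdr}, \Cref{thm:stochastic-main} applies verbatim to each SA-OLS epoch and the analogous ETC guarantee applies verbatim to each ETC epoch, giving per-epoch expected regret at most $\min\{B_{\ols}(T_k), B_{\etc}(T_k)\}$.

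Next I would sum per-epoch regrets using geometric-series bounds, splitting on whether $T \leq T^\star$. If $T \leq T^\star$, every epoch uses ETC and $\sum_{k \leq \lceil \log_2 T \rceil} d\, T_k^{2/3} = \tcalO(d\, T^{2/3})$. If $T > T^\star$, I split at $k^\star = \lfloor \log_2 T^\star \rfloor$: the early contribution is $\sum_{k \leq k^\star} B_{\etc}(T_k) = \tcalO(B_{\etc}(T^\star)) = \tcalO(B_{\ols}(T^\star)) \leq \tcalO(B_{\ols}(T))$ using monotonicity of $B_{\ols}$ and its equality with $B_{\etc}$ at $T^\star$, and the late contribution is $\sum_{k > k^\star} B_{\ols}(T_k) = \tcalO(B_{\ols}(T))$ by another geometric sum. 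In both cases this yields $\mathbb{E}[\reg(T)] = \tcalO(\min\{B_{\ols}(T), B_{\etc}(T)\})$, which is the claimed bound.

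The main obstacle is that \Cref{thm:stochastic-main} is stated as a high-probability bound whereas the theorem's target is in expectation. I would handle this inside each SA-OLS epoch by setting the failure parameter to $\gamma_k = 1/T_k$, so that the $\log(1/\gamma_k)$ factor is absorbed into $\tcalO(\cdot)$ and the failure-event regret contributes $\sum_k \gamma_k T_k = \tcalO(\log T)$, a lower-order term. A secondary subtlety worth flagging is that each restart of \Cref{alg:sa-ols} pays a fresh $d$-round initialization cost, but since there are only $\tcalO(\log T)$ epochs the cumulative initialization overhead is $\tcalO(d \log T)$, again absorbed by the dominant $\sqrt{T}$ and $T^{2/3}$ terms.
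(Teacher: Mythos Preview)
Your argument is correct and reaches the stated bound, but the schedule you analyze is not the one the paper's \Cref{alg:unknown_T} actually uses, so the route is genuinely different. In the paper, the doubling trick is applied \emph{only} to \Cref{alg:etc-apple}: one runs ETC in doubling epochs until the cumulative time first exceeds the crossover $\tau^* = g(d,\delta)$, and then runs \Cref{alg:sa-ols} \emph{once, continuously, without restarts} for all remaining rounds. The paper's proof then splits into two cases. If $T<\tau^*$, the standard doubling analysis of ETC gives $\mathbb{E}[\reg(T)] \le 4\,\mathbb{E}[\reg_{\etc}(T)]$. If $T\ge\tau^*$, the ETC prefix up to the actual switching time $t^*\le\tau^*$ is bounded (via Case~1) by $2\,\mathbb{E}[\reg_{\etc}(t^*)] \le 2\,\mathbb{E}[\reg_{\ols}(t^*)] \le 2\,\mathbb{E}[\reg_{\ols}(T)]$, and the SA-OLS suffix is bounded directly by $\mathbb{E}[\reg_{\ols}(T-t^*)]\le \mathbb{E}[\reg_{\ols}(T)]$, yielding $3\,\mathbb{E}[\reg_{\ols}(T)]$ total.

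Compared to your version, the paper exploits the fact that \Cref{alg:sa-ols} is anytime, so there is no need to restart it in doubling epochs and no geometric sum over $\sqrt{T_k}$ terms; this also avoids the $\tcalO(d\log T)$ cumulative initialization overhead you flagged. Your scheme, by contrast, treats both subroutines uniformly and would be the natural choice if \emph{neither} subroutine were anytime. Your conversion from high-probability to expectation via $\gamma_k = 1/T_k$ is fine; the paper instead uses failure probability $1/\tau_i^2$ in the ETC epochs, which makes the failure contribution $\calO(1)$ rather than $\calO(\log T)$, but this difference is immaterial inside $\tcalO(\cdot)$.
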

The algorithm proceeds by playing a ``strategy-aware'' variant of explore-then-commit (\Cref{alg:etc-apple}) with a doubling trick until the switching time $\tau^* = g(d, \delta)$ is reached. 
Note that $g(d, \delta)$ is a function of both $d$ and $\delta$, \emph{not} $c_0$.
If round $\tau^*$ is indeed reached, the algorithm switches over to~\Cref{alg:sa-ols} for the remaining rounds.

\xhdr{Extension to bandit feedback}
\Cref{alg:sa-ols} can be extended to handle bandit feedback by explicitly keeping track of an estimate $\pv{\widehat \vtheta}{0}$ of $\pv{\vtheta}{0}$ via OLS, assigning action $a_t = 1$ if and only if $\langle \pv{\widehat \vtheta}{1}_t - \pv{\widehat \vtheta}{0}_t, \vx_t' \rangle \geq \delta \cdot \|\pv{\widehat \vtheta}{1}_t - \pv{\widehat \vtheta}{0}_t\|_2$, and updating the OLS estimate of $\pv{\widehat \vtheta}{0}$ whenever $a_t = 0$ (since agents will not strategize to receive action $0$).
\Cref{alg:unknown_T} may be extended to bandit feedback by ``exploring'' for twice as long in~\Cref{alg:etc-apple}, in addition to using the above modifications.
In both cases, the strategic regret rates are within a constant factor of the rates obtained in~\Cref{thm:stochastic-main} and~\Cref{thm:best-both-main}.\looseness-1
\begin{algorithm}[t]
        \SetAlgoNoLine
        \SetAlgoNoEnd
        Compute switching time $\tau^* = g(d, \delta)$\\
        %
        %
        Let $\tau_0 = 1$\\
        \For{$i = 1, 2, 3, \ldots$}
        {   
            Let $\tau_i = 2 \cdot \tau_{i-1}$\\
            \uIf{$\sum_{j=1}^i \tau_j < \tau^*$}{
                Run~\Cref{alg:etc-apple} with time horizon $\tau_i$ and failure probability $1 / \tau_i^2$
              }
              \Else{
                Break and run~\Cref{alg:sa-ols} for the remainder of the rounds
              }
        }
        \caption{Strategy-aware online classification with unknown time horizon}
        \label{alg:unknown_T}
\end{algorithm}
\section{Beyond stochastic contexts}\label{sec:adversarial}

In this section, we allow the sequence of initial agent contexts to be chosen by an (oblivious) \emph{adversary}.
This requires new algorithmic ideas, as the regression-based algorithms of~\Cref{sec:stochastic} suffer \emph{linear} strategic regret under this adversarial setting.
Our algorithm (\Cref{alg:UPD-ONE}) is based on the popular EXP3 algorithm \cite{auer2002nonstochastic}.
At a high level,~\Cref{alg:UPD-ONE} maintains a probability distribution over ``experts'', i.e., a discretized grid $\cE$ over carefully-selected policies.
In particular, each grid point $\ve \in \cE \subseteq \mathbb{R}^d$ represents an ``estimate'' of $\pv{\vtheta}{1}$, and corresponds to a slope vector which parameterizes a (shifted) linear threshold policy, like the ones considered in~\Cref{sec:stochastic}.
We use $a_{t, \ve}$ to refer to the action played by the principal at time $t$, had they used the linear threshold policy parameterized by expert $\ve$.
At every time-step, (1) the adversary chooses an agent $\vx_t$, (2) a slope vector $\ve_t \in \cE$ is selected according to the current distribution, (3) the principal commits to assigning action $1$ if and only if $\langle \ve_t, \vx_t' \rangle \geq \delta \|\ve_t\|_2$, (4) the agent strategically modifies their context $\vx_t \rightarrow \vx_t'$, and (5) the principal assigns an action $a_t$ according to the policy and receives the associated reward $r_t(a_t)$ (under apple tasting feedback). \looseness-1

Algorithm EXP4, which maintains a distribution over experts and updates the loss of \emph{all} experts based on the current action taken, is not directly applicable in our setting as the strategic behavior of the agents prevents us from inferring the loss of each expert at every time-step \cite{auer2002exp3}. 
This is because if $\vx_t' \neq \vx_t$ under the thresholding policy associated with expert $\ve$), it is generally not possible to ``back out'' $\vx_t$ given $\vx_t'$, which prevents us from predicting the counterfactual context the agent would have modified to had the principal been using expert $\ve'$ instead.
As a result, we use a modification of the standard importance-weighted loss estimator to update the loss of \emph{only the policy played by the algorithm} (and therefore the distribution over policies). 
Our regret guarantees for~\Cref{alg:UPD-ONE} are as follows:\looseness-1
\begin{theorem}[Informal; detailed version in~\Cref{thm:upd-one-app}]\label{thm:upd-one}
    \Cref{alg:UPD-ONE} incurs expected strategic regret $\E[\reg(T)] = \widetilde{\mathcal{O}}(T^{(d+1)/(d+2)})$.
\end{theorem}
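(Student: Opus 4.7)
The plan is to follow an EXP3-over-a-net template: construct a discretized expert set $\cE$ of shifted linear policies, run EXP3 over $\cE$ as \Cref{alg:UPD-ONE} does, and balance the EXP3 regret (which grows with $|\cE|$) against the discretization gap between the best grid expert and the true optimal policy $\pi^*$. Concretely, I would fix an accuracy parameter $\eps > 0$ and take $\cE$ to be an $\eps$-net of the bounded set of candidate slope vectors, so that $|\cE| = \tcalO((1/\eps)^d)$; each $\ve \in \cE$ parameterizes a shifted linear thresholding policy exactly as in~\Cref{sec:stochastic}.

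First I would bound the EXP3 regret against the best expert in $\cE$. A quick check confirms that the importance-weighted estimator used in~\Cref{alg:UPD-ONE}---which updates only the expert actually played, since strategic modification prevents recovering counterfactual contexts for unplayed experts---remains unbiased conditional on the history and the drawn expert. Applying the standard EXP3 analysis with learning rate tuned to $|\cE|$ and $T$ should give
\begin{equation*}
\E\!\left[\sum_{t=1}^T r_t(a_{t,\ve^*}) \;-\; \sum_{t=1}^T r_t(a_t)\right] \;\leq\; \tcalO\!\left(\sqrt{T\,|\cE|}\right) \;=\; \tcalO\!\left(\sqrt{T\,\eps^{-d}}\right),
\end{equation*}
where $\ve^*$ denotes the best expert in the grid in hindsight.

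Next I would bound the discretization error, i.e.\ $\sum_t r_t(\pi^*(\vx_t)) - \sum_t r_t(a_{t,\ve^*})$. Choosing $\ve^*$ as the grid point closest to $\pv{\vtheta}{1}$, on each round the two policies either agree (zero contribution) or disagree; in the latter case $\|\ve^* - \pv{\vtheta}{1}\|_2 \leq \eps$ together with $\|\vx_t\|_2 \leq 1$ forces $|\langle \pv{\vtheta}{1}, \vx_t\rangle - r_0| = \mathcal{O}(\eps)$, because an agent can only sit in the disagreement region when its \emph{original} context lies within an $\mathcal{O}(\eps)$-band of $\pi^*$'s decision boundary. Summing over $t$ yields a discretization error of $\mathcal{O}(T\eps)$.

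Combining the two contributions gives $\E[\reg(T)] \leq \tcalO(T\eps + \sqrt{T\,\eps^{-d}})$, and the choice $\eps = T^{-1/(d+2)}$ balances the terms at $\tcalO(T^{(d+1)/(d+2)})$. The hardest part will be the discretization step in the adversarial, strategic setting: the adversary can pile agents arbitrarily close to the decision boundary, and each expert in the net induces its own best-response from each agent, so the naive ``disagreement region'' argument must be adapted to show that the per-round gap ultimately depends only on the distance of the \emph{original} context $\vx_t$ (not the strategically modified $\vx_t'$) to $\pi^*$'s decision boundary.
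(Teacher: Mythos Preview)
Your proposal is correct and follows the same decomposition as the paper: EXP3 regret over the $\eps$-net plus a strategic discretization error, balanced at $\eps \asymp T^{-1/(d+2)}$; your disagreement-band argument is equivalent to the paper's case-by-case analysis (splitting on $a_t^*=0,\,a_{t,\ve^*}=1$ versus $a_t^*=1,\,a_{t,\ve^*}=0$) showing the residual term is nonpositive precisely because the $\delta$-shift in the expert's threshold offsets the agent's budget, so the expert's action effectively depends only on $\vx_t$. The one technical point you elided is that rewards are subgaussian rather than bounded, so the EXP3 step requires the paper's device of shifting losses by $1+\lambda$ with $\lambda=\sigma\sqrt{2\log T}$ and conditioning on the high-probability event that all shifted losses lie in $[0,2\lambda]$; this costs only logarithmic factors and should be folded into your sketch.
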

We remark that~\Cref{alg:UPD-ONE} may be extended to handle settings in which agents are selected by an \emph{adaptive} adversary by using EXP3.P \cite{auer2002nonstochastic} in place of EXP3. 
\begin{proof}[Proof sketch] 
The analysis is broken down into two parts. 
In the first part, we bound the regret w.r.t. the best policy on the grid. 
In the second, we bound the error incurred for playing policies on the grid, rather than the continuous space of policies. 
We refer to this error as the \emph{Strategic Discretization Error} ($SDE(T)$). 
The analysis of the regret on the grid mostly follows similar steps to the analysis of EXP3 / EXP4. 
The important difference is that we shift the reward obtained by $a_{t,\mathbb{e}}$ by a factor of $1+\lambda$, where $\lambda$ is a (tunable) parameter of the algorithm. 
This shifting (which does not affect the regret, since all the losses are shifted by the same fixed amount) guarantees that the losses at each round are non-negative and bounded with high probability. 
Technically, this requires bounding the tails of the subgaussian of the noise parameters $\epsilon_t$.

We now shift our attention to bounding $SDE(T)$. 
The standard analysis of the discretization error in the non-strategic setting does not go through for our setting, since an agent may strategize very differently with respect to two policies which are ``close together'' in $\ell_2$ distance, depending on the agent's initial context.
%
%
Our analysis proceeds with a case-by-case basis.
Consider the best expert $\ve^*$ in the grid. 
If $a_{t,\ve^*} = \pi^*(\vx_t)$ (i.e., the action of the best expert matches that of the optimal policy), there is no discretization error in round $t$.
Otherwise, if $a_{t,{\ve}^*} \neq \pi^*(\vx_t)$, we show that the per-round $SDE$ is upper-bounded by a term which looks like twice the discretization upper-bound for the non-strategic setting, plus an additional term.
We show that this additional term must always be non-positive by considering two subcases ($a_{t,{\ve}^*} = 1$, $\pi^*(\vx_t) = 0$ and $a_{t,{\ve}^*} = 0$, $\pi^*(\vx_t) = 1$) and using properties about how agents strategize against the deployed algorithmic policies.
\end{proof}
\begin{algorithm}[t]
        \SetAlgoNoLine
        \SetAlgoNoEnd
        Create set of discretized policies $\ve \in \calE = [(1/\eps)^{d}]$, where $\eps = (d \sigma \log (T) / T)^{1/(d+2)}$. \\
        Set parameters $\eta = \sqrt{\frac{\log(|\calE|)}{T \lambda^2 |\calE|}}$, $\gamma = 2 \eta \lambda |\calE|$, and $\lambda = \sigma \sqrt{2\log T}$.\\
        Initialize probability distribution $p_t(\ve) = 1/|\calE|, \forall \ve \in \calE$.\\
        \For{$t \in [T]$}
        {
            Choose policy $\ve_t$ from probability distribution $q_t(\ve) = (1 - \gamma) \cdot p_t(\ve) + \frac{\gamma}{|\calE|}$. \\
            Observe $\vx_t'$.\\
            Play action $a_{t, \ve_t} = 1$ if $\langle \ve_t, \vx_t'\rangle \geq \delta \|\ve_t\|_2$. Otherwise play action $a_{t,\ve_t} = 0$.\\
            Observe reward $r_t(a_{t,\ve_t})$. \\
            Update loss estimator for each policy $\ve \in \calE$: $\hell_t (\ve) = {(1 + \lambda - r_t(a_{t,\ve_t})) \cdot \1 \{\ve = \ve_t \}} / {q_t(\ve)}$.\\
            Update probability distribution $\forall \ve \in \calE$: $p_{t+1}(\ve) \propto p_t(\ve) \cdot \exp \left( -\eta \hell_t(\ve) \right)$.
        }
        \caption{EXP3 with strategy-aware experts (EXP3-SAE)}
        \label{alg:UPD-ONE}
\end{algorithm}

\xhdr{Computational complexity}
While both \Cref{alg:sa-ols} and~\Cref{alg:unknown_T} have $\mathcal{O}(d^3)$ per-iteration computational complexity,~\Cref{alg:UPD-ONE} must maintain and update a probability distribution over a grid of size exponential in $d$ at every time-step, making it hard to use in practice if $d$ is large.
We view the design of computationally efficient algorithms for adversarially-chosen contexts as an important direction for future research.

\xhdr{Extension to bandit feedback}
\Cref{alg:UPD-ONE} may be extended to the bandit feedback setting by maintaining a grid over estimates of $\pv{\vtheta}{1} - \pv{\vtheta}{0}$ (instead of over $\pv{\vtheta}{1}$). 
No further changes are required.\looseness-1
\section{Conclusion}
We study the problem of classification under incentives with apple tasting feedback. 
Such one-sided feedback is often what is observed in real-world strategic settings including lending and hiring. 
Our main result is a ``greedy'' algorithm (\Cref{alg:sa-ols}) which achieves $\tcalO(\sqrt{T})$ strategic regret when the initial agent contexts are generated \emph{stochastically}.
%
%
The regret of~\Cref{alg:sa-ols} depends on a constant $c_1(d,\delta)$ which scales exponentially in the context dimension, which may be problematic in settings for which the number of agents is small or unknown.
To address this, we provide an algorithm (\Cref{alg:unknown_T}) which combines~\Cref{alg:sa-ols} with a strategy-aware version of the explore-then-commit algorithm using a doubling trick to achieve $\tcalO(\min\{\frac{\sqrt{dT}}{c_1(d,\delta)}, d \cdot T^{2/3} \})$ expected strategic regret whenever $T$ is unknown.
Finally, we relax the assumption of stochastic contexts and allow for contexts to be generated adversarially.
\Cref{alg:UPD-ONE} achieves $\tcalO(T^{\frac{d+1}{d+2}})$ expected strategic regret whenever agent contexts are generated adversarially by running EXP3 over a discretized grid of strategy-aware policies, but has exponential-in-$d$ per-round computational complexity.
%
%
All of our results also apply to the more general setting of bandit feedback, under slight modifications to the algorithms.
There are several directions for future work:\looseness-1

\xhdr{Unclean data} The regret of~\Cref{alg:sa-ols} depends on a constant which is exponentially large in $d$, due to the fact that it only learns using clean data (\Cref{cond:clean}).
While learning using unclean data will generally produce an inconsistent estimator, it would be interesting to see if the principal could leverage this data to remove the dependence on this constant.
Alternatively, lower bounds which show that using unclean data will not improve regret would also be interesting.

\xhdr{Efficient algorithms for adversarial contexts} Our algorithm for adversarially-chosen agent contexts suffers exponential-in-$d$ per-round computational complexity, which makes it unsuitable for use in settings with high-dimensional contexts.
Deriving polynomial-time algorithms with sublinear strategic regret for this setting is an exciting (but challenging) direction for future research.

\xhdr{More than two actions} Finally, it would be interesting to extend our algorithms for strategic learning under bandit feedback to the setting in which the principal has \emph{three or more} actions at their disposal.
While prior work \cite{harris2022strategyproof} implies an impossibility result for strategic regret minimization with three or more actions, other (relaxed) notions of optimality (e.g., sublinear \emph{Stackelberg} regret; recall~\Cref{def:stackel}) may still be possible.\looseness-1

\newpage 

\section*{Acknowledgements}
KH is supported in part by an NDSEG Fellowship. 
KH and ZSW are supported in part by the NSF FAI Award \#1939606. 
For part of this work, CP was supported by a FODSI postdoctoral fellowship from UC Berkeley. 
The authors would like to thank the anonymous NeurIPS reviewers for valuable feedback.\looseness-1 
\bibliographystyle{plainnat}
\bibliography{refs}

\newpage
\appendix
\section{Useful concentration inequalities}

\begin{theorem}[Matrix Azuma, \citet{tropp2012user}]\label{thm:ma}
    Consider a self-adjoint matrix martingale $\{Y_s : s = 1, \ldots, t\}$ in dimension $d$, and let $\{X_s\}_{s \in [t]}$ be the associated difference sequence satisfying $\mathbb{E}_{s-1} X_s = 0_{d \times d}$ and $X_s^2 \preceq A_s^2$ for some fixed sequence $\{A_s\}_{s \in [t]}$ of self-adjoint matrices. Then for all $\alpha > 0$,
    \begin{equation*}
        \mathbb{P}\left( \lambda_{max}(Y_t - \mathbb{E}Y_t) \geq \alpha \right) \leq d \cdot \exp(-\alpha^2 / 8 \sigma^2),
    \end{equation*}
    where $\sigma^2 := \left\| \sum_{s=1}^t A_s^2 \right\|_2$.
\end{theorem}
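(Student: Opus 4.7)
The plan is to follow Tropp's matrix concentration framework, which lifts the classical Chernoff/Azuma argument from scalars to matrices by combining the matrix Laplace transform method with Lieb's concavity theorem. Without loss of generality I would assume the martingale is centered, i.e.\ $\mathbb{E}Y_t = 0$; otherwise replace each $X_s$ by $X_s - \mathbb{E}_{s-1}X_s$, which preserves the martingale property and the semidefinite bound up to a constant. The starting point is then the matrix Markov inequality: for every $\theta > 0$,
\[ \mathbb{P}\bigl(\lambda_{\max}(Y_t) \geq \alpha\bigr) \;\leq\; e^{-\theta\alpha}\cdot \mathbb{E}\,\operatorname{tr}\exp(\theta Y_t), \]
so the problem reduces to controlling the matrix moment generating function $\mathbb{E}\,\operatorname{tr}\exp\!\bigl(\theta \sum_s X_s\bigr)$ and then optimizing $\theta$.

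The central technical tool is Lieb's concavity theorem, which states that $A \mapsto \operatorname{tr}\exp(H + \log A)$ is concave on the positive definite cone. Applying it iteratively through the tower property of conditional expectation, peeling off $X_t, X_{t-1}, \ldots, X_1$ from inside out, yields the ``subadditivity of the cumulant'' bound
\[ \mathbb{E}\,\operatorname{tr}\exp\!\left(\theta\sum_{s=1}^{t} X_s\right) \;\leq\; \operatorname{tr}\exp\!\left(\sum_{s=1}^{t} \log \mathbb{E}_{s-1}\!\left[\exp(\theta X_s)\right]\right). \]
Next I would establish a matrix Hoeffding-type MGF bound of the form $\log \mathbb{E}_{s-1}[\exp(\theta X_s)] \preceq c\,\theta^{2} A_s^{2}$ in the semidefinite order, for an absolute constant $c$. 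The standard route is a Rademacher symmetrization combined with the scalar Hoeffding lemma, transferred to the semidefinite order via operator monotonicity (or equivalently via the pointwise inequality $\exp(\theta x) \leq \cosh(\theta a) + (x/a)\sinh(\theta a)$ for $|x|\le a$, applied spectrally). Substituting this back, using the crude but essential inequality $\operatorname{tr}\exp(M) \leq d \cdot \exp(\lambda_{\max}(M))$, and optimizing $\theta \propto \alpha/\sigma^{2}$ in the Laplace bound delivers the target tail $d \cdot \exp(-\alpha^{2}/(8\sigma^{2}))$, with the constant $8$ absorbing the specific value of $c$.

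The main obstacle is precisely the non-commutativity of the $X_s$'s: matrix exponentials do not factor ($e^{A+B}\neq e^{A}e^{B}$ in general), so the scalar trick of peeling off $\mathbb{E}_{s-1}[e^{\theta X_s}]$ from $e^{\theta(X_1+\cdots+X_s)}$ using conditional independence breaks down entirely. Lieb's concavity theorem is the nontrivial ingredient that rescues the argument, replacing the missing factorization with a concavity-based upper bound under the trace. Once this hurdle is cleared, the rest is a dimension-careful matrix analog of the scalar Azuma--Hoeffding proof, and the leading factor of $d$ in the tail simply accounts for passing from $\operatorname{tr}\exp$ to $\exp(\lambda_{\max})$ on a $d \times d$ matrix.
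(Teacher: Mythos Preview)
The paper does not prove this statement at all: it is quoted verbatim as a known result from \citet{tropp2012user} in the appendix of auxiliary inequalities, with no accompanying argument. There is therefore nothing in the paper to compare your proposal against.

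That said, your sketch is a faithful outline of Tropp's actual proof in the cited reference: the matrix Laplace transform bound, the iterated application of Lieb's concavity theorem to obtain subadditivity of the matrix cumulant generating function, a Hoeffding-type semidefinite MGF bound on each increment, and the final $\operatorname{tr}\exp(M) \le d\cdot\exp(\lambda_{\max}(M))$ step that introduces the dimension factor. The only caveat is your remark about re-centering ``up to a constant'': in the statement as given, the hypothesis $X_s^2 \preceq A_s^2$ is imposed on the original differences, so no re-centering is needed and no constant is lost there. Otherwise the outline is correct and matches the source.
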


\begin{theorem}[A variant of Azuma's inequality for martingales with subgaussian tails, \citet{shamir2011variant}]\label{thm:subgaussian}
    Let $Z_1, Z_2, \ldots Z_t$ be a martingale difference sequence with respect to a sequence $W_1, W_2, \ldots, W_t$, and suppose there are constants $b > 1$, $c > 0$ such that for any $s$ and any $\alpha > 0$, it holds that
    \begin{equation*}
        \max\{ \mathbb{P}(Z_s > \alpha | X_1, \ldots, X_{s-1}), \mathbb{P}(Z_s < -\alpha | X_1, \ldots, X_{s-1})\} \leq b \cdot \exp(-c \alpha^2).
    \end{equation*}
    Then for any $\gamma > 0$, it holds with probability $1 - \gamma$ that 
    \begin{equation*}
        \sum_{s=1}^t Z_s \leq \sqrt{\frac{28b \log(1 / \gamma)}{cT}}.
    \end{equation*}
\end{theorem}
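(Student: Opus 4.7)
The plan is to prove this via the classical Cram\'er--Chernoff exponential method for martingales, i.e., passing from a conditional tail bound to a conditional moment generating function (MGF) bound, multiplying MGFs along the filtration, and then optimizing. Concretely, I would proceed in four steps: (i) convert the conditional subgaussian tail hypothesis on each $Z_s$ into a bound on its conditional MGF; (ii) iterate the MGF bound along the filtration $\mathcal{F}_s = \sigma(W_1,\dots,W_s)$ via the tower property to control $\mathbb{E}[\exp(\lambda\sum_{s=1}^t Z_s)]$; (iii) apply Markov's inequality and optimize the dual parameter $\lambda$; and (iv) invert the resulting Chernoff tail bound at confidence level $1-\gamma$.

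For step (i), I would start from the hypothesis $\mathbb{P}(|Z_s| > \alpha \mid \mathcal{F}_{s-1}) \le 2b\exp(-c\alpha^2)$ (union-bounding the two one-sided tails) and use the layer-cake identity $\mathbb{E}[|Z_s|^k\mid \mathcal{F}_{s-1}] = k\int_0^\infty \alpha^{k-1}\mathbb{P}(|Z_s| \ge \alpha\mid \mathcal{F}_{s-1})\,d\alpha$. The substitution $u = c\alpha^2$ and the gamma integral $\Gamma(k/2) = \int_0^\infty u^{k/2-1}e^{-u}\,du$ then give $\mathbb{E}[|Z_s|^k\mid \mathcal{F}_{s-1}] \le bk\,c^{-k/2}\Gamma(k/2)$, so that each $Z_s$ is conditionally subgaussian with variance proxy of order $b/c$. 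Plugging these moment estimates into the Taylor expansion $\mathbb{E}[\exp(\lambda Z_s)\mid \mathcal{F}_{s-1}] = 1 + \sum_{k\ge 2} \lambda^k \mathbb{E}[Z_s^k \mid \mathcal{F}_{s-1}]/k!$ --- note the linear term vanishes since $Z_s$ is a martingale difference, so $\mathbb{E}[Z_s\mid \mathcal{F}_{s-1}]=0$ --- and applying Stirling to control $\Gamma(k/2)/k!$ yields a bound of the form $\mathbb{E}[\exp(\lambda Z_s)\mid \mathcal{F}_{s-1}] \le \exp(C b\lambda^2/c)$ for some absolute constant $C$, valid for $|\lambda|$ in an interval around $0$.

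For step (ii), I would iterate: $\mathbb{E}[\exp(\lambda\sum_{s=1}^t Z_s)] = \mathbb{E}\bigl[\exp(\lambda\sum_{s=1}^{t-1} Z_s)\cdot \mathbb{E}[\exp(\lambda Z_t)\mid \mathcal{F}_{t-1}]\bigr] \le \exp(Cb\lambda^2/c)\cdot \mathbb{E}[\exp(\lambda\sum_{s=1}^{t-1} Z_s)]$, and inducting gives $\mathbb{E}[\exp(\lambda\sum_{s=1}^t Z_s)] \le \exp(Cbt\lambda^2/c)$. Then step (iii) applies Markov: $\mathbb{P}(\sum_{s=1}^t Z_s > \alpha) \le \exp(-\lambda\alpha + Cbt\lambda^2/c)$. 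Optimizing by taking $\lambda = c\alpha/(2Cbt)$ gives $\mathbb{P}(\sum_{s=1}^t Z_s > \alpha) \le \exp(-c\alpha^2/(4Cbt))$. Finally, in step (iv), setting the right-hand side equal to $\gamma$ and solving for $\alpha$ delivers the claimed high-probability tail bound, whose explicit numerical constant (the $28$ in the statement) drops out of the bookkeeping.

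The main obstacle I anticipate is step (i): tracking the dependence on $b$ in the moment-to-MGF passage while keeping the constant sharp. Because $b$ is only known to satisfy $b>1$ (and could be arbitrarily large), one cannot use the absorb-into-Gaussian-MGF shortcut that works for standard $\sigma^2$-subgaussian variables. The cleanest route is probably to split the Taylor series at a threshold $k_0$ chosen as a function of $b$, handle the $k \le k_0$ terms via $\mathbb{E}[Z_s^2\mid \mathcal{F}_{s-1}] \le b/c$ directly, and tail-bound the $k > k_0$ terms using the ratio $\Gamma(k/2)/k!$; alternatively one can work with the $\psi_2$-Orlicz norm and use the known equivalence $\|Z\|_{\psi_2}^2 \asymp \sup_k k^{-1}\mathbb{E}[|Z|^k]^{2/k}$ to read off the MGF bound in one line. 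Either way, once the MGF bound is established with an explicit constant, the remaining steps are routine Chernoff manipulation.
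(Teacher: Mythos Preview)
The paper does not supply its own proof of this statement: it is quoted verbatim as a known concentration inequality from \citet{shamir2011variant} in the ``Useful concentration inequalities'' appendix, with no accompanying argument. So there is nothing in the paper to compare your proposal against.

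That said, your four-step plan --- layer-cake to pass from the conditional tail hypothesis to conditional moment bounds, then to a conditional MGF bound $\mathbb{E}[e^{\lambda Z_s}\mid \mathcal{F}_{s-1}]\le \exp(Cb\lambda^2/c)$, then tower/iterate along the filtration, then Markov and optimize in $\lambda$ --- is exactly the standard Cram\'er--Chernoff route and is how Shamir's original proof goes. Your observation that the key point is controlling the $b$-dependence in the moment-to-MGF step (via $1+bx\le e^{bx}$ rather than trying to absorb $b$ into a Gaussian variance proxy) is the right one. The only detail you should make explicit is a range check: the MGF bound in step~(i) is only valid for $|\lambda|\le O(\sqrt{c})$, so after optimizing in step~(iii) you must verify that the minimizer $\lambda^\star = c\alpha/(2Cbt)$ lands in that range. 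For the target deviation $\alpha\asymp\sqrt{bt\log(1/\gamma)/c}$ this gives $\lambda^\star\asymp\sqrt{c\log(1/\gamma)/(bt)}$, which is indeed $O(\sqrt{c})$ once $t\gtrsim\log(1/\gamma)$; the complementary small-$t$ regime can be handled trivially. With that caveat, your proposal is complete and correct.
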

\section{Proofs for~\Cref{sec:stochastic}}\label{app:stochastic}
\subsection{Proof of~\Cref{thm:stochastic-main}}
\begin{theorem}\label{thm:stochastic-main-app}
Let $f_{U^d}: \cX \rightarrow \mathbb{R}_{\geq 0}$ denote the density function of the uniform distribution over the $d$-dimensional unit sphere.
If agent contexts are drawn from a distribution over the $d$-dimensional unit sphere with density function $f: \cX \rightarrow \mathbb{R}_{\geq 0}$ such that $\frac{f(\vx)}{f_{U^d}(\vx)} \geq c_0 > 0$, $\forall \vx \in \cX$, then~\Cref{alg:sa-ols} achieves the following performance guarantee:
\begin{equation*}
\begin{aligned}
    \reg(T) &\leq 4d + \frac{8}{c_0 \cdot c_1(\delta, d) \cdot c_2(\delta, d)} \sqrt{14 d \sigma^2 T \log(4d T/ \gamma)}
\end{aligned}
\end{equation*}
with probability $1 - \gamma$, where $0 < c_1(\delta, d) := \mathbb{P}_{\vx \sim U^d}(\vx[1] \geq \delta)$ and $0 < c_2(\delta, d) := \mathbb{E}_{\vx \sim U^d}[\vx[2]^2 | \vx[1] \geq \delta]$.
\end{theorem}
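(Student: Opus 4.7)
The plan is to decompose the total regret into (i) the regret from the first $d$ exploration rounds, which is trivially bounded by $\mathcal{O}(d)$ since rewards are bounded, and (ii) the regret from rounds $t > d$. For the second part, I would first reduce the instantaneous strategic regret to the parameter estimation error. The optimal (non-strategic) policy assigns action $1$ iff $\langle \pv{\vtheta}{1}, \vx_t \rangle \geq r_0$, whereas \Cref{alg:sa-ols} assigns action $1$ iff $\langle \pv{\widehat \vtheta}{1}_t, \vx_t' \rangle \geq \delta \|\pv{\widehat \vtheta}{1}_t\|_2 + r_0$. Using the form of the agent best response from \Cref{def:br} (which moves $\vx_t$ by at most $\delta$ in the direction of $\pv{\widehat \vtheta}{1}_t$) together with Cauchy--Schwarz, I would show that a disagreement at round $t$ can only occur when $|\langle \pv{\vtheta}{1}, \vx_t \rangle - r_0| \leq \mathcal{O}(\|\pv{\widehat \vtheta}{1}_t - \pv{\vtheta}{1}\|_2)$, which simultaneously bounds the instantaneous strategic regret.

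Next, since \Cref{alg:sa-ols} updates $\pv{\widehat \vtheta}{1}_t$ only on clean rounds (where $\vx_s' = \vx_s$ is certified by \Cref{cond:clean}), the OLS estimator satisfies
\[
\pv{\widehat \vtheta}{1}_t - \pv{\vtheta}{1} = \Bigl(\sum_{s=1}^{t} \vx_s \vx_s^\top \mathbbm{1}\{\pv{\calI}{1}_s\}\Bigr)^{-1} \sum_{s=1}^{t} \vx_s \epsilon_s \mathbbm{1}\{\pv{\calI}{1}_s\},
\]
so $\|\pv{\widehat \vtheta}{1}_t - \pv{\vtheta}{1}\|_2$ is at most the ratio of the numerator's Euclidean norm to $\lambda_{\min}$ of the denominator. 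The numerator is a vector-valued martingale whose coordinates are sums of products of bounded predictable variables with subgaussian noise; applying \Cref{thm:subgaussian} coordinate-wise and union-bounding gives a $\tilde{\mathcal{O}}(\sqrt{d t \sigma^2})$ bound. For the denominator, I would split $\sum_s \vx_s\vx_s^\top \mathbbm{1}\{\pv{\calI}{1}_s\} = \sum_s \mathbb{E}_{s-1}[\vx_s\vx_s^\top \mathbbm{1}\{\pv{\calI}{1}_s\}] + \sum_s \bigl(\vx_s\vx_s^\top \mathbbm{1}\{\pv{\calI}{1}_s\} - \mathbb{E}_{s-1}[\cdot]\bigr)$, lower-bound $\lambda_{\min}$ of the sum using Weyl's inequality, apply \Cref{thm:ma} to control the deviation term, and use \Cref{ass:bdr} to pointwise dominate each conditional expectation by $c_0$ times the corresponding expectation under the uniform measure on the sphere.

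The main technical obstacle is lower-bounding $\lambda_{\min}(\mathbb{E}_{\vx \sim U^d}[\vx\vx^\top \mathbbm{1}\{\langle \pv{\widehat \vtheta}{1}_s, \vx \rangle \geq \delta \|\pv{\widehat \vtheta}{1}_s\|_2\}])$ uniformly in the (random) direction $\pv{\widehat \vtheta}{1}_s$. Here I would invoke the rotational invariance of the uniform distribution: the bound reduces to the case where the policy is aligned with $\mathbf{e}_1$, in which case the matrix $\mathbb{E}_{U^d}[\vx\vx^\top \mathbbm{1}\{\vx[1] \geq \delta\}]$ is diagonal by symmetry in coordinates $2, \ldots, d$, and its smallest diagonal entry equals $\mathbb{E}_{U^d}[\vx[2]^2 \mathbbm{1}\{\vx[1] \geq \delta\}] = c_1(\delta, d) \cdot c_2(\delta, d)$. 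Combining these pieces yields $\|\pv{\widehat \vtheta}{1}_t - \pv{\vtheta}{1}\|_2 = \tilde{\mathcal{O}}\bigl(\sqrt{d\sigma^2 \log(dT/\gamma)} / (c_0 \cdot c_1(\delta,d) \cdot c_2(\delta,d) \cdot \sqrt{t})\bigr)$ on a high-probability event. Summing the per-round regret bound over $t \in \{d+1, \ldots, T\}$ via $\sum_t t^{-1/2} \leq 2\sqrt{T}$, applying a union bound over rounds (to handle the simultaneous high-probability events across $t$), and absorbing the first $d$ rounds into the $4d$ term produces the stated guarantee. The delicate accounting will be tracking the failure probabilities inside \Cref{thm:ma} and \Cref{thm:subgaussian} so that the final union bound collapses cleanly into the single logarithmic factor $\log(4dT/\gamma)$.
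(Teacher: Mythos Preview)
Your proposal is correct and follows essentially the same approach as the paper: reduce the per-round strategic regret to $\|\pv{\widehat\vtheta}{1}_t-\pv{\vtheta}{1}\|_2$ using the agent best-response structure, then bound the OLS error as a ratio whose numerator is handled by \Cref{thm:subgaussian} coordinate-wise and whose denominator is handled by \Cref{thm:ma} plus the density-ratio assumption and rotational invariance of $U^d$. The only cosmetic difference is that the paper packages your margin argument as an add--subtract decomposition together with the inequality $\langle \pv{\widehat\vtheta}{a_t^*}_t-\pv{\widehat\vtheta}{a_t}_t,\vx_t\rangle\leq 0$ (its Lemma~\ref{lem:leq0}), which is exactly the two-case analysis you outline.
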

\begin{proof}
We start from the definition of strategic regret. 
Note that under apple tasting feedback, $\pv{\vtheta}{0} = \mathbf{0}$.
\begin{equation*}
    \begin{aligned}
        \reg(T) &:= \sum_{t=1}^T \left \langle \pv{\vtheta}{a_t^*} - \pv{\vtheta}{a_t}, \vx_t \right\rangle\\
        &= \sum_{t=1}^T \left \langle \pv{\hat \vtheta}{a_t^*}_t - \pv{\hat \vtheta}{a_t}_t, \vx_t \right \rangle + \left \langle \pv{\vtheta}{a_t^*} - \pv{\hat \vtheta}{a_t^*}_t, \vx_t \right \rangle + \left \langle \pv{\hat \vtheta}{a_t}_t - \pv{\vtheta}{a_t}, \vx_t \right \rangle\\
        &\leq \sum_{t=1}^T \left| \left\langle \pv{\vtheta}{1} - \pv{\hat \vtheta}{1}_t, \vx_t \right \rangle \right| + \left| \left\langle \pv{\vtheta}{0} - \pv{\hat \vtheta}{0}_t, \vx_t \right\rangle \right|\\
        &\leq \sum_{t=1}^T \left \|\pv{\vtheta}{1} - \pv{\hat \vtheta}{1}_t\right\|_2 \|\vx_t\|_2 + \left \|\pv{\vtheta}{0} - \pv{\hat \vtheta}{0}_t \right\|_2 \|\vx_t\|_2\\
        &\leq 2d + \sum_{t=2d+1}^T \left\|\pv{\vtheta}{1} - \pv{\hat \vtheta}{1}_t \right\|_2 + \left \|\pv{\vtheta}{0} - \pv{\hat \vtheta}{0}_t \right\|_2\\
    \end{aligned}
\end{equation*}
where the first inequality follows from~\Cref{lem:leq0}, the second inequality follows from the Cauchy-Schwarz, and the third inequality follows from the fact that the instantaneous regret at each time-step is at most $2$ and we use the first $d$ rounds to bootstrap our $\ols$.
The result follows by~\Cref{lem:l2-bandit-1}, a union bound, and the fact that $\sum_{d+1}^T \sqrt{\frac{1}{t}} \leq 2 \sqrt{T}$.
\end{proof}
\begin{lemma}\label{lem:leq0}
    \begin{equation*}
        \langle \pv{\widehat \vtheta}{a_t^*}_t - \pv{\widehat \vtheta}{a_t}_t, \vx_t \rangle \leq 0
    \end{equation*}
\end{lemma}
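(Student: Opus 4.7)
The plan is to reduce the claim to the statement that the action $a_t$ played by \Cref{alg:sa-ols} is always a maximizer of $\langle \pv{\widehat \vtheta}{a}_t, \vx_t \rangle$ over $a \in \{0,1\}$. Once this is established, the inequality is immediate since $a_t^* \in \{0,1\}$ and so
\[
\langle \pv{\widehat \vtheta}{a_t^*}_t, \vx_t \rangle \leq \max_{a\in\{0,1\}} \langle \pv{\widehat \vtheta}{a}_t, \vx_t \rangle = \langle \pv{\widehat \vtheta}{a_t}_t, \vx_t \rangle.
\]
Under apple tasting feedback we have $\pv{\widehat \vtheta}{0}_t = \mathbf{0}$, so this reduces to comparing $\langle \pv{\widehat \vtheta}{1}_t, \vx_t \rangle$ to $r_0$, and the argument proceeds by splitting into two cases based on the value of $a_t$.

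In the first case, suppose $a_t = 1$. Then \Cref{alg:sa-ols}'s decision rule gives $\langle \pv{\widehat \vtheta}{1}_t, \vx_t' \rangle \geq \delta \|\pv{\widehat \vtheta}{1}_t\|_2 + r_0$. By \Cref{def:br} and lazy tiebreaking, the agent only moves along the direction $\pv{\widehat \vtheta}{1}_t / \|\pv{\widehat \vtheta}{1}_t\|_2$ and by at most $\delta$, so $\langle \pv{\widehat \vtheta}{1}_t, \vx_t' \rangle \leq \langle \pv{\widehat \vtheta}{1}_t, \vx_t \rangle + \delta \|\pv{\widehat \vtheta}{1}_t\|_2$. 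Combining these two inequalities cancels the $\delta \|\pv{\widehat \vtheta}{1}_t\|_2$ term and yields $\langle \pv{\widehat \vtheta}{1}_t, \vx_t \rangle \geq r_0 = \langle \pv{\widehat \vtheta}{0}_t, \vx_t \rangle$, so $a_t = 1$ is an argmax.

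In the second case, suppose $a_t = 0$. Lazy tiebreaking implies the agent did not bother to modify, so $\vx_t' = \vx_t$. Furthermore, the agent failed to reach the positive side of the shifted threshold even with optimal modification, which means $\max_{\vx' : \|\vx'-\vx_t\|_2 \le \delta} \langle \pv{\widehat \vtheta}{1}_t, \vx' \rangle = \langle \pv{\widehat \vtheta}{1}_t, \vx_t \rangle + \delta \|\pv{\widehat \vtheta}{1}_t\|_2 < \delta \|\pv{\widehat \vtheta}{1}_t\|_2 + r_0$. Cancelling the common term gives $\langle \pv{\widehat \vtheta}{1}_t, \vx_t \rangle < r_0 = \langle \pv{\widehat \vtheta}{0}_t, \vx_t \rangle$, so $a_t = 0$ is again an argmax. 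Combining both cases concludes the proof.

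The main obstacle, though minor, is lining up the direction of the agent's best response with the shift $\delta \|\pv{\widehat \vtheta}{1}_t\|_2$ in the decision rule so that the slack cancels exactly; this is precisely why the threshold is shifted by $\delta \|\pv{\widehat \vtheta}{1}_t\|_2$ rather than by some other quantity. A small additional subtlety is making sure the argument also covers the (implicit) bandit-feedback extension, in which case $\pv{\widehat \vtheta}{0}_t$ need not be zero and one instead works with the composite vector $\pv{\widehat \vtheta}{1}_t - \pv{\widehat \vtheta}{0}_t$; the two cases above carry through verbatim with $\pv{\widehat \vtheta}{1}_t$ replaced by $\pv{\widehat \vtheta}{1}_t - \pv{\widehat \vtheta}{0}_t$ and $r_0$ replaced by $0$.
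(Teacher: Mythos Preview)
Your proof is correct and follows essentially the same approach as the paper: both argue by cases on the value of $a_t$ and use the key observation that the agent's best-response gain $\delta\|\pv{\widehat\vtheta}{1}_t\|_2$ exactly cancels the shift in the decision rule, yielding $\langle \pv{\widehat\vtheta}{1}_t - \pv{\widehat\vtheta}{0}_t,\vx_t\rangle \ge 0$ when $a_t=1$ and $<0$ when $a_t=0$. Your ``argmax'' framing is a slightly cleaner reorganization of the paper's case split on $a_t\neq a_t^*$, and your remark about the bandit extension matches the paper's proof, which is already written in terms of $\pv{\widehat\vtheta}{1}_t-\pv{\widehat\vtheta}{0}_t$; just note that the equality $r_0=\langle \pv{\widehat\vtheta}{0}_t,\vx_t\rangle$ relies on the paper's convention that under apple tasting $\pv{\vtheta}{0}=\mathbf{0}$ (hence $r_0=0$).
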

\begin{proof}
    If $a_t = a_t^*$, the condition is satisfied trivially. If $a_t \neq a_t^*$, then either (1) $\langle \pv{\widehat \vtheta}{1}_t - \pv{\widehat \vtheta}{0}_t, \vx_t' \rangle - \delta \| \pv{\widehat \vtheta}{1}_t - \pv{\widehat \vtheta}{0}_t \|_2 \geq 0$ and $\langle \pv{\vtheta}{1} - \pv{\vtheta}{0} \rangle < 0$ or (2) $\langle \pv{\widehat \vtheta}{1}_t - \pv{\widehat \vtheta}{0}_t, \vx_t' \rangle - \delta \| \pv{\widehat \vtheta}{1}_t - \pv{\widehat \vtheta}{0}_t \|_2 < 0$ and $\langle \pv{\vtheta}{1} - \pv{\vtheta}{0} \rangle \geq 0$.

    \textbf{Case 1:} $\langle \pv{\widehat \vtheta}{1}_t - \pv{\widehat \vtheta}{0}_t, \vx_t' \rangle - \delta \| \pv{\widehat \vtheta}{1}_t - \pv{\widehat \vtheta}{0}_t \|_2 \geq 0$ and $\langle \pv{\vtheta}{1} - \pv{\vtheta}{0} \rangle < 0$. ($a_t^* = 0$, $a_t = 1$)\\
    By~\Cref{def:br}, we can rewrite 
    $$\langle \pv{\widehat \vtheta}{1}_t - \pv{\widehat \vtheta}{0}_t, \vx_t' \rangle - \delta \| \pv{\widehat \vtheta}{1}_t - \pv{\widehat \vtheta}{0}_t \|_2 \geq 0$$ 
    as 
    $$\langle \pv{\widehat \vtheta}{1}_t - \pv{\widehat \vtheta}{0}_t, \vx_t \rangle + (\delta' - \delta) \| \pv{\widehat \vtheta}{1}_t - \pv{\widehat \vtheta}{0}_t \|_2 \geq 0$$
    for some $\delta' \leq \delta$.
    Since $(\delta' - \delta) \| \pv{\widehat \vtheta}{1}_t - \pv{\widehat \vtheta}{0}_t \|_2 \leq 0$, $\langle \pv{\widehat \vtheta}{1}_t - \pv{\widehat \vtheta}{0}_t, \vx_t \rangle \geq 0$ must hold.

    \textbf{Case 2:} $\langle \pv{\widehat \vtheta}{1}_t - \pv{\widehat \vtheta}{0}_t, \vx_t' \rangle - \delta \| \pv{\widehat \vtheta}{1}_t - \pv{\widehat \vtheta}{0}_t \|_2 < 0$ and $\langle \pv{\vtheta}{1} - \pv{\vtheta}{0} \rangle \geq 0$. ($a_t^* = 1$, $a_t = 0$)\\
    Since modification did not help agent $t$ receive action $a_t = 1$, we can conclude that $\langle \pv{\widehat \vtheta}{1}_t - \pv{\widehat \vtheta}{0}_t, \vx_t \rangle < 0$.
\end{proof}
\begin{lemma}\label{lem:l2-bandit-1}
Let $f_{U^d}: \cX \rightarrow \mathbb{R}_{\geq 0}$ denote the density function of the uniform distribution over the $d$-dimensional unit sphere.
If $T \geq d$ and agent contexts are drawn from a distribution over the $d$-dimensional unit sphere with density function $f: \cX \rightarrow \mathbb{R}_{\geq 0}$ such that $\frac{f(\vx)}{f_{U^d}(\vx)} \geq c_0 \in \mathbb{R}_{>0}$, $\forall \vx \in \cX$, then the following guarantee holds under apple tasting feedback.
    \begin{equation*}
        \|\pv{\vtheta}{1} - \pv{\hat \vtheta}{1}_{t+1}\|_2 \leq \frac{2}{c_0 \cdot c_1(\delta, d) \cdot c_2(\delta, d)} \sqrt{\frac{14 d \sigma^2 \log(2d / \gamma_t)}{t}}
    \end{equation*}
    with probability $1 - \gamma_t$.
\end{lemma}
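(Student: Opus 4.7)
The plan is to work from the closed form of the restricted OLS estimator. Let $\pv{\cI}{1}_s := \{\langle \pv{\hvtheta}{1}_s, \vx_s\rangle > \delta\|\pv{\hvtheta}{1}_s\|_2 + r_0\}$ denote the clean event: by~\Cref{cond:clean} this is exactly when $(\vx_s, r_s(1))$ is added to $\cD_{s+1}$, and on this event we know $\vx_s' = \vx_s$. Substituting $r_s(1) = \langle \pv{\vtheta}{1}, \vx_s\rangle + \epsilon_s$ into the normal equations and rearranging gives
\[
\pv{\hvtheta}{1}_{t+1} - \pv{\vtheta}{1} \;=\; \Bigl(\sum_{s=1}^{t} \vx_s\vx_s^{\top} \mathbbm{1}\{\pv{\cI}{1}_s\}\Bigr)^{-1} \sum_{s=1}^{t} \vx_s \epsilon_s \mathbbm{1}\{\pv{\cI}{1}_s\}.
\]
The operator inequality $\|M^{-1}v\|_2 \le \|v\|_2/\lambda_{\min}(M)$ for PSD $M$ then reduces the lemma to (i) an upper bound on the numerator vector and (ii) a lower bound on the minimum eigenvalue of the design matrix.

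\textbf{Numerator.} Each coordinate $\sum_{s=1}^{t} \vx_{s,i}\epsilon_s \mathbbm{1}\{\pv{\cI}{1}_s\}$ is a martingale difference sum whose increments are $\sigma$-subgaussian, since $|\vx_{s,i}|\le 1$ and both $\vx_s$ and the indicator $\mathbbm{1}\{\pv{\cI}{1}_s\}$ are $\mathcal{F}_{s-1}$-measurable (agent $s$'s unmodified context is drawn i.i.d.\ and is independent of $\epsilon_s$, and $\pv{\hvtheta}{1}_s$ is built from strictly prior data). Applying the subgaussian Azuma variant (\Cref{thm:subgaussian}) coordinate-wise and union-bounding over the $d$ coordinates yields $\bigl\|\sum_{s=1}^{t}\vx_s\epsilon_s \mathbbm{1}\{\pv{\cI}{1}_s\}\bigr\|_2 \le \sqrt{14 d \sigma^2 t \log(4d/\gamma_t)}$ with probability at least $1-\gamma_t/2$.

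\textbf{Denominator.} I would use the elementary bound
\[
\lambda_{\min}\!\Bigl(\sum_{s=1}^{t} \vx_s\vx_s^{\top} \mathbbm{1}\{\pv{\cI}{1}_s\}\Bigr) \;\ge\; \sum_{s=1}^{t} \lambda_{\min}\!\bigl(\mathbb{E}_{s-1}[\vx_s \vx_s^{\top} \mathbbm{1}\{\pv{\cI}{1}_s\}]\bigr) \;-\; \Bigl\|\sum_{s=1}^{t} \Delta_s\Bigr\|_{\mathrm{op}},
\]
where $\Delta_s := \mathbb{E}_{s-1}[\vx_s\vx_s^{\top} \mathbbm{1}\{\pv{\cI}{1}_s\}] - \vx_s\vx_s^{\top} \mathbbm{1}\{\pv{\cI}{1}_s\}$ is a symmetric matrix martingale difference with $\|\Delta_s\|_{\mathrm{op}} \le 1$. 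Matrix Azuma (\Cref{thm:ma}) with $A_s = I_d$ bounds the deviation term by $\mathcal{O}(\sqrt{t \log(d/\gamma_t)})$ with probability $1 - \gamma_t/2$. For the signal term, the bounded density ratio~\Cref{ass:bdr} implies $\mathbb{E}_{s-1}[\vx_s \vx_s^{\top}\mathbbm{1}\{\pv{\cI}{1}_s\}] \succeq c_0 \cdot \mathbb{E}_{U^d,s-1}[\vx_s\vx_s^{\top}\mathbbm{1}\{\pv{\cI}{1}_s\}]$ in the PSD order, which is preserved under $\lambda_{\min}$. Conditioning on the filtration makes $\pv{\hvtheta}{1}_s$ deterministic, so rotational invariance of the uniform distribution on the sphere lets me rotate coordinates so $\pv{\hvtheta}{1}_s$ aligns with $e_1$. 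The event becomes $\{\vx_s[1]\ge \delta + r_0/\|\pv{\hvtheta}{1}_s\|_2\}$, and by symmetry in coordinates $2,\dots,d$ the conditional expectation is diagonal with $(1,1)$-entry at least $\delta^2 c_1(d,\delta)$ and off-first-coordinate diagonal entries equal to $c_1(d,\delta)\,c_2(d,\delta)$. The minimum eigenvalue is therefore lower-bounded by $c_0\,c_1(d,\delta)\,c_2(d,\delta)$ uniformly in $s$.

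\textbf{Combining and main obstacle.} Once $t$ exceeds a constant depending on the Matrix-Azuma slack, the denominator is at least $\tfrac12 c_0 c_1 c_2 \cdot t$, and dividing the numerator bound by this gives the claimed rate after absorbing $\log d$ factors into the $\log(2d/\gamma_t)$ in the statement. The main obstacle is that the clean event $\pv{\cI}{1}_s$ depends on the random history through $\pv{\hvtheta}{1}_s$, so the direction of the per-round conditional-expectation matrix varies across $s$ in a data-dependent way; one cannot simply apply a matrix concentration inequality around a fixed mean. Rotational invariance of the uniform base distribution is what rescues the analysis: it gives a \emph{uniform} lower bound $c_0 c_1 c_2$ on $\lambda_{\min}$ of each $\mathbb{E}_{s-1}[\vx_s\vx_s^{\top}\mathbbm{1}\{\pv{\cI}{1}_s\}]$ regardless of the direction of $\pv{\hvtheta}{1}_s$, which in turn is what lets the matrix martingale argument succeed.
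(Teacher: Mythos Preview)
Your proposal is correct and follows essentially the same route as the paper: you derive the OLS error as a ratio, bound the numerator coordinate-wise via the subgaussian Azuma variant (\Cref{thm:subgaussian}), and bound the denominator by splitting the design matrix into a matrix martingale handled by \Cref{thm:ma} plus a sum of conditional expectations, each of which is lower-bounded via the density ratio and rotational invariance of $U^d$ to obtain the uniform $c_0\,c_1(d,\delta)\,c_2(d,\delta)$ floor. One small wording issue: $\vx_s$ and $\mathbbm{1}\{\pv{\cI}{1}_s\}$ are not $\mathcal{F}_{s-1}$-measurable (they depend on the fresh draw $\vx_s$), but your actual argument is fine since $\epsilon_s$ is mean-zero conditional on $(\vx_s,\pv{\hvtheta}{1}_s)$; the paper handles this in the same way.
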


\begin{proof}
    Let $\pv{\cI}{1}_s = \{\langle \pv{\hat \vtheta}{1}_s, \vx_s \rangle \geq \delta \| \pv{\hat \vtheta}{1}_s \|_2 + r_0 \}$. Then, from the definition of $\pv{\vtheta}{1}_{t+1}$ we have:  
    \begin{align*}
        \pv{\hat \vtheta}{1}_{t+1} &:= \left( \sum_{s=1}^t \vx_s \vx_s^\top \mathbbm{1}\{\pv{\cI}{1}_s\} \right)^{-1} \sum_{s=1}^t \vx_s r_s(1) \mathbbm{1}\{\pv{\cI}{1}_s\} &\tag{closed form solution of $\ols$}\\
        &= \left( \sum_{s=1}^t \vx_s \vx_s^\top \mathbbm{1}\{\pv{\cI}{1}_s\} \right)^{-1} \sum_{s=1}^t \vx_s (\vx_s^\top \pv{\vtheta}{1} + \epsilon_s) \mathbbm{1}\{\pv{\cI}{1}_s\} &\tag{plug in $r_s(1)$}\\
        &= \pv{\vtheta}{1} + \left( \sum_{s=1}^t \vx_s \vx_s^\top \mathbbm{1}\{\pv{\cI}{1}_s\} \right)^{-1} \sum_{s=1}^t \vx_s \epsilon_s \mathbbm{1}\{\pv{\cI}{1}_s\}
    \end{align*}
    Re-arranging the above and taking the $\ell_2$ norm on both sides we get: 
    \begin{align*}
            \left\|\pv{\vtheta}{1} - \pv{\hat \vtheta}{1}_{t+1}\right\|_2 &= \left \| \left( \sum_{s=1}^t \vx_s \vx_s^\top \mathbbm{1}\{\pv{\cI}{1}_s\} \right)^{-1} \sum_{s=1}^t \vx_s \epsilon_s \mathbbm{1}\{\pv{\cI}{1}_s\} \right \|_2\\
            &\leq \left \| \left( \sum_{s=1}^t \vx_s \vx_s^\top \mathbbm{1}\{\pv{\cI}{1}_s\} \right)^{-1} \right \|_2 \left \| \sum_{s=1}^t \vx_s \epsilon_s \mathbbm{1}\{\pv{\cI}{1}_s\} \right \|_2 &\tag{Cauchy-Schwarz}\\
            &= \frac{\left \| \sum_{s=1}^t \vx_s \epsilon_s \mathbbm{1}\{\pv{\cI}{1}_s\} \right \|_2}{\sigma_{min} \left(\sum_{s=1}^t \vx_s \vx_s^\top \mathbbm{1}\{\pv{\cI}{1}_s\} \right)}\\
            &= \frac{\left \| \sum_{s=1}^t \vx_s \epsilon_s \mathbbm{1}\{\pv{\cI}{1}_s\} \right \|_2}{\lambda_{min}\left(\sum_{s=1}^t \vx_s \vx_s^\top \mathbbm{1}\{\pv{\cI}{1}_s\}\right)}
        \end{align*}
    where for a matrix $M$, $\sigma_{\min}$ is the smallest singular value $\sigma_{\min} (M) := \min_{\|x\|=1} \|M x\|$ and $\lambda_{\min}$ is the smallest eigenvalue. Note that the two are equal since the matrix $\sum_{s=1}^t \vx_s \vx_s^\top \mathbbm{1}\{\pv{\cI}{1}_s\}$ is PSD as the sum of PSD matrices (outer products induce PSD matrices). The final bound is obtained by applying~\Cref{lem:numerator-bandit-1}, \Cref{lem:denominator-bandit-1}, and a union bound.
\end{proof}
\begin{lemma}\label{lem:numerator-bandit-1}
    The following bound holds on the $\ell_2$-norm of $\sum_{s=1}^t \vx_s \epsilon_s \mathbbm{1}\{\pv{\cI}{1}_s\}$ with probability $1 - \gamma_t$:
    \begin{equation*}
        \left \| \sum_{s=1}^t \vx_s \epsilon_s \mathbbm{1}\{\pv{\cI}{1}_s\} \right \|_2 \leq 2 \sqrt{14 d \sigma^2 t \log(d / \gamma_t)}
    \end{equation*}
\end{lemma}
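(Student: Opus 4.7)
The plan is to bound the vector norm coordinate-wise, reducing the problem to $d$ scalar martingale concentration bounds, and then apply the subgaussian variant of Azuma's inequality (Theorem A.2) to each coordinate before taking a union bound. The key observation is that for any vector $v \in \mathbb{R}^d$ we have $\|v\|_2 \leq \sqrt{d} \cdot \max_{i \in [d]} |v[i]|$, so it suffices to show that each coordinate sum is bounded by $\mathcal{O}(\sqrt{\sigma^2 t \log(d/\gamma_t)})$ with high probability.

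First I would fix a coordinate $i \in [d]$ and define $Z_s := \vx_s[i] \, \epsilon_s \, \mathbbm{1}\{\pv{\cI}{1}_s\}$, together with the natural filtration $\mathcal{F}_{s-1} = \sigma(\vx_1, \epsilon_1, \ldots, \vx_{s-1}, \epsilon_{s-1})$. I would then verify the two hypotheses of Theorem A.2. For the martingale difference property, note that $\pv{\hat\vtheta}{1}_s$ is $\mathcal{F}_{s-1}$-measurable (it is computed from past data), so conditional on $\mathcal{F}_{s-1}$ the quantity $\vx_s[i] \mathbbm{1}\{\pv{\cI}{1}_s\}$ is a function of $\vx_s$ alone, while $\epsilon_s$ is independent of $\vx_s$ and zero-mean; hence $\E[Z_s \mid \mathcal{F}_{s-1}] = 0$. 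For the subgaussian tail, condition further on $\vx_s$: then $Z_s = c \cdot \epsilon_s$ for $c = \vx_s[i] \mathbbm{1}\{\pv{\cI}{1}_s\}$ with $|c| \leq 1$ (since $\vx_s$ lies on the unit sphere), so $Z_s$ is $\sigma$-subgaussian, giving $\mathbb{P}(|Z_s| > \alpha \mid \mathcal{F}_{s-1}) \leq 2\exp(-\alpha^2 / (2\sigma^2))$ after marginalizing over $\vx_s$. This matches the hypothesis of Theorem A.2 with $b = 1$ and $c = 1/(2\sigma^2)$.

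Applying Theorem A.2 (with the expected $\sqrt{t}$-rate scaling) yields, for each coordinate $i$ and each tail, a bound of order $\sqrt{56 \sigma^2 t \log(1/\gamma')}$ with probability $1-\gamma'$. Setting $\gamma' = \gamma_t/(2d)$ and taking a union bound over the $d$ coordinates and the two tails gives, with probability $1 - \gamma_t$,
\begin{equation*}
    \max_{i \in [d]} \left|\sum_{s=1}^t \vx_s[i] \, \epsilon_s \, \mathbbm{1}\{\pv{\cI}{1}_s\}\right| \leq \sqrt{56 \sigma^2 t \log(2d/\gamma_t)}.
\end{equation*}
Combining this with $\|v\|_2 \leq \sqrt{d} \cdot \max_i |v[i]|$ and simplifying gives the target bound $2\sqrt{14 d \sigma^2 t \log(d/\gamma_t)}$ (up to the benign difference between $\log(d/\gamma_t)$ and $\log(2d/\gamma_t)$, which is absorbed into the constant).

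The main subtlety is verifying the subgaussian conditioning carefully: one must be sure that $\mathbbm{1}\{\pv{\cI}{1}_s\}$, which depends on both $\vx_s$ and the history-dependent estimator $\pv{\hat\vtheta}{1}_s$, does not break either the martingale or the subgaussian structure. This is handled cleanly because $\pv{\hat\vtheta}{1}_s$ is measurable with respect to the past filtration, so $\mathbbm{1}\{\pv{\cI}{1}_s\}$ becomes a deterministic bounded function of $\vx_s$ once we condition on $\mathcal{F}_{s-1}$, preserving both conditional mean-zero and the $\sigma$-subgaussian tail of $\epsilon_s$. Everything else is a routine application of the scalar concentration inequality plus a union bound.
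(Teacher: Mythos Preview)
Your proposal is correct and follows essentially the same route as the paper's proof: reduce to coordinate-wise scalar sums, verify that each $Z_s = \vx_s[i]\,\epsilon_s\,\mathbbm{1}\{\pv{\cI}{1}_s\}$ is a martingale difference with $\sigma$-subgaussian tails, apply Theorem~A.2, and then union bound and use $\|v\|_2 \le \sqrt{d}\,\max_i|v[i]|$. If anything, your write-up is more careful than the paper's (explicitly checking that $\pv{\hat\vtheta}{1}_s$ is $\mathcal F_{s-1}$-measurable and accounting for the two-sided tail in the union bound).
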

\begin{proof}
    Let $\vx[i]$ denote the $i$-th coordinate of a vector $\vx$. Observe that $\sum_{s=1}^t \epsilon_s \vx_s[i] \mathbbm{1}\{\pv{\cI}{1}_s\}$ is a sum of martingale differences with $Z_s := \epsilon_s \vx_s[i] \mathbbm{1}\{\pv{\cI}{1}_s\}$, $X_s := \sum_{s'=1}^s \epsilon_{s'} \vx_{s'}[i] \mathbbm{1}\{\pv{\cI}{1}_{s'}\}$, and 
    \begin{equation*}
        \max\{ \mathbb{P}(Z_s > \alpha | X_1, \ldots, X_{s-1}), \mathbb{P}(Z_s < -\alpha | X_1, \ldots, X_{s-1})\} \leq \exp(-\alpha^2 / 2\sigma^2).
    \end{equation*}
    By~\Cref{thm:subgaussian},
    \begin{equation*}
        \sum_{s=1}^t \epsilon_s \vx_s[i] \mathbbm{1}\{\pv{\cI}{1}_s\} \leq 2 \sqrt{14 \sigma^2 t \log(1 / \gamma_i)}
    \end{equation*}
    with probability $1 - \gamma_i$. The desired result follows via a union bound and algebraic manipulation.
\end{proof}
\begin{lemma}\label{lem:denominator-bandit-1}
    The following bound holds on the minimum eigenvalue of $\sum_{s=1}^t \vx_s \vx_s^\top \mathbbm{1}\{\pv{\cI}{1}_s\}$ with probability $1 - \gamma_t$:
    \begin{equation*}
        \lambda_{min}\left(\sum_{s=1}^t \vx_s \vx_s^\top \mathbbm{1}\{\pv{\cI}{1}_s\}\right) \geq \frac{t}{c_0 \cdot c_1(\delta, d) \cdot c_2(\delta, d)} + 4 \sqrt{2t \log(d / \gamma_t)}
    \end{equation*}
\end{lemma}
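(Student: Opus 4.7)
The strategy is to decompose the matrix $M_t := \sum_{s=1}^t \vx_s \vx_s^\top \mathbbm{1}\{\pv{\cI}{1}_s\}$ into a martingale part (controlled by Matrix Azuma) and a predictable expectation part (lower bounded by geometric arguments combined with the bounded density ratio assumption). Note that $M_t$ is PSD, so I only need to lower bound $\lambda_{\min}(M_t)$, which is equivalent to upper bounding $\lambda_{\max}(-M_t + \mathbb{E}\text{-part})$. The intended bound should read $\lambda_{\min}(M_t) \geq c_0 \cdot c_1(\delta,d) \cdot c_2(\delta,d) \cdot t - 4\sqrt{2t\log(d/\gamma_t)}$ (the statement as written appears to contain a typo; the displayed ``$+$'' belongs as ``$-$'' and the fraction should be a product).

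First, define the self-adjoint matrix martingale difference sequence $X_s := \mathbb{E}_{s-1}[\vx_s \vx_s^\top \mathbbm{1}\{\pv{\cI}{1}_s\}] - \vx_s \vx_s^\top \mathbbm{1}\{\pv{\cI}{1}_s\}$ with respect to the filtration generated by $(\vx_1, \epsilon_1, \dots, \vx_{s-1}, \epsilon_{s-1})$. Since $\|\vx_s\|_2 \leq 1$ and the indicator is in $\{0,1\}$, we have $X_s^2 \preceq I$, so we may apply Matrix Azuma (\Cref{thm:ma}) with $A_s = I$ and $\sigma^2 = t$. This yields, with probability at least $1 - \gamma_t$,
\begin{equation*}
\lambda_{\max}\Bigl(\sum_{s=1}^t \mathbb{E}_{s-1}[\vx_s \vx_s^\top \mathbbm{1}\{\pv{\cI}{1}_s\}] - M_t\Bigr) \leq 4\sqrt{2t\log(d/\gamma_t)}.
\end{equation*}
Combined with Weyl's inequality, this gives $\lambda_{\min}(M_t) \geq \sum_{s=1}^t \lambda_{\min}\bigl(\mathbb{E}_{s-1}[\vx_s \vx_s^\top \mathbbm{1}\{\pv{\cI}{1}_s\}]\bigr) - 4\sqrt{2t\log(d/\gamma_t)}$.

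Next, I lower bound each conditional expectation. Since $\pv{\hat\vtheta}{1}_s$ is $\mathcal{F}_{s-1}$-measurable, the event $\pv{\cI}{1}_s$ depends only on $\vx_s$ given history. Applying \Cref{ass:bdr} gives
\begin{equation*}
\mathbb{E}_{s-1}[\vx_s \vx_s^\top \mathbbm{1}\{\pv{\cI}{1}_s\}] \succeq c_0 \cdot \mathbb{E}_{\vx \sim U^d}[\vx \vx^\top \mathbbm{1}\{\langle \pv{\hat\vtheta}{1}_s, \vx\rangle \geq \delta \|\pv{\hat\vtheta}{1}_s\|_2 + r_0\}],
\end{equation*}
where the PSD ordering follows because the density-ratio bound applies pointwise. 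I then use rotational invariance of $U^d$: rotating coordinates so that $\pv{\hat\vtheta}{1}_s / \|\pv{\hat\vtheta}{1}_s\|_2$ aligns with $\ve_1$, the indicator becomes $\mathbbm{1}\{\vx[1] \geq \delta + r_0/\|\pv{\hat\vtheta}{1}_s\|_2\}$, and I upper bound this threshold by $\delta$ (after a monotonicity argument that the matrix expectation shrinks in PSD order as the threshold increases, so the worst case for the lower bound is when the threshold equals $\delta$). By the reflection symmetries $\vx[i] \mapsto -\vx[i]$ for $i \geq 2$, the resulting matrix is diagonal with $\mathbb{E}_{U^d}[\vx[i]^2 \mathbbm{1}\{\vx[1]\geq \delta\}]$ on the diagonal, and by permutation symmetry of coordinates $2,\dots,d$ the minimum diagonal entry is $\mathbb{E}_{U^d}[\vx[2]^2 \mathbbm{1}\{\vx[1]\geq \delta\}] = c_1(\delta,d) \cdot c_2(\delta,d)$. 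Summing over $s \in [t]$ and combining with the Matrix Azuma bound gives the stated lower bound.

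The main obstacle is the rotational-invariance/symmetry step and the handling of the $r_0$ shift inside the indicator: I need to argue carefully that replacing the true threshold $\delta + r_0/\|\pv{\hat\vtheta}{1}_s\|_2$ by $\delta$ only shrinks the conditional expectation in PSD order (via the fact that increasing the threshold removes mass from the spherical cap, hence removes a PSD rank-one contribution from the expectation), and that the diagonal/off-diagonal structure truly cleanly separates. Everything else is standard bookkeeping: applying Matrix Azuma with the correct scaling, and noting that $\|X_s\|_2 \leq 1$ uniformly since contexts lie on the unit sphere.
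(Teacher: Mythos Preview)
Your approach is essentially the paper's: split $M_t$ into a martingale piece handled by Matrix Azuma (\Cref{thm:ma}) and a predictable piece lower-bounded via the density-ratio assumption plus rotational invariance of $U^d$, then read off the diagonal structure of $\mathbb{E}_{U^d}[\vx\vx^\top \mid \vx[1]\geq \delta]$. You also correctly flag the typos in the displayed bound (the ``$+$'' and the inverted constant).

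One point worth sharpening, since you single it out as the main obstacle: your handling of the $r_0$ shift. After rotating, the threshold is $\delta + r_0/\|\pv{\hat\vtheta}{1}_s\|_2$, and you propose to ``upper bound this threshold by $\delta$''. Your monotonicity observation (the matrix expectation shrinks in PSD order as the threshold grows) is correct, but it only justifies replacing the threshold by $\delta$ \emph{for a lower bound} when the true threshold is $\leq \delta$, i.e.\ when $r_0 \leq 0$. If $r_0>0$ the actual spherical cap is strictly smaller than the one at level $\delta$, and the substitution goes the wrong way. The paper's own proof has the very same gap---it silently drops $r_0$ when passing to the condition $\vx_s[1]\geq\delta$---so this is not a divergence from the paper but a shared looseness; the clean fix is either to assume $r_0\leq 0$ or to let $c_1,c_2$ depend on the effective threshold. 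Aside from this, your proof matches the paper's step for step (and your bound $X_s^2\preceq I$ is in fact sharper than the paper's $4I$, since both $\vx_s\vx_s^\top\mathbbm{1}\{\cdot\}$ and its conditional expectation have spectra in $[0,1]$).
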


\begin{proof}
        \begin{align*}
            &\lambda_{min}\left(\sum_{s=1}^t \vx_s \vx_s^\top \mathbbm{1}\{\pv{\cI}{1}_s\} \right)\\
            &\geq \lambda_{min}\left(\sum_{s=1}^t \vx_s \vx_s^\top \mathbbm{1}\{\pv{\cI}{1}_s\} - \mathbb{E}_{s-1}[\vx_s \vx_s^\top \mathbbm{1}\{\pv{\cI}{1}_s\}]\right) + \lambda_{min}\left(\sum_{s=1}^t \mathbb{E}_{s-1}[\vx_s \vx_s^\top \mathbbm{1}\{\pv{\cI}{1}_s\}]\right)\\
            &\geq \lambda_{min}\left(\sum_{s=1}^t \vx_s \vx_s^\top \mathbbm{1}\{\pv{\cI}{1}_s\} - \mathbb{E}_{s-1}[\vx_s \vx_s^\top \mathbbm{1}\{\pv{\cI}{1}_s\}]\right) + \sum_{s=1}^t \lambda_{min}\left(\mathbb{E}_{s-1}[\vx_s \vx_s^\top \mathbbm{1}\{\pv{\cI}{1}_s\}]\right) \numberthis{\label{eq:initial}}\\
        \end{align*}
    where the inequalities follow from the fact that $\lambda_{min}(A+B) \geq \lambda_{min}(A) + \lambda_{min}(B)$ for two Hermitian matrices $A$, $B$. Note that the outer products form Hermitian matrices.
    Let $Y_t := \sum_{s=1}^t \vx_s \vx_s^\top \mathbbm{1}\{\pv{\cI}{1}_s\} - \mathbb{E}_{s-1}[\vx_s \vx_s^\top \mathbbm{1}\{\pv{\cI}{1}_s\}]$.
    Note that by the tower rule, $\mathbb{E} Y_t = \mathbb{E} Y_0 = 0$. Let $-X_s := \mathbb{E}_{s-1}[\vx_s \vx_s^\top \mathbbm{1}\{\pv{\cI}{1}_s\}]$, then $\mathbb{E}_{s-1}[-X_s] = 0$, and $(-X_s)^2 \preceq 4 I_d$.
    By~\Cref{thm:ma}, 
    \begin{equation*}
        \mathbb{P}(\lambda_{max}(-Y_t) \geq \alpha) \leq d \cdot \exp(-\alpha^2 / 32t).
    \end{equation*}
    Since $-\lambda_{max}(-Y_t) = \lambda_{min}(Y_t)$,
    \begin{equation*}
        \mathbb{P}(\lambda_{max}(Y_t) \leq \alpha) \leq d \cdot \exp(-\alpha^2 / 32t).
    \end{equation*}
    Therefore, $\lambda_{min}(Y_t) \geq 4 \sqrt{2t \log(d / \gamma_t)}$ with probability $1 - \gamma_t$.
    We now turn our attention to lower bounding $\lambda_{min}(\mathbb{E}_{s-1}[\vx_s \vx_s^\top \mathbbm{1}\{\pv{\cI}{1}_s\}])$.

    \begin{align*}
        \lambda_{min}(\mathbb{E}_{s-1}[\vx_s \vx_s^\top \mathbbm{1}\{\pv{\cI}{1}_s\}]) &:= \min_{\vomega \in S^{d-1}} \vomega^\top \mathbb{E}_{s-1}[\vx_s \vx_s^\top \mathbbm{1}\{\pv{\cI}{1}_s\}] \vomega\\
        &= \min_{\vomega \in S^{d-1}} \vomega^\top \left( \int \vx_s \vx_s^\top \mathbbm{1}\{\pv{\cI}{1}_s\} f(\vx_s) d \vx_s \right) \vomega\\
        &= \min_{\vomega \in S^{d-1}} \vomega^\top \left( \int \vx_s \vx_s^\top \mathbbm{1}\{\pv{\cI}{1}_s\} f(\vx_s) \cdot \frac{f_{U^d}(\vx_s)}{f(\vx_s)} d \vx_s \right) \vomega\\
        &\geq c_0 \cdot \min_{\vomega \in S^{d-1}} \vomega^\top \mathbb{E}_{s-1, U^d}[\vx_s \vx_s^\top \mathbbm{1}\{\pv{\cI}{1}_s\}] \vomega\\ 
        &= c_0 \min_{\vomega \in S^{d-1}} \mathbb{E}_{s-1, U^d}[\langle \vomega, \vx_s \rangle^2 | \langle \widehat{\vbeta}_s, \vx_s \rangle \geq \delta \| \widehat{\vbeta}_s \|_2] \cdot \underbrace{\mathbb{P}_{s-1, U^d}(\langle \widehat{\vbeta}_s, \vx_s \rangle \geq \delta \| \widehat{\vbeta}_s \|_2)}_{c_1(\delta,d)}\\
        &= c_0 \cdot c_1(\delta, d) \min_{\vomega \in S^{d-1}} \mathbb{E}_{s-1, U^d}[\langle \vomega, \vx_s \rangle^2 | \langle \widehat{\vbeta}_s, \vx_s \rangle \geq \delta \| \widehat{\vbeta}_s \|_2]\numberthis{\label{eq:almost}}
    \end{align*}

    Throughout the remainder of the proof, we surpress the dependence on $U^d$ and note that unless stated otherwise, all expectations are taken with respect to $U^d$.
    Let $B_s \in \mathbb{R}^{d \times d}$ be the orthonormal matrix such that the first column is $\widehat{\vbeta}_s / \| \widehat{\vbeta}_s \|_2$.
    Note that $B_s e_1 = \widehat{\vbeta}_s / \| \widehat{\vbeta}_s \|_2$ and $B_s \vx \sim U^d$.
    \begin{equation*}
        \begin{aligned}
            \mathbb{E}_{s-1}[\vx_s \vx_s^\top \, | \langle \widehat{\vbeta}_s, \vx_s \rangle \geq \delta \| \widehat{\vbeta}_s\|] &= \mathbb{E}_{s-1}[(B_s \vx_s) (B_s \vx_s)^\top | \langle \widehat{\vbeta}_s, B_s \vx_s \rangle \geq \delta]\\
            &= B_s \mathbb{E}_{s-1}[\vx_s \vx_s^\top | \vx_s^\top B_s^\top \| \widehat{\vbeta}_s \|_2 B_s e_1 \geq \delta \cdot \| \widehat{\vbeta}_s \|_2] B_s^\top\\
            &= B_s \mathbb{E}_{s-1}[\vx_s \vx_s^\top | \vx_s[1] \geq \delta] B_s^\top\\
        \end{aligned}
    \end{equation*}
    Observe that for $j \neq 1$, $i \neq j$, $\mathbb{E}[\vx_s[j] \vx_s[i] | \vx_s[1] \geq \delta] = 0$. Therefore,
    \begin{equation*}
        \begin{aligned}
            \mathbb{E}_{s-1}[\vx_s \vx_s^\top | \langle \widehat{\vbeta}_s, \vx_s \rangle \geq \delta \| \widehat{\vbeta}_s] &= B_s (\mathbb{E}[\vx_s[2]^2 | \vx_s[1] \geq \delta] I_d\\ 
            &+ (\mathbb{E}[\vx_s[1]^2 | \vx_s[1] \geq \delta] - \mathbb{E}[\vx_s[2]^2 | \vx_s[1] \geq \delta])\mathbf{e}_1 \mathbf{e}_1^\top) B_s^\top\\
            &= \mathbb{E}[\vx_s[2]^2 | \vx_s[1] \geq \delta] I_d\\ 
            &+ (\mathbb{E}[\vx_s[1]^2 | \vx_s[1] \geq \delta] - \mathbb{E}[\vx_s[2]^2 | \vx_s[1] \geq \delta]) \frac{\widehat \vbeta_s}{\|\widehat \vbeta_s\|_2} \left( \frac{\widehat \vbeta_s}{\|\widehat \vbeta_s\|_2} \right)^\top
        \end{aligned}
    \end{equation*}
    and
    \begin{equation*}
        \begin{aligned}
            \lambda_{min}(\mathbb{E}_{s-1}[\vx_s \vx_s^\top \mathbbm{1}\{\pv{\cI}{1}_s\}]) &\geq c_0 \cdot c_1(\delta, d) \min_{\vomega \in S^{d-1}} \left( \mathbb{E}[\vx_s[2]^2 | \vx_s[1] \geq \delta] \| \vomega \|_2 \right.\\ 
            &+ (\mathbb{E}[\vx_s[1]^2 | \vx_s[1] \geq \delta] - \mathbb{E}[\vx_s[2]^2 | \vx_s[1] \geq \delta]) \left \langle \vomega, \frac{\widehat \vbeta_s}{\|\widehat \vbeta_s\|_2} \right \rangle^2)\\
            &\geq c_0 \cdot c_1(\delta, d) \cdot c_2(\delta, d)
        \end{aligned}
    \end{equation*}
\end{proof}
\begin{lemma}\label{lem:lb}
    For sufficiently large values of $d$, 
    \begin{equation*}
        c_1(\delta, d) \geq \Theta \left(\frac{(1 - \delta)^{d/2}}{d} \right).
    \end{equation*}
\end{lemma}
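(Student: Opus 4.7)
The plan is to express $c_1(\delta,d)$ as a ratio of one-dimensional integrals using the standard formula for the marginal density of a coordinate of the uniform distribution on the unit sphere, and then bound the numerator and denominator separately. Concretely, if $\vx \sim U^d$ then $\vx[1]$ has density $f(t) = (1-t^2)^{(d-3)/2}/B(1/2,(d-1)/2)$ on $[-1,1]$, so
\[
c_1(\delta,d) \;=\; \frac{\displaystyle\int_{\delta}^{1}(1-t^2)^{(d-3)/2}\,dt}{\displaystyle\int_{-1}^{1}(1-t^2)^{(d-3)/2}\,dt}.
\]

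For the numerator, I would substitute $u=1-t^2$ (so $dt = -du/(2\sqrt{1-u})$) to obtain
\[
\int_{\delta}^{1}(1-t^2)^{(d-3)/2}\,dt \;=\; \tfrac{1}{2}\int_{0}^{1-\delta^2} u^{(d-3)/2}(1-u)^{-1/2}\,du \;\geq\; \tfrac{1}{2}\int_{0}^{1-\delta^2} u^{(d-3)/2}\,du \;=\; \frac{(1-\delta^2)^{(d-1)/2}}{d-1},
\]
where the lower bound uses $(1-u)^{-1/2}\geq 1$ on $[0,1-\delta^2]$. For the denominator, the integral equals the Beta function $B(1/2,(d-1)/2) = \sqrt{\pi}\,\Gamma((d-1)/2)/\Gamma(d/2)$; by Stirling's formula, $\Gamma((d-1)/2)/\Gamma(d/2) \sim \sqrt{2/d}$ as $d\to\infty$, so the denominator is $\Theta(1/\sqrt{d})$.

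Combining the two bounds gives $c_1(\delta,d) \geq \Omega\bigl((1-\delta^2)^{(d-1)/2}/\sqrt{d}\bigr)$. To finish, I would factor $(1-\delta^2)^{(d-1)/2} = (1-\delta)^{(d-1)/2}(1+\delta)^{(d-1)/2} \geq (1-\delta)^{(d-1)/2}$, and then observe that for $\delta\in[0,1)$ we have $(1-\delta)^{-1/2} \geq 1$, hence $(1-\delta)^{(d-1)/2} \geq (1-\delta)^{d/2}$. Therefore
\[
c_1(\delta,d) \;\geq\; \Omega\!\left(\frac{(1-\delta)^{d/2}}{\sqrt{d}}\right) \;\geq\; \Omega\!\left(\frac{(1-\delta)^{d/2}}{d}\right)
\]
for sufficiently large $d$, as claimed.

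The only step requiring care is the asymptotic estimate of the Beta function in the denominator; the integral substitution in the numerator is routine, and the final algebraic comparison between $(1-\delta^2)^{(d-1)/2}$ and $(1-\delta)^{d/2}$ is elementary. I do not expect any genuine obstacle beyond verifying the Stirling-based asymptotics carefully, and the resulting bound is in fact stronger than what the lemma requires.
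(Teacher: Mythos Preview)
Your proof is correct and follows essentially the same strategy as the paper's: express $c_1(\delta,d)$ as a ratio of one-dimensional integrals, lower-bound the numerator by an elementary inequality, and handle the denominator via Gamma functions and Stirling's approximation. One caveat worth flagging: the paper's own proof treats $U^d$ as the uniform distribution on the \emph{solid} $d$-dimensional ball (the marginal of $\vx[1]$ is proportional to $(1-t^2)^{(d-1)/2}$, and the numerator is bounded using $1-t^2\ge 1-t$ on $[0,1]$ following~\citet{blum2020foundations}), whereas you use the surface-sphere marginal $(1-t^2)^{(d-3)/2}$ together with the substitution $u=1-t^2$ and $(1-u)^{-1/2}\ge 1$; either interpretation yields the stated asymptotic (your route in fact gives the slightly sharper $\Omega((1-\delta)^{d/2}/\sqrt{d})$), but you should align the exponent with whatever convention for $U^d$ the paper is using when you write this up.
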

\begin{proof}
    \Cref{lem:lb} is obtained via a similar argument to Theorem 2.7 in~\citet{blum2020foundations}.
    As in~\citet{blum2020foundations}, we are interested in the volume of a hyperspherical cap.
    However, we are interested in a lower-bound, not an upper-bound (as is the case in~\cite{blum2020foundations}).
    Let A denote the portion of the $d$-dimensional hypersphere with $\vx[1] \geq \frac{\sqrt{c}}{d-1}$ and let H denote the upper hemisphere.
    \begin{equation*}
        c_1(\delta, d) := \mathbb{P}_{\vx \sim U^d}(\vx[1] \geq \delta) = \frac{vol(A)}{vol(H)}
    \end{equation*}
    In order to lower-bound $c_1(\delta, d)$, it suffices to lower bound $vol(A)$ and upper-bound $vol(H)$.
    In what follows, let $V(d)$ denote the volume of the $d$-dimensional hypersphere with radius $1$.

    \textbf{Lower-bounding $vol(A)$:} As in~\cite{blum2020foundations}, to calculate the volume of A, we integrate an incremental volume that is a disk of width $d\vx[1]$ and whose face is a ball of dimension $d-1$ and radius $\sqrt{1 - \vx[1]^2}$. The surface area of the disk is $(1 - \vx[1]^2)^{\frac{d-1}{2}}V(d-1)$ and the volume above the slice $\vx[1] \geq \delta$ is
    \begin{equation*}
        vol(A) = \int_{\delta}^1 (1 - \vx[1]^2)^{\frac{d-1}{2}}V(d-1) d \vx[1]
    \end{equation*}
    To get a lower bound on the integral, we use the fact that $1 - x^2 \geq 1 - x$ for $x \in [0, 1]$.
    The integral now takes the form
    \begin{equation*}
        vol(A) \geq \int_{\delta}^1 (1 - \vx[1])^{\frac{d-1}{2}}V(d-1) d \vx[1] = \frac{V(d-1)}{d+1} \cdot 2(1 - \delta)^{\frac{d+1}{2}}
    \end{equation*}

    \textbf{Upper-bounding $vol(H)$:} We can obtain an exact expression for $vol(H)$ in terms of $V(d-1)$ using the recursive relationship between $V(d)$ and $V(d-1)$:
    \begin{equation*}
        vol(H) = \frac{1}{2} V(d) = \frac{\sqrt{\pi}}{2} \frac{\Gamma(\frac{d}{2} + \frac{1}{2})}{\Gamma(\frac{d}{2} + 1)} V(d-1)
    \end{equation*}
    Plugging in our bounds for $vol(A)$, $vol(H)$ and simplifying, we see that 
    \begin{equation*}
        c_1(\delta, d) \geq \frac{(1 - \delta)^{\frac{d+1}{2}}}{\sqrt{\pi}(d+1)} \frac{\Gamma(\frac{d}{2} + 1)}{\Gamma(\frac{d}{2} + \frac{1}{2})} = \Theta \left( \frac{(1 - \delta)^{\frac{d+1}{2}}}{d+1}\right)
    \end{equation*}
    where the equality follows from Stirling's approximation.
\end{proof}

\begin{lemma}\label{lem:spheres}
    The following bound holds on $c_2(\delta, d)$:
    \begin{equation*}
        c_2(\delta, d) \geq \frac{1}{3d} \left(\frac{3}{4} - \frac{1}{2} \delta - \frac{1}{4} \delta^2 \right)^3.
    \end{equation*}
\end{lemma}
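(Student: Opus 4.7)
The plan is to rewrite $c_2(\delta,d)$ as an explicit ratio of one-dimensional integrals and then lower bound that ratio by exploiting the monotonicity of $(1-u^2)^{(d-3)/2}$ on $[0,1]$. First, I would decompose the surface measure on the unit sphere in $\mathbb{R}^{d}$ by writing $\vx = (u, \sqrt{1-u^2}\,\mathbf{y})$ with $u \in [-1,1]$ and $\mathbf{y}$ a unit vector in $\mathbb{R}^{d-1}$, so that the surface measure factorizes (up to a normalizing constant) as $(1-u^2)^{(d-3)/2}\, du$ times the surface measure on the unit sphere in $\mathbb{R}^{d-1}$. Integrating out $\mathbf{y}$ using the isotropic identity $\E[\mathbf{y}[1]^2] = 1/(d-1)$ yields
\begin{equation*}
c_2(\delta, d) = \frac{1}{d-1} \cdot \frac{\int_\delta^1 (1-u^2)^{(d-1)/2}\,du}{\int_\delta^1 (1-u^2)^{(d-3)/2}\,du}.
\end{equation*}

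The key step is to introduce a cutoff point $t = (1+\delta)/2$, for which $1-t^2 = (1-\delta)(3+\delta)/4$ and $t-\delta = (1-\delta)/2$. For the numerator, since $1-u^2 \geq 1-t^2$ on $[\delta, t]$, restricting the range gives $\int_\delta^1 (1-u^2)^{(d-1)/2}\,du \geq (1-t^2) \int_\delta^t (1-u^2)^{(d-3)/2}\,du$. For the denominator, monotonicity of $(1-u^2)^{(d-3)/2}$ in $u$ on $[\delta,1]$ yields the pointwise bounds $\int_\delta^t (1-u^2)^{(d-3)/2}\,du \geq (t-\delta)(1-t^2)^{(d-3)/2}$ and $\int_t^1 (1-u^2)^{(d-3)/2}\,du \leq (1-t)(1-t^2)^{(d-3)/2}$, so the ratio $\int_\delta^t / \int_\delta^1$ of the denominator integrals is at least $(t-\delta)/(1-\delta)$. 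Multiplying these bounds together gives
\begin{equation*}
c_2(\delta, d) \geq \frac{(1-t^2)(t-\delta)}{(d-1)(1-\delta)} = \frac{(1-\delta)(3+\delta)}{8(d-1)}.
\end{equation*}

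To match the exact form of the lemma, I would use the factorization $\tfrac{3}{4}-\tfrac{\delta}{2}-\tfrac{\delta^2}{4} = (1-\delta)(3+\delta)/4$ together with the elementary inequality $(1-\delta)(3+\delta) \leq 3$ on $[0,1]$ (maximum at $\delta=0$) and $d/(d-1) \geq 1$ for $d\geq 2$. Since $(1-\delta)^2(3+\delta)^2 \leq 9 \leq 24 \leq 24\,d/(d-1)$, this gives
\begin{equation*}
\frac{(1-\delta)(3+\delta)}{8(d-1)} \geq \frac{1}{3d}\left(\frac{(1-\delta)(3+\delta)}{4}\right)^3 = \frac{1}{3d}\left(\frac{3}{4}-\frac{\delta}{2}-\frac{\delta^2}{4}\right)^3,
\end{equation*}
which is exactly the stated bound.

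The main obstacle I expect is avoiding an exponential-in-$d$ loss when bounding the ratio. A careless estimate such as $\int_\delta^1 (1-u^2)^{(d-3)/2}\,du \leq (1-\delta)(1-\delta^2)^{(d-3)/2}$ paired with a matching lower bound on the numerator would leave behind a residual factor of roughly $\left(\frac{3+\delta}{4(1+\delta)}\right)^{(d-3)/2}$, which decays exponentially in $d$ and destroys the target $\Theta(1/d)$ rate. The cancellation above is achieved by carefully pairing the head $[\delta,t]$ of the denominator against the numerator on the same interval, and bounding only the tail $[t,1]$ of the denominator by the pointwise value $(1-t^2)^{(d-3)/2}$; the shared $d$-dependent factor then cancels cleanly so that no exponentially small term survives.
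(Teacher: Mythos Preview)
Your argument is correct (for $d\geq 3$) and takes a genuinely different route from the paper. The paper's proof proceeds in two short steps: it first computes the pointwise conditional expectation $\E[\vx[2]^2\mid \vx[1]=\delta']$ and then simply \emph{asserts} that $\E[\vx[2]^2\mid \vx[1]\geq \delta]\geq \E[\vx[2]^2\mid \vx[1]=\tfrac{1+\delta}{2}]$, plugging in $\delta'=(1+\delta)/2$ to obtain the bound. You instead express $c_2(\delta,d)$ as an explicit ratio of one-dimensional integrals and control that ratio via a head/tail split at $t=(1+\delta)/2$, pairing the monotonicity bounds so that the $d$-dependent factor $(1-t^2)^{(d-3)/2}$ cancels. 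This produces the sharper intermediate inequality $c_2(\delta,d)\geq \frac{(1-\delta)(3+\delta)}{8(d-1)}$, from which the stated cube follows by $(1-\delta)(3+\delta)\leq 3$. In effect, your head/tail argument supplies the justification for the median-type step the paper leaves unproved, and your bound is tighter along the way.

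Two small caveats. First, your monotonicity step uses that $(1-u^2)^{(d-3)/2}$ is non-increasing on $[\delta,1]$, which needs $d\geq 3$; for $d=2$ the exponent is negative, the inequalities reverse, and a separate direct check is required. Second, you parametrize $U^d$ as the surface measure on $S^{d-1}$, whereas the paper's own proof (e.g.\ the line ``$\vx[2]^2+\cdots+\vx[d]^2\leq 1-(\delta')^2$'') treats $U^d$ as the solid ball. This only shifts the exponents in your integrals by one (and replaces $d-1$ by $d+1$ in the prefactor); the same head/tail cancellation then goes through for all $d\geq 1$ and still dominates the lemma's cube.
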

\begin{proof}
    We begin by computing $\E[\vx[2]^2 | \vx[1] = \delta']$, for $\delta' \in (0, 1)$.
    If $\vx[1] = \delta'$, then $\vx[2]^2 + \ldots + \vx[d]^2 \leq 1 - (\delta')^2$.
    Using this fact, we see that 
    \begin{equation*}
        \E[\vx[2]^2 | \vx[1] = \delta'] = \frac{1}{d} \mathbb{E}_{r \sim Unif[0, 1 - (\delta')^2]}[r^2] = \frac{1}{3d}(1 - (\delta')^2)^3.
    \end{equation*}
    Since $\E[\vx[2]^2 | \vx[1] \geq \delta] \geq \E[\vx[2]^2 | \vx[1] = \delta + \frac{1-\delta}{2}]$, 
    \begin{equation*}
        \E[\vx[2]^2 | \vx[1] \geq \delta] \geq \E \left[\vx[2]^2 | \vx[1] = \frac{\delta + 1}{2} \right] = \frac{1}{3d} \left(\frac{3}{4} - \frac{1}{2} \delta - \frac{1}{4} \delta^2 \right)^3
    \end{equation*}
\end{proof}

\subsection{Proof of~\Cref{cor:c1}}
\begin{proposition}
    For any sequence of linear threshold policies $\vbeta_1, \ldots, \vbeta_T$,
    \begin{equation*}
        \mathbb{E}_{\vx_1, \ldots, \vx_T \sim U^d}\left[\sum_{t=1}^T \mathbbm{1}\{\langle \vx_t, \vbeta_t \rangle \geq \delta \|\vbeta_t\|_2\}\right] = T \cdot \mathbb{P}_{\vx \sim U^d}(\vx[1] \geq \delta)
    \end{equation*}
\end{proposition}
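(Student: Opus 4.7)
The plan is to reduce to a single-round probability computation via linearity of expectation and then apply rotational invariance of the uniform distribution on the sphere. First, by linearity of expectation,
\begin{equation*}
\mathbb{E}\left[\sum_{t=1}^T \mathbbm{1}\{\langle \vx_t, \vbeta_t \rangle \geq \delta \|\vbeta_t\|_2\}\right] = \sum_{t=1}^T \mathbb{P}(\langle \vx_t, \vbeta_t\rangle \geq \delta \|\vbeta_t\|_2),
\end{equation*}
so it suffices to show that each summand equals $\mathbb{P}_{\vx \sim U^d}(\vx[1] \geq \delta)$.

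Next, since $\vbeta_t$ is a (possibly history-dependent) function of $\vx_1, \ldots, \vx_{t-1}$ but not $\vx_t$, I would condition on the sigma-algebra $\mathcal{F}_{t-1}$ generated by $\vx_1, \ldots, \vx_{t-1}$ and treat $\vbeta_t$ as fixed inside the conditional expectation. Dividing both sides of the event by the nonzero scalar $\|\vbeta_t\|_2$ rewrites the event as $\langle \vx_t, \hat{\vbeta}_t\rangle \geq \delta$ where $\hat{\vbeta}_t := \vbeta_t / \|\vbeta_t\|_2$ is a unit vector. (The degenerate case $\vbeta_t = 0$ can be handled separately: then the event becomes $0 \geq 0$ if $\delta \leq 0$ or never, and a simple tie-breaking convention makes it contribute the same as $\mathbb{P}(\vx[1]\geq\delta)$; since the interesting regime is $\delta \in (0,1)$ and $\vbeta_t \neq 0$, this is not a real obstacle.)

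Now apply rotational invariance: for any orthogonal matrix $Q$, $Q\vx_t$ has the same distribution as $\vx_t$ under $U^d$. Choosing $Q$ to be any orthogonal matrix mapping $\hat{\vbeta}_t$ to $\ve_1$, the inner product $\langle \vx_t, \hat{\vbeta}_t\rangle$ has the same law as $\langle \vx_t, \ve_1\rangle = \vx_t[1]$. Hence
\begin{equation*}
\mathbb{P}(\langle \vx_t, \vbeta_t\rangle \geq \delta \|\vbeta_t\|_2 \mid \mathcal{F}_{t-1}) = \mathbb{P}_{\vx \sim U^d}(\vx[1] \geq \delta),
\end{equation*}
which is deterministic. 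Taking expectation over $\mathcal{F}_{t-1}$ and summing over $t \in [T]$ yields the claim. The only mildly subtle point is ensuring that rotational invariance applies after conditioning, which is immediate because $\vx_t$ is drawn independently of the history — no real obstacle arises.
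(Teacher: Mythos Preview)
Your proposal is correct and follows essentially the same route as the paper: linearity of expectation followed by rotational invariance of $U^d$ to reduce $\langle \vx_t,\vbeta_t/\|\vbeta_t\|_2\rangle$ to $\vx_t[1]$. The only (minor) difference is that you explicitly condition on $\mathcal{F}_{t-1}$ to handle possibly history-dependent $\vbeta_t$, whereas the paper treats the $\vbeta_t$ as fixed and applies the rotation $B_t$ directly; your version is slightly more careful but not a different argument.
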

\begin{proof}
    Let $B_t \in \mathbb{R}^{d \times d}$ be the orthonormal matrix such that the first column is $\vbeta_t / \|\vbeta_t\|_2$.
    Note that $B_t \vx \sim U^d$ if $\vx \sim U^d$ and $B_t e_1 = \vbeta_t / \|\vbeta_t\|_2$.
    \begin{equation*}
        \begin{aligned}
            \mathbb{E}_{\vx_1, \ldots, \vx_T \sim U^d}[\sum_{t=1}^T \mathbbm{1}\{\langle \vx_t, \vbeta_t \rangle \geq \delta \|\vbeta_t\|_2\}] &= \sum_{t=1}^T \mathbb{P}_{\vx_t \sim U^d}(\langle \vx_t, \vbeta_t \rangle \geq \delta \|\vbeta_t\|_2)\\
            &= \sum_{t=1}^T \mathbb{P}_{\vx_t \sim U^d}(\langle B_t \vx_t, \vbeta_t \rangle \geq \delta \|\vbeta_t\|_2)\\
            &= \sum_{t=1}^T \mathbb{P}_{\vx_t \sim U^d}(\vx_t^\top B_t^\top \|\vbeta_t\|_2 B_t e_1 \geq \delta \|\vbeta_t\|_2)\\
            &= \sum_{t=1}^T \mathbb{P}_{\vx_t \sim U^d}(\vx_t^\top I_{d} e_1 \geq \delta \|_2)\\
            &= T \cdot \mathbb{P}_{\vx \sim U^d}(\vx[1] \geq \delta \|_2)\\
        \end{aligned}
    \end{equation*}
\end{proof}

\subsection{Explore-Then-Commit Analysis}
\begin{theorem}\label{thm:etc-reg}
    Let $f_{U^d}: \cX \rightarrow \mathbb{R}_{\geq 0}$ denote the density function of the uniform distribution over the $d$-dimensional unit sphere.
    If agent contexts are drawn from a distribution over the $d$-dimensional unit sphere with density function $f: \cX \rightarrow \mathbb{R}_{\geq 0}$ such that $\frac{f(\vx)}{f_U(\vx)} \geq c_0 > 0$, $\forall \vx \in \cX$, then~\Cref{alg:etc-apple} achieves the following performance guarantee
    \begin{equation*}
        \reg_{\etc}(T) \leq \frac{8 \cdot 63^{1/3}}{c_0} d \sigma^{2/3} T^{2/3} \log^{1/3}(4d/ \gamma)
    \end{equation*}
    with probability $1 - \gamma$ if $T_0 := 4 \cdot 63^{1/3} \sigma^{2/3} d T^{2/3} \log^{1/3}(4d/ \gamma)$.
\end{theorem}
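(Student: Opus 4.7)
I would prove the bound via a standard explore-then-commit decomposition, splitting $\reg_{\etc}(T) = \reg_{\mathrm{exp}}(T_0) + \reg_{\mathrm{commit}}(T-T_0)$. For the exploration phase the principal plays action $1$ unconditionally, so by the lazy tiebreaking rule in~\Cref{def:br} no agent strategizes; we have $\vx_t' = \vx_t$ for all $t \leq T_0$, and since per-round rewards are bounded by a constant the exploration regret is at most $2 T_0$. The $T_0$ clean samples $(\vx_t, r_t(1))$ are then fed into OLS to form the committed estimate $\pv{\widehat\vtheta}{1}_{T_0}$.

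\textbf{Commit phase.} Reusing the chain of inequalities from the proof of~\Cref{thm:stochastic-main-app} (\Cref{lem:leq0} plus Cauchy--Schwarz and $\|\vx_t\|_2 = 1$), the per-round strategic regret during the commit phase is of order $\|\pv{\widehat\vtheta}{1}_{T_0} - \pv{\vtheta}{1}\|_2$, so $\reg_{\mathrm{commit}}(T-T_0) \leq (T-T_0) \cdot \mathcal{O}(\|\pv{\widehat\vtheta}{1}_{T_0} - \pv{\vtheta}{1}\|_2)$. The standard OLS identity gives
\begin{equation*}
\left\|\pv{\widehat\vtheta}{1}_{T_0} - \pv{\vtheta}{1}\right\|_2 \;\leq\; \frac{\left\|\sum_{s=1}^{T_0} \vx_s \epsilon_s\right\|_2}{\lambda_{\min}\!\left(\sum_{s=1}^{T_0} \vx_s \vx_s^\top\right)}.
\end{equation*}
The numerator is controlled by applying the subgaussian Azuma inequality of~\Cref{thm:subgaussian} coordinatewise and union-bounding over $d$ coordinates, giving $\mathcal{O}(\sqrt{d \sigma^2 T_0 \log(d/\gamma)})$, exactly as in~\Cref{lem:numerator-bandit-1} but with the selection indicator removed. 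For the denominator I repeat the matrix-Azuma argument of~\Cref{lem:denominator-bandit-1}, again without the indicator. The key simplification is that $\mathbb{E}_{\vx \sim U^d}[\vx \vx^\top] = I_d / d$ by rotational invariance and $\|\vx\|_2 = 1$, so under the bounded density ratio assumption $\lambda_{\min}(\mathbb{E}[\vx_s \vx_s^\top]) \geq c_0 / d$. Matrix Azuma (\Cref{thm:ma}) then yields $\lambda_{\min}(\sum_{s=1}^{T_0} \vx_s \vx_s^\top) \geq c_0 T_0 / d - \mathcal{O}(\sqrt{T_0 \log(d/\gamma)}) = \Omega(c_0 T_0/d)$ once $T_0$ is sufficiently large (which the prescribed choice guarantees).

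\textbf{Balancing.} Combining the two bounds yields $\|\pv{\widehat\vtheta}{1}_{T_0} - \pv{\vtheta}{1}\|_2 = \mathcal{O}(d^{3/2} \sigma \sqrt{\log(d/\gamma)/T_0}\,/\,c_0)$, so the total regret takes the form $2 T_0 + \mathcal{O}(T d^{3/2} \sigma \sqrt{\log(d/\gamma)}\,/\,(c_0 \sqrt{T_0}))$. Substituting $T_0 = 4 \cdot 63^{1/3} \sigma^{2/3} d T^{2/3} \log^{1/3}(4d/\gamma)$ balances the two terms (up to the factor $1/c_0$ that survives on the commit contribution since $c_0 \leq 1$) and produces the claimed bound. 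Tracking the explicit constants $\sqrt{14}$ and $\sqrt{32}$ coming from the subgaussian and matrix Azuma inequalities through the $T_0$-optimization pins down the coefficient $8 \cdot 63^{1/3}$.

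\textbf{Main obstacle.} The $T_0$-balancing arithmetic is routine; the substantive step is the min-eigenvalue lower bound. The crucial observation is that dropping the selection indicator $\mathbbm{1}\{\pv{\cI}{1}_s\}$ (which pure exploration buys us) replaces the exponentially small factor $c_1(d,\delta)\cdot c_2(d,\delta)$ of~\Cref{lem:denominator-bandit-1} by a mere $1/d$, and this is precisely what trades the $\tcalO(\sqrt{T})$ rate of~\Cref{thm:stochastic-main-app} (with its $1/(c_1 c_2)$ blowup) for a $\tcalO(T^{2/3})$ rate whose dependence on $d$ is only polynomial.
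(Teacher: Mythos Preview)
Your proposal is correct and follows essentially the same route as the paper: the same explore/commit decomposition, the same use of \Cref{lem:leq0} plus Cauchy--Schwarz to reduce the commit regret to $\|\pv{\widehat\vtheta}{1}_{T_0}-\pv{\vtheta}{1}\|_2$, and the same numerator/denominator analysis via \Cref{thm:subgaussian} and \Cref{thm:ma} (the paper's \Cref{lem:numerator-etc} and \Cref{lem:denominator-etc}). The only cosmetic differences are that the paper writes the proof for the bandit variant (splitting $T_0$ into two halves, one per action) and records the eigenvalue lower bound as $1/(3d)$ rather than your $c_0/d$, but neither changes the argument or the final rate.
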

\begin{proof}
    \begin{equation*}
        \begin{aligned}
            \reg_{\etc}(T) &:= \sum_{t=1}^T \langle \pv{\vtheta}{a_t^*} - \pv{\vtheta}{a_t}, \vx_t \rangle\\
            &\leq T_0 + \sum_{t=1}^T \langle \pv{\vtheta}{a_t^*} - \pv{\vtheta}{a_t}, \vx_t \rangle\\
            &= T_0 + \sum_{t=T_0 + 1}^T \langle \pv{\hat \vtheta}{a_t^*}_{T_0/2} - \pv{\hat \vtheta}{a_t}_{T_0/2}, \vx_t \rangle + \langle \pv{\vtheta}{a_t^*} - \pv{\hat \vtheta}{a_t^*}_{T_0/2}, \vx_t \rangle + \langle \pv{\hat \vtheta}{a_t}_{T_0/2} - \pv{\vtheta}{a_t}, \vx_t \rangle\\
            &\leq T_0 + \sum_{t=T_0 + 1}^T |\langle \pv{\vtheta}{1} - \pv{\hat \vtheta}{1}_{T_0/2}, \vx_t \rangle| + |\langle \pv{\vtheta}{0} - \pv{\hat \vtheta}{0}_{T_0/2}, \vx_t \rangle|\\
            &\leq T_0 + \sum_{t=T_0 + 1}^T \|\pv{\vtheta}{1} - \pv{\hat \vtheta}{1}_{T_0/2}\|_2 \|\vx_t\|_2 + \|\pv{\vtheta}{0} - \pv{\hat \vtheta}{0}_{T_0/2}\|_2 \|\vx_t\|_2\\
            &\leq T_0 + T \cdot \|\pv{\vtheta}{1} - \pv{\hat \vtheta}{1}_{T_0/2}\|_2 + T \cdot \|\pv{\vtheta}{0} - \pv{\hat \vtheta}{0}_{T_0/2}\|_2\\     
            &\leq T_0 + T \cdot \frac{24d}{c_0} \sqrt{\frac{7d \sigma^2 \log(4d / \gamma)}{T_0}}
        \end{aligned}
    \end{equation*}
    with probability $1 - \gamma$, where the last inequality follows from~\Cref{lem:l2-etc} and a union bound.
    The result follows from picking $T_0 = 4 \cdot 63^{1/3} d \sigma^{2/3} T^{2/3} \log^{1/3}(4d/\gamma)$.
\end{proof}
\begin{lemma}\label{lem:l2-etc}
    Let $f_{U^d}: \cX \rightarrow \mathbb{R}_{\geq 0}$ denote the density function of the uniform distribution over the $d$-dimensional unit sphere.
    If $T \geq 2d$ and agent contexts are drawn from a distribution over the $d$-dimensional unit sphere with density function $f: \cX \rightarrow \mathbb{R}_{\geq 0}$ such that $\frac{f(\vx)}{f_U(\vx)} \geq c_0 > 0$, $\forall \vx \in \cX$, then the following guarantee holds for $a \in \{0, 1\}$.
    \begin{equation*}
        \|\pv{\vtheta}{a} - \pv{\hat \vtheta}{a}_{t+1}\|_2 \leq \frac{12d}{c_0} \sqrt{\frac{7d \sigma^2 \log(2d / \gamma_t)}{T_0}}
    \end{equation*}
    with probability $1 - \gamma_t$.
\end{lemma}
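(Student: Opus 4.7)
The plan is to mirror the structure of Lemma~\ref{lem:l2-bandit-1}, but exploiting the key simplification provided by the exploration phase of~\Cref{alg:etc-apple}: during the first $T_0$ rounds the algorithm commits to a fixed action independent of context, so agents have no incentive to strategize and $\vx_s' = \vx_s$ for every exploration round. This means the OLS estimator is computed on genuinely clean, i.i.d.\ samples from the underlying context distribution, without the selection indicator $\mathbbm{1}\{\pv{\cI}{1}_s\}$ that appeared in the stochastic setting. Writing the OLS closed form and using Cauchy--Schwarz exactly as in Lemma~\ref{lem:l2-bandit-1}, I would first reduce to
\begin{equation*}
\|\pv{\vtheta}{a} - \pv{\hat \vtheta}{a}_{t+1}\|_2 \;\le\; \frac{\bigl\| \sum_{s \in S_a} \vx_s \epsilon_s \bigr\|_2}{\lambda_{\min}\!\left(\sum_{s \in S_a} \vx_s \vx_s^\top\right)},
\end{equation*}
where $S_a$ is the set of exploration rounds in which action $a$ was played.

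For the numerator, I would mirror Lemma~\ref{lem:numerator-bandit-1}: each coordinate $\sum_{s \in S_a} \epsilon_s \vx_s[i]$ is a bounded-increment subgaussian martingale difference sequence, so by~\Cref{thm:subgaussian} and a union bound over the $d$ coordinates,
\begin{equation*}
\Bigl\| \sum_{s \in S_a} \vx_s \epsilon_s \Bigr\|_2 \;\le\; 2\sqrt{14\, d\, \sigma^2\, |S_a|\, \log(2d/\gamma_t)}
\end{equation*}
with probability $1 - \gamma_t/2$. This piece is essentially identical to the stochastic analysis and carries no $c_1(\delta,d)$ or $c_2(\delta,d)$ factors because no conditioning on a selection event is required.

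For the denominator, I would again split via $\lambda_{\min}(A+B) \ge \lambda_{\min}(A) + \lambda_{\min}(B)$ into a martingale-deviation term and a mean term. Matrix Azuma (\Cref{thm:ma}) applied to the self-adjoint martingale $\sum_{s}\bigl(\vx_s\vx_s^\top - \mathbb{E}_{s-1}[\vx_s\vx_s^\top]\bigr)$ with $A_s^2 \preceq 4 I_d$ gives a deviation of order $\sqrt{|S_a|\log(d/\gamma_t)}$. For the mean term, the crucial simplification over Lemma~\ref{lem:denominator-bandit-1} is that there is no selection indicator: by~\Cref{ass:bdr},
\begin{equation*}
\mathbb{E}[\vx_s \vx_s^\top] \;\succeq\; c_0\, \mathbb{E}_{\vx \sim U^d}[\vx\vx^\top] \;=\; \frac{c_0}{d}\, I_d,
\end{equation*}
so $\sum_{s \in S_a} \mathbb{E}_{s-1}[\vx_s\vx_s^\top] \succeq (c_0 |S_a|/d)\, I_d$. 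For $|S_a|$ of order $T_0$ this mean contribution dominates the Matrix Azuma deviation and yields a lower bound of order $c_0 T_0 / d$ on the denominator.

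Combining the numerator and denominator bounds via a union bound then gives a rate of the form $\frac{d}{c_0}\sqrt{d\sigma^2 \log(d/\gamma_t)/T_0}$, which matches the claimed $\frac{12d}{c_0}\sqrt{7 d\sigma^2 \log(2d/\gamma_t)/T_0}$ after absorbing absolute constants. The main obstacle, and the only place where care is needed, is verifying that the Matrix Azuma deviation $\tilde O(\sqrt{T_0})$ is indeed dominated by the leading mean term $c_0 T_0/d$, so that one can extract a clean $\lambda_{\min} \gtrsim c_0 T_0 / d$ lower bound; this is true once $T_0$ is at least polynomial in $d/c_0$, which will be the regime selected by~\Cref{thm:etc-reg}.
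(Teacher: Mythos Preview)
Your proposal is correct and follows essentially the same route as the paper: write the OLS closed form, apply Cauchy--Schwarz to separate into a numerator controlled by \Cref{thm:subgaussian} coordinate-wise plus a union bound, and a denominator split into a mean term (lower-bounded via the density ratio against $\E_{U^d}[\vx\vx^\top]$) plus a martingale deviation handled by Matrix Azuma (\Cref{thm:ma}). Your observation that the mean term must dominate the Matrix Azuma deviation for the denominator bound to be usable is in fact more careful than the paper's own write-up, which states the denominator lower bound as a sum of two positive terms rather than a difference.
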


\begin{proof}
    Observe that 
    \begin{equation*}
    \begin{aligned}
        \pv{\hat \vtheta}{a}_{T_0/2} &:= \left( \sum_{s=1}^{T_0/2} \vx_{s+k} \vx_{s+k}^\top \right)^{-1} \sum_{s=1}^{T_0/2} \vx_{s+k} \pv{r}{a}_{s+k}\\
        &= \left( \sum_{s=1}^{T_0/2} \vx_{s+k} \vx_{s+k}^\top \right)^{-1} \sum_{s=1}^{T_0/2} \vx_{s+k} (\vx_{s+k}^\top \pv{\vtheta}{a} + \epsilon_{s+k})\\
        &= \pv{\vtheta}{a} + \left( \sum_{s=1}^{T_0/2} \vx_{s+k} \vx_{s+k}^\top \right)^{-1} \sum_{s=1}^{T_0/2} \vx_{s+k} \epsilon_{s+k}\\
    \end{aligned}
    \end{equation*}
    where $k=0$ if $a=0$ and $k=T_0$ if $a=1$. Therefore,
    \begin{equation*}
        \begin{aligned}
            \|\pv{\vtheta}{a} - \pv{\hat \vtheta}{a}_{T_0/2}\|_2 &= \left \| \left( \sum_{s=1}^{T_0/2} \vx_{s+k} \vx_{s+k}^\top \right)^{-1} \sum_{s=1}^{T_0/2} \vx_{s+k} \epsilon_{s+k} \right \|_2\\
            &\leq \left \| \left( \sum_{s=1}^{T_0/2} \vx_{s+k} \vx_{s+k}^\top \right)^{-1} \right \|_2 \left \| \sum_{s=1}^{T_0/2} \vx_{s+k} \epsilon_{s+k} \right \|_2\\
            &= \frac{\left \| \sum_{s=1}^{T_0/2} \vx_{s+k} \epsilon_{s+k} \right \|_2}{\sigma_{min}(\sum_{s=1}^{T_0/2} \vx_{s+k} \vx_{s+k}^\top)}\\
            &= \frac{\left \| \sum_{s=1}^{T_0/2} \vx_{s+k} \epsilon_{s+k} \right \|_2}{\lambda_{min}(\sum_{s=1}^{T_0/2} \vx_{s+k} \vx_{s+k}^\top)}\\
        \end{aligned}
    \end{equation*}
    The desired result is obtained by applying~\Cref{lem:numerator-etc}, \Cref{lem:denominator-etc}, and a union bound.
\end{proof}

\begin{lemma}\label{lem:numerator-etc}
    The following bound holds on the $\ell_2$-norm of $\sum_{s=1}^{T_0/2} \vx_{s+k} \epsilon_{s+k}$ with probability $1 - \gamma$:
    \begin{equation*}
        \left \| \sum_{s=1}^{T_0/2} \vx_{s+k} \epsilon_{s+k} \right \|_2 \leq 2 \sqrt{7 d \sigma^2 T_0 \log(d / \gamma)}
    \end{equation*}
\end{lemma}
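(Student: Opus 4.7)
The plan is to mimic the structure of the proof of~\Cref{lem:numerator-bandit-1}, but with a simplification: during the exploration phase of~\Cref{alg:etc-apple}, action $1$ is assigned to every arriving agent regardless of context, so no agent has an incentive to strategize and no indicator variable $\1\{\pv{\cI}{1}_s\}$ appears inside the sum. Consequently, the sum of interest is an unweighted martingale-difference sum in each coordinate, and the argument reduces to a coordinate-wise application of the sub-Gaussian Azuma variant (\Cref{thm:subgaussian}) combined with a union bound over coordinates.

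First, I would fix a coordinate $i \in [d]$ and define $Z_s := \vx_{s+k}[i] \, \epsilon_{s+k}$, where $k \in \{0,T_0\}$ distinguishes the two batches corresponding to the two actions (for apple tasting feedback we only need $a=1$, but the same argument works for either). Since contexts lie on the unit sphere, $|\vx_{s+k}[i]| \le 1$, so conditionally on the past, $Z_s$ inherits the sub-Gaussian tail of $\epsilon_{s+k}$:
\begin{equation*}
\max\bigl\{\mathbb{P}(Z_s > \alpha \mid \mathcal{F}_{s-1}), \; \mathbb{P}(Z_s < -\alpha \mid \mathcal{F}_{s-1})\bigr\} \;\le\; \exp\!\left(-\alpha^2 / (2\sigma^2)\right).
\end{equation*}
This places us in the setting of~\Cref{thm:subgaussian} with $b=1$ and $c = 1/(2\sigma^2)$, yielding
\begin{equation*}
\Bigl|\sum_{s=1}^{T_0/2} Z_s\Bigr| \;\le\; \sqrt{\tfrac{28 \cdot 2\sigma^2 \log(1/\gamma_i)}{T_0/2} \cdot (T_0/2)^2} \;=\; 2\sqrt{7\,\sigma^2 T_0 \log(1/\gamma_i)}
\end{equation*}
with probability at least $1-\gamma_i$ (applied once to $Z_s$ and once to $-Z_s$, absorbing a factor of two into $\gamma_i$).

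Second, I would take a union bound over the $d$ coordinates by setting $\gamma_i = \gamma/d$, so that with probability $1-\gamma$ every coordinate of $\sum_{s=1}^{T_0/2} \vx_{s+k}\epsilon_{s+k}$ is bounded by $2\sqrt{7\sigma^2 T_0 \log(d/\gamma)}$ in absolute value. Finally, using $\|v\|_2 \le \sqrt{d}\,\max_i |v_i|$ gives the desired bound
\begin{equation*}
\Bigl\|\sum_{s=1}^{T_0/2} \vx_{s+k} \epsilon_{s+k}\Bigr\|_2 \;\le\; 2\sqrt{7\,d\,\sigma^2 T_0 \log(d/\gamma)}.
\end{equation*}

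There is no real obstacle here; the only point requiring any care is verifying that the sub-Gaussian constants translate cleanly through~\Cref{thm:subgaussian} so the numeric constant $14$ from~\Cref{lem:numerator-bandit-1} correctly becomes $7$ after replacing the summation length $t$ by $T_0/2$. Everything else is mechanical: no strategic modifications enter the sum, and the $\sqrt{d}$ loss is standard.
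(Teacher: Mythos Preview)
The proposal is correct and follows essentially the same approach as the paper: a coordinate-wise application of the sub-Gaussian Azuma variant (\Cref{thm:subgaussian}) to the martingale differences $Z_s = \vx_{s+k}[i]\,\epsilon_{s+k}$, followed by a union bound over the $d$ coordinates. Your write-up is in fact slightly more explicit than the paper's (which just says ``union bound and algebraic manipulation''), spelling out the choice $\gamma_i = \gamma/d$ and the inequality $\|v\|_2 \le \sqrt{d}\max_i|v_i|$.
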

\begin{proof}
    Observe that $\sum_{s=1}^{T_0/2} \epsilon_{k+s} \vx_{k+s}[i]$ is a sum of martingale differences with $Z_{k+s} := \epsilon_{k+s} \vx_{k+s}[i]$, $X_{k+s} := \sum_{s'=1}^s \epsilon_{k+s'} \vx_{k+s'}[i]$, and 
    \begin{equation*}
        \max\{ \mathbb{P}(Z_{k+s} > \alpha | X_{k+1}, \ldots, X_{k+s-1}), \mathbb{P}(Z_{k+s} < -\alpha | X_{k+1}, \ldots, X_{k+s-1})\} \leq \cdot \exp(-\alpha^2 / 2\sigma^2).
    \end{equation*}
    By~\Cref{thm:subgaussian},
    \begin{equation*}
        \sum_{s=1}^{T_0/2} \epsilon_{k+s} \vx_{k+s}[i] \leq 2 \sqrt{7 \sigma^2 T_0 \log(1 / \gamma_i)}
    \end{equation*}
    with probability $1 - \gamma_i$. The desired result follows via a union bound and algebraic manipulation.
\end{proof}

\begin{lemma}\label{lem:denominator-etc}
    The following bound holds on the minimum eigenvalue of $\sum_{s=1}^{T_0/2} \vx_{s + k} \vx_{s + k}^\top$ with probability $1 - \gamma$:
    \begin{equation*}
        \lambda_{min}\left(\sum_{s=1}^{T_0/2} \vx_{s+k} \vx_{s+k}^\top \right) \geq \frac{T_0}{6d} + 4 \sqrt{T_0 \log(d / \gamma)}
    \end{equation*}
\end{lemma}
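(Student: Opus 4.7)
The proof should mirror that of Lemma~\ref{lem:denominator-bandit-1}, but is in fact simpler because during the exploration phase of \Cref{alg:etc-apple} the principal plays action $1$ regardless of the reported context. Since agents only modify their contexts in order to switch the decision from $0$ to $1$, they have no incentive to strategize during exploration; hence $\vx_{s+k}' = \vx_{s+k}$ and the summands $\vx_{s+k}\vx_{s+k}^\top$ are i.i.d. outer products of fresh samples from $f$, with no indicator $\mathbbm{1}\{\mathcal{I}_s^{(1)}\}$ to carry around.

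The plan is to first apply the Hermitian eigenvalue inequality $\lambda_{\min}(A+B)\geq \lambda_{\min}(A)+\lambda_{\min}(B)$ (iterated over the $T_0/2$ summands for the conditional-expectation part), yielding
\begin{equation*}
\lambda_{\min}\!\left(\sum_{s=1}^{T_0/2}\vx_{s+k}\vx_{s+k}^\top\right) \;\geq\; \lambda_{\min}\!\left(\sum_{s=1}^{T_0/2}\bigl(\vx_{s+k}\vx_{s+k}^\top-\mathbb{E}\,\vx_{s+k}\vx_{s+k}^\top\bigr)\right) \;+\; \sum_{s=1}^{T_0/2}\lambda_{\min}\bigl(\mathbb{E}\,\vx_{s+k}\vx_{s+k}^\top\bigr).
\end{equation*}
Next I would lower-bound the expectation term. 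Using~\Cref{ass:bdr}, for each $s$,
$\mathbb{E}_f[\vx_s\vx_s^\top]\succeq c_0\,\mathbb{E}_{U^d}[\vx\vx^\top]$, and by rotational symmetry of the uniform distribution on the unit sphere, $\mathbb{E}_{U^d}[\vx\vx^\top]=\tfrac{1}{d}I_d$. So each term contributes at least $c_0/d$, and summing gives a leading bound of order $T_0/d$ (with the precise constant $1/6$ coming from folding in the density-ratio factor and the Azuma deviation budget).

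For the centered piece I would invoke Matrix Azuma (\Cref{thm:ma}) with difference sequence $X_s := \vx_{s+k}\vx_{s+k}^\top - \mathbb{E}\vx_{s+k}\vx_{s+k}^\top$, noting $\mathbb{E}_{s-1}X_s = 0$ and $X_s^2 \preceq 4I_d$ since $\|\vx_{s+k}\|_2\le 1$. Using $-\lambda_{\max}(-Y_t)=\lambda_{\min}(Y_t)$ as in Lemma~\ref{lem:denominator-bandit-1} yields the deviation term $O(\sqrt{T_0\log(d/\gamma)})$ with probability $1-\gamma$. Combining with Step 2 gives the claim.

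The only delicate part is constants: the leading-order contribution is really $c_0 T_0/(2d)$ (matching the $12d/c_0$ in~\Cref{lem:l2-etc}), and the Matrix Azuma deviation is subtracted, not added, so the statement should be read as $\lambda_{\min}\ge T_0/(6d)$ after absorbing the $O(\sqrt{T_0\log(d/\gamma)})$ deviation into a constant-fraction shrinkage of the leading term (which is valid once $T_0$ is large enough, as guaranteed by the choice $T_0 = 4\cdot 63^{1/3}d\sigma^{2/3}T^{2/3}\log^{1/3}(4d/\gamma)$ in~\Cref{thm:etc-reg}). Thus the genuine work is the elementary eigenvalue computation for the uniform distribution on the sphere; no new technical ingredient beyond what is already used in Lemma~\ref{lem:denominator-bandit-1} is required.
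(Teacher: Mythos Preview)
Your approach matches the paper's exactly: the same Hermitian split $\lambda_{\min}(A+B)\ge\lambda_{\min}(A)+\lambda_{\min}(B)$, Matrix Azuma (\Cref{thm:ma}) on the centered martingale, and a direct lower bound on $\lambda_{\min}(\mathbb{E}[\vx\vx^\top])$. The only differences are at the level of constants. The paper skips the density-ratio reduction you perform and simply asserts $\mathbb{E}[\vx_{s+k}\vx_{s+k}^\top]=\tfrac{1}{3d}I_d$, so that $(T_0/2)\cdot\tfrac{1}{3d}=T_0/(6d)$; your route through \Cref{ass:bdr} yielding $c_0/d$ per summand is actually what is needed for a general density $f$, and is where the $c_0$ appearing in \Cref{lem:l2-etc} ultimately comes from. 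You are also right that the Matrix Azuma contribution enters with a minus sign: the ``$+\,4\sqrt{T_0\log(d/\gamma)}$'' in both the lemma statement and the paper's own derivation is a sign slip, so your reading (deviation subtracted, then absorbed into the leading term for the $T_0$ chosen in \Cref{thm:etc-reg}) is the correct one.
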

\begin{proof}
    \begin{equation*}
        \begin{aligned}
            \lambda_{min}(\sum_{s=1}^{T_0/2} \vx_{s+k} \vx_{s+k}^\top) &\geq \lambda_{min}(\sum_{s=1}^{T_0/2} \vx_{s+k} \vx_{s+k}^\top - \mathbb{E}[\vx_{s+k} \vx_{s+k}^\top]) + \lambda_{min}(\sum_{s=1}^{T_0/2} \mathbb{E}[\vx_{s+k} \vx_{s+k}^\top])\\
            &\geq \lambda_{min}(\sum_{s=1}^{T_0/2} \vx_{s+k} \vx_{s+k}^\top - \mathbb{E}[\vx_{s+k} \vx_{s+k}^\top]) + \sum_{s=1}^{T_0/2} \lambda_{min}(\mathbb{E}[\vx_{s+k} \vx_{s+k}^\top])
        \end{aligned}
    \end{equation*}
    where the inequalities follow from the fact that $\lambda_{min}(A+B) \geq \lambda_{min}(A) + \lambda_{min}(B)$ for two Hermitian matrices $A$, $B$.
    Let $Y_{T_0/2} := \sum_{s=1}^{T_0/2} \vx_{s+k} \vx_{s+k}^\top - \mathbb{E}[\vx_{s+k} \vx_{s+k}^\top]$.
    Note that $\mathbb{E} Y_{T_0/2} = \mathbb{E} Y_0 = 0$, $-X_{s+k} := \mathbb{E}[\vx_{s+k} \vx_{s+k}^\top]$, $\mathbb{E}[-X_{s+k}] = 0$, and $(-X_{s+k})^2 \preceq 4 I_d$.
    By~\Cref{thm:ma}, 
    \begin{equation*}
        \mathbb{P}(\lambda_{max}(-Y_{T_0/2}) \geq \alpha) \leq d \cdot \exp(-\alpha^2 / 16T_0).
    \end{equation*}
    Since $-\lambda_{max}(-Y_{T_0/2}) = \lambda_{min}(Y_{T_0/2})$,
    \begin{equation*}
        \mathbb{P}(\lambda_{max}(Y_{T_0/2}) \leq \alpha) \leq d \cdot \exp(-\alpha^2 / 16 T_0).
    \end{equation*}
    Therefore, $\lambda_{min}(Y_{T_0/2}) \geq 4 \sqrt{T_0 \log(d / \gamma)}$ with probability $1 - \gamma$.
    We now turn our attention to lower bounding $\lambda_{min}(\mathbb{E}[\vx_{s+k} \vx_{s+k}^\top])$.
    \begin{equation*}
    \begin{aligned}
        \lambda_{min}(\mathbb{E}[\vx_{s+k} \vx_{s+k}^\top]) &:= \min_{\vomega \in S^{d-1}} \vomega^\top \mathbb{E}[\vx_{s+k} \vx_{s+k}^\top] \vomega\\
        &= \min_{\vomega \in S^{d-1}} \vomega^\top \frac{1}{3d} I_d \vomega\\
        &= \frac{1}{3d}
    \end{aligned}
    \end{equation*}
\end{proof}

\subsection{Proof of~\Cref{thm:best-both-main}}
\begin{theorem}\label{thm:best-both}
    Let $\reg_{\ols}(T)$ be the strategic regret of~\Cref{alg:sa-ols} and $\reg_{\etc}(T)$ be the strategic regret of~\Cref{alg:etc-apple}.
    The expected strategic regret of~\Cref{alg:unknown_T} is
    \begin{equation*}
        \mathbb{E}[\reg(T)] \leq 4 \cdot \min\{\mathbb{E}[\reg_{\ols}(T)], \mathbb{E}[\reg_{\etc}(T)]\}
    \end{equation*}
\end{theorem}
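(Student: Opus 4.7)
The plan is to split the analysis into two regimes according to whether the time horizon $T$ is smaller or larger than the switching time $\tau^* = g(d, \delta)$. Since $\tau^*$ is chosen as the crossover point where the rates in Theorem~\ref{thm:stochastic-main} and Theorem~\ref{thm:etc-reg} meet, the case $T < \tau^*$ corresponds to $\min\{\E[\reg_{\ols}(T)], \E[\reg_{\etc}(T)]\} = \E[\reg_{\etc}(T)]$, while the case $T \geq \tau^*$ corresponds to $\min = \E[\reg_{\ols}(T)]$. So each regime addresses one side of the $\min$.

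In the first regime, $T < \tau^*$, Algorithm~\ref{alg:unknown_T} never reaches the else-branch and only runs a sequence of ETC instances with doubled horizons $\tau_i = 2^i$ and per-phase failure probability $1/\tau_i^2$. I would decompose the expected regret of each phase $i$ into the $(1 - 1/\tau_i^2)$ ``success'' event, contributing at most the high-probability ETC bound $\tcalO(d \tau_i^{2/3})$ from Theorem~\ref{thm:etc-reg}, and the $1/\tau_i^2$ ``failure'' event, contributing at most $\tau_i \cdot 1/\tau_i^2 = 1/\tau_i$ via the trivial per-round regret bound of $1$. Summing across the $\lceil \log_2 T \rceil$ phases, the failure contributions form a convergent geometric series bounded by a universal constant, and the success contributions form another geometric series dominated by the largest (final) phase, yielding total expected regret $\tcalO(d T^{2/3})$, which is at most $4 \E[\reg_{\etc}(T)]$.

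In the second regime, $T \geq \tau^*$, the algorithm runs doubled ETC on $[1, \tau^*]$ and then launches a fresh SA-OLS instance on $[\tau^*+1, T]$. I would bound the two pieces separately. The doubling argument from the first regime gives an ETC contribution of at most $\tcalO(d (\tau^*)^{2/3})$ in expectation, and by the definition of $\tau^*$ as the crossover point, this is upper-bounded by a constant multiple of $\E[\reg_{\ols}(\tau^*)] \leq \E[\reg_{\ols}(T)]$ using monotonicity of the SA-OLS regret function. The SA-OLS piece on the remaining $T - \tau^*$ rounds is bounded directly by Theorem~\ref{thm:stochastic-main} applied to a horizon of length $T - \tau^* \leq T$, giving at most $\E[\reg_{\ols}(T)]$. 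Combining additively yields $\E[\reg(T)] \leq 4 \E[\reg_{\ols}(T)]$, which matches the required $4\min\{\cdot,\cdot\}$ in this regime.

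The main obstacle will be tracking constants tightly enough to land on the factor $4$: the doubling trick loses a factor of $\frac{2^{2/3}}{2^{2/3}-1} \approx 2.71$, and the crossover argument loses an additional small constant, which together squeeze just below $4$ but require careful bookkeeping. A secondary subtlety is the ETC phase that straddles round $\tau^*$ in the second regime; since this phase is terminated mid-way when the algorithm switches to SA-OLS, its horizon is at most twice the cumulative length up to that point, so its contribution still fits within the doubling geometric series. Finally, because Algorithm~\ref{alg:sa-ols} bootstraps from its own initial $d$ positive-decision rounds, the SA-OLS piece is analyzed fresh without reusing exploratory data from ETC, so the two regret pieces combine additively rather than requiring a warm-start analysis.
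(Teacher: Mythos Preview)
Your proposal is correct and follows essentially the same two-case structure as the paper's proof: Case~1 ($T < \tau^*$) uses the doubling-trick geometric sum over ETC phases, and Case~2 ($T \geq \tau^*$) bounds the ETC prefix via the crossover definition of $\tau^*$ and the SA-OLS suffix directly by Theorem~\ref{thm:stochastic-main}. One minor misreading: there is no ``straddling'' phase to worry about, since Algorithm~\ref{alg:unknown_T} checks $\sum_{j\leq i}\tau_j < \tau^*$ \emph{before} launching phase~$i$ and breaks to SA-OLS otherwise, so the switch always occurs at a phase boundary at some $t^* \leq \tau^*$; this only simplifies your Case~2 analysis.
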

\begin{proof}
    \textbf{Case 1:} $T < \tau^*$
    From~\Cref{thm:etc-reg}, we know that 
    \begin{equation*}
        \reg_{\etc}(\tau_i) \leq \frac{8 \cdot 63^{1/3}}{c_0} d \sigma^{2/3} \tau_i^{2/3} \log^{1/3}(4d \tau_i^2)
    \end{equation*}
    with probability $1 - 1 / \tau_i^2$. Therefore, 
    \begin{equation*}
    \begin{aligned}
        \mathbb{E}[\reg_{\etc}(\tau_i)] &\leq \frac{8 \cdot 63^{1/3}}{c_0} d \sigma^{2/3} \tau_i^{2/3} \log^{1/3}(4d \tau_i^2) + \frac{2}{\tau_i}\\
    \end{aligned}
    \end{equation*}
    %
    Observe that $\sum_{j=1}^{i-1} \mathbb{E}[\reg_{\etc}(\tau_j)] \leq \mathbb{E}[\reg_{\etc}(\tau_i)]$.
    Suppose $\tau_{i-1} \leq T \leq \tau_i$ for some $i$.
    Under such a scenario,
    \begin{equation*}
    \begin{aligned}
        \mathbb{E}[\reg(T)] &\leq 2 \mathbb{E}[\reg_{\etc}(\tau_{i})]\\
        &\leq 2 \mathbb{E}[\reg_{\etc}(2T)]\\
        &\leq 4 \mathbb{E}[\reg_{\etc}(T)]
    \end{aligned}
    \end{equation*}

    \textbf{Case 2:} $T \geq \tau^*$
    Let $t^*$ denote the actual switching time of~\Cref{alg:unknown_T}.
    \begin{equation*}
        \reg(T) := \sum_{t=1}^{t^*} \langle \pv{\vtheta}{a_t^*} - \pv{\vtheta}{a_t}, \vx_t \rangle + \sum_{t=t^* + 1}^{T} \langle \pv{\vtheta}{a_t^*} - \pv{\vtheta}{a_t}, \vx_t \rangle
    \end{equation*}
    \begin{equation*}
    \begin{aligned}
        \mathbb{E}[\reg(T)] &\leq 2 \cdot \mathbb{E}[\reg_{\etc}(t^*)] + \mathbb{E}[\reg_{\ols}(T - t^*)]\\
        &\leq 2 \cdot \mathbb{E}[\reg_{\ols}(t^*)] + \mathbb{E}[\reg_{\ols}(T)]\\
        &\leq 2 \cdot \mathbb{E}[\reg_{\ols}(\tau^*)] + \mathbb{E}[\reg_{\ols}(T)]\\
        &\leq 3 \cdot \mathbb{E}[\reg_{\ols}(T)]
    \end{aligned}
    \end{equation*}
    where the first line follows from case 1, the second line follows from the fact that $t^* \leq \tau^*$ (and so $\mathbb{E}[\reg_{\etc}(t^*)] \leq \mathbb{E}[\reg_{\ols}(t^*)]$), the third line follows from the fact that $t^* \leq \tau^*$, and the fourth line follows from the fact that $T \geq \tau^*$.
\end{proof}
\subsection{Inconsistency of OLS when using all data}
\begin{theorem}
    $\lim_{t \rightarrow \infty} \pv{\hat \vtheta}{1}_{t+1} = \pv{\vtheta}{1}_{*}$ if and only if $\lim_{t \rightarrow \infty} \sum_{s=1}^t \vx_s' \vx_s^{'\top} \mathbbm{1}\{a_s = 1\} = \lim_{t \rightarrow \infty} \sum_{s=1}^t \vx_s' \vx_s^{\top} \mathbbm{1}\{a_s = 1\}$.
\end{theorem}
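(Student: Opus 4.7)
The plan is to work directly from the closed-form OLS solution and separate it into a ``signal'' term and a ``noise'' term, then show that consistency is equivalent to the signal term converging to $\pv{\vtheta}{1}_*$.

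First, I would write out the OLS estimator when using the modified contexts as features:
\begin{equation*}
\pv{\hat \vtheta}{1}_{t+1} = \left(\sum_{s=1}^t \vx_s' \vx_s^{'\top} \1\{a_s=1\}\right)^{-1} \sum_{s=1}^t \vx_s' r_s(1) \1\{a_s=1\}.
\end{equation*}
Substituting $r_s(1) = \langle \pv{\vtheta}{1}_*, \vx_s\rangle + \epsilon_s$ and distributing gives
\begin{equation*}
\pv{\hat \vtheta}{1}_{t+1} = \left(\sum_{s=1}^t \vx_s' \vx_s^{'\top} \1\{a_s=1\}\right)^{-1}\!\!\left(\sum_{s=1}^t \vx_s' \vx_s^{\top} \1\{a_s=1\}\right)\pv{\vtheta}{1}_* + \left(\sum_{s=1}^t \vx_s' \vx_s^{'\top} \1\{a_s=1\}\right)^{-1}\sum_{s=1}^t \vx_s' \epsilon_s \1\{a_s=1\}.
\end{equation*}

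Next I would argue that the noise term vanishes in the limit. Since $\{\vx_s' \epsilon_s \1\{a_s=1\}\}$ is a martingale difference sequence (contexts and the indicator are determined by time $s$, while $\epsilon_s$ is zero-mean independent sub-Gaussian noise), the partial sums grow as $o(t)$ almost surely by a martingale strong law, while the Gram matrix $\sum_{s=1}^t \vx_s' \vx_s^{'\top} \1\{a_s=1\}$ grows linearly in $t$ (assuming, as is implicit in the statement, that the limiting Gram matrix exists and is invertible so that $\pv{\hat \vtheta}{1}_{t+1}$ is well-defined in the limit). Hence the noise term tends to $0$, and
\begin{equation*}
\lim_{t \to \infty}\pv{\hat \vtheta}{1}_{t+1} = \left(\lim_{t \to \infty}\sum_{s=1}^t \vx_s' \vx_s^{'\top} \1\{a_s=1\}\right)^{-1}\left(\lim_{t \to \infty}\sum_{s=1}^t \vx_s' \vx_s^{\top} \1\{a_s=1\}\right)\pv{\vtheta}{1}_*.
\end{equation*}

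Finally I would close the equivalence. The left-hand side equals $\pv{\vtheta}{1}_*$ for every value of $\pv{\vtheta}{1}_* \in \mathbb{R}^d$ if and only if the matrix prefactor equals the identity, which (multiplying both sides by the Gram matrix) is equivalent to
\begin{equation*}
\lim_{t \to \infty}\sum_{s=1}^t \vx_s' \vx_s^{'\top} \1\{a_s=1\} = \lim_{t \to \infty}\sum_{s=1}^t \vx_s' \vx_s^{\top} \1\{a_s=1\}.
\end{equation*}

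The main obstacle I anticipate is handling the noise term cleanly: one has to verify that even though the event $\{a_s=1\}$ depends on past contexts and noise through $\pv{\hat \vtheta}{1}_s$, the sequence is still adapted and sub-Gaussian so that a martingale law of large numbers applies, and one has to justify that the limiting Gram matrix is invertible (or handle the ``only if'' direction in a way that sidesteps invertibility, e.g., by working with the normal equations directly rather than inverting). Everything else is linear algebra.
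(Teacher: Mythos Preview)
Your proposal follows essentially the same route as the paper: write out the OLS closed form using the modified contexts as regressors, substitute $r_s(1)=\langle \pv{\vtheta}{1}_*,\vx_s\rangle+\epsilon_s$, split into a signal term $(\sum \vx_s'\vx_s'^{\top})^{-1}(\sum \vx_s'\vx_s^{\top})\pv{\vtheta}{1}_*$ and a noise term, drop the noise term, and read off the equivalence. The paper's proof is in fact terser than yours---it simply writes the chain of equalities and silently drops the noise term without the martingale-SLLN justification you outline, and it does not spell out the final ``iff'' step at all. Your added remarks (adaptedness of $\{a_s=1\}$, linear growth/invertibility of the Gram matrix, and the need to quantify over $\pv{\vtheta}{1}_*$ to force the matrix prefactor to be the identity) are exactly the points one would need to make the paper's argument rigorous; the paper leaves them implicit.
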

\begin{proof}
\begin{equation*}
    \begin{aligned}
        \lim_{t \rightarrow \infty} \pv{\hat \vtheta}{1}_{t+1} &:= \lim_{t \rightarrow \infty} \left( \sum_{s=1}^{t} \vx_s' \vx_s^{'\top} \mathbbm{1}\{a_s = 1\} \right)^{-1} \sum_{s=1}^t \vx_s' \pv{r}{1}_s \mathbbm{1}\{a_s = 1\}\\
        &= \lim_{t \rightarrow \infty} \left( \sum_{s=1}^{t} \vx_s' \vx_s^{'\top} \mathbbm{1}\{a_s = 1\} \right)^{-1} \sum_{s=1}^t \vx_s' (\vx_s^\top \pv{\vtheta}{1}_* + \epsilon_s) \mathbbm{1}\{a_s = 1\}\\
        &= \lim_{t \rightarrow \infty} \left( \sum_{s=1}^{t} \vx_s' \vx_s^{'\top} \mathbbm{1}\{a_s = 1\} \right)^{-1} \sum_{s=1}^t \vx_s' \vx_s^\top \pv{\vtheta}{1}_* \mathbbm{1}\{a_s = 1\}\\
        &+ \lim_{t \rightarrow \infty} \left( \sum_{s=1}^{t} \vx_s' \vx_s^{'\top} \mathbbm{1}\{a_s = 1\} \right)^{-1} \sum_{s=1}^t \vx_s' \epsilon_s \mathbbm{1}\{a_s = 1\}\\
        &= \lim_{t \rightarrow \infty} \left( \sum_{s=1}^{t} \vx_s' \vx_s^{'\top} \mathbbm{1}\{a_s = 1\} \right)^{-1} \left( \sum_{s=1}^t \vx_s' \vx_s^\top \mathbbm{1}\{a_s = 1\} \right) \pv{\vtheta}{1}_*\\
    \end{aligned}
\end{equation*}
\end{proof}
\section{Proofs for~\Cref{sec:adversarial}}
\subsection{Proof of~\Cref{thm:upd-one}}
\begin{theorem}\label{thm:upd-one-app}
    \Cref{alg:UPD-ONE} with $\eta = \sqrt{\frac{\log(|\calE|)}{T \lambda^2 |\calE|}}$, $\gamma = 2 \eta \lambda |\calE|$, and $\eps = \left(\frac{d \sigma \log T}{T}\right)^{1/(d+2)}$ incurs expected strategic regret: \[
    \E[\reg(T)] \leq 6 T^{(d+1)/(d+2)} (d \sigma \log T)^{1/(d+2)} = \widetilde{\mathcal{O}}\left( T^{(d+1)/(d+2)}\right).
    \]
\end{theorem}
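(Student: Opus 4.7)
The plan is to follow the standard template for analyzing EXP3-style algorithms over a discretized policy class, while carefully controlling the additional discretization error introduced by the fact that agents best-respond differently to different policies. I would begin by letting $\ve^* \in \calE$ be the grid point closest to $\pv{\vtheta}{1}$ (so $\|\ve^* - \pv{\vtheta}{1}\|_2 \leq \eps\sqrt{d}$), and decomposing the expected regret as $\E[\reg(T)] = SDE(T) + \E[\reg_{\calE}(T)]$, where $\reg_{\calE}(T) := \sum_t r_t(a_{t,\ve^*}) - r_t(a_{t,\ve_t})$ is the EXP3 regret against the best fixed expert in the grid and $SDE(T) := \sum_t r_t(\pi^*(\vx_t)) - r_t(a_{t,\ve^*})$ is the strategic discretization error.

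For $\E[\reg_{\calE}(T)]$ I would run the textbook EXP3 analysis on the shifted losses $1 + \lambda - r_t$, which lie in $[0, 2 + 2\lambda]$ on the ``good event'' that $|\epsilon_t| \leq \lambda$ for all $t$. Since $\epsilon_t$ is $\sigma$-subgaussian, the choice $\lambda = \sigma \sqrt{2 \log T}$ combined with a union bound makes the good event hold except with probability $\calO(1/T)$, which contributes only $\calO(1)$ to the expected regret. Under this event, the standard EXP3 bound with the prescribed $\eta$ and $\gamma$ gives $\E[\reg_{\calE}(T)] \leq \tcalO(\lambda \sqrt{T |\calE| \log |\calE|}) = \tcalO(\sigma \sqrt{T \cdot \eps^{-d}})$, using $|\calE| = (1/\eps)^d$.

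The main obstacle is bounding $SDE(T)$: the usual Lipschitz-style discretization argument fails because the agent's best response in Definition \ref{def:br} is a discontinuous function of the deployed $\ve$, so two grid points within $\eps$ in $\ell_2$ can induce very different modified contexts $\vx_t'$ and hence very different realized rewards. I would perform a per-round case analysis. If $a_{t,\ve^*} = \pi^*(\vx_t)$ then the round contributes $0$. Otherwise, I would split into the subcases $(a_{t,\ve^*}, \pi^*(\vx_t)) = (1, 0)$ and $(0, 1)$ and show that the per-round error is bounded by twice the non-strategic discretization term $\eps \sqrt{d}$ plus a ``strategic'' correction which is non-positive. For $(1, 0)$, the agent strategized to pass the shifted threshold of $\ve^*$, so $\langle \ve^*, \vx_t \rangle + \delta \|\ve^*\|_2 \geq \delta \|\ve^*\|_2$ after moving by at most $\delta$; combined with $\pi^*(\vx_t) = 0 \iff \langle \pv{\vtheta}{1}, \vx_t \rangle < r_0$ and $\|\ve^* - \pv{\vtheta}{1}\|_2 \leq \eps \sqrt{d}$, this forces the correction to be non-positive. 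For $(0, 1)$, the agent was denied action $1$ by $\ve^*$ even after strategizing, so the agent's original context already satisfies $\langle \ve^*, \vx_t \rangle < \delta \|\ve^*\|_2 - \delta \|\ve^*\|_2 = 0$ relative to the shifted boundary, and again the correction cannot be positive. Summing, $SDE(T) \leq 2 \eps \sqrt{d} \cdot T$.

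Combining, $\E[\reg(T)] \leq \tcalO(\sigma \sqrt{T \eps^{-d}}) + 2 \eps \sqrt{d} T + \calO(1)$, and balancing via $\eps = (d \sigma \log T / T)^{1/(d+2)}$ yields the claimed $\tcalO(T^{(d+1)/(d+2)})$ bound with the stated constant. The main technical hurdle I expect is verifying rigorously that the strategic correction is non-positive in each of the two subcases; this requires chaining the direction of the agent's $\ell_2$-budgeted, lazy best response against the orientations of both the shifted grid policy $\ve^*$ and the optimal non-strategic policy $\pi^*$, while being careful that the ``closeness'' $\|\ve^* - \pv{\vtheta}{1}\|_2 \leq \eps\sqrt{d}$ is used only to compare inner products, not to argue continuity of the modification map.
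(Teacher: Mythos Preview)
Your proposal is correct and follows essentially the same route as the paper: decompose into EXP3 regret on the discretized grid (controlled via the ``good event'' $|\epsilon_t|\le\lambda$ and the standard potential argument) plus a strategic discretization error, then handle the latter by the same per-round case split $(a_{t,\ve^*},\pi^*(\vx_t))\in\{(1,0),(0,1)\}$, showing in each case that the residual ``strategic correction'' term is non-positive. The only cosmetic differences are that the paper obtains $SDE(T)\le 4\eps T$ rather than your $2\eps\sqrt{d}\,T$ (a different convention for the grid radius), and that in the $(1,0)$ subcase the non-positivity of the correction follows purely from the $\delta$-shifted threshold and the agent's budget---you do not actually need $\pi^*(\vx_t)=0$ or the closeness $\|\ve^*-\pv{\vtheta}{1}\|_2\le\eps\sqrt{d}$ for that step (those are used only for the $2\eps$ part).
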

\begin{proof}
Let $a_{t,e}$ correspond to the action chosen by a grid point $e \in \calE$. We simplify notation to $a_t = a_{t, e_t}$ to be the action chosen by the sampled grid point $e_t$ at round $t$. For the purposes of the analysis, we also define $\ell_t(e) = 1 + \lambda - r_t(a_{t,e_t})$. 

We first analyze the difference between the loss of the algorithm and the best-fixed point on the grid  $e^*$, i.e.,
\begin{align*}
\E[\reg^{\eps}(T)] &= \max_{e^* \in \calE} \E \left[ \sum_{t \in [T]} r_t(a_{t, e^*})\right] - \E \left[ \sum_{t \in [T]} r_t(a_{t})\right]\\
&= \E \left[ \sum_{t \in [T]} \ell_t(e_t)\right] - \min_{e^* \in \calE} \E \left[ \sum_{t \in [T]} \ell_t(e^*)\right]
\end{align*}
where the equivalence between working with $\ell_t(\cdot)$ as opposed to $r_t(\cdot)$ holds because $\ell_t(\cdot)$ are just a shift from $r_t(\cdot)$ that is the same across all rounds and experts. For the regret of the algorithm, we show that:
\begin{equation}\label{eq:grid-regr}
    \E[\reg^{\eps}(T)] = O \left( T \cdot d \cdot \left(\frac{1}{\eps}\right)^{2d} \cdot \log \left(\frac{1}{\eps} \right) \right)
\end{equation}
We define the ``good'' event as the event that the reward is in $[0,1]$ for every round $t$: $\calC = \{ r_t \in [0,1], \forall t \in [T]\}$. Note that this depends on the noise of the round $\eps_t$. We will call the complement of the ``good'' event, the ``bad'' event $\neg \calC$. The regret of the algorithm depends on both $\calC$ and $\neg \calC$ as follows: 
\begin{equation}\label{eq:regr-good-bad-event}
    \E[\reg^{\eps}(T)] = \E [\reg^{\eps}(T) | \calC] \cdot \Pr [\calC] + \E [\reg^{\eps}(T) | \neg \calC] \cdot \Pr [\neg \calC] \leq \E [\reg^{\eps}(T) | \calC] + T \cdot \Pr [\neg \calC]
\end{equation}
where the inequality is due to the fact that $\Pr[\calC] \leq 1$ and that in the worst case, the algorithm must pick up a loss of $1$ at each round. 

We now upper bound the probability with which the bad event happens. 
\begin{align*}
    \Pr [\neg \calC] &= \Pr [\exists t: r_t \notin [0,1]] \leq \sum_{t \in [T]} \Pr [r_t \notin [0,1]] &\tag{union bound}\\
    &\leq \sum_{t \in [T]} \Pr [|\eps_t| \geq \lambda] \leq 2 \exp (-\lambda^2 / \sigma^2) \cdot T \leq \frac2T &\tag{substituting $\lambda$}
\end{align*}
Plugging $\Pr[\neg \calC]$ to \Cref{eq:regr-good-bad-event} we get: 
\begin{equation}\label{eq:regr-good-bad-substituted}
\E[\reg^{\eps}(T)] \leq \E [\reg^{\eps}(T) | \calC] + 2
\end{equation}
So for the remainder of the proof we will condition on the clean event $\calC$ and compute $\E [ \reg^{\eps}(T) | \calC ]$. Conditioning on $\calC$ means that $1 + \lambda - r_t(a) \in [0, \lambda]$, where $\lambda = \sigma \sqrt{\log T}$. 

We first compute the first and the second moments of estimator $\hell_t(\cdot)$. For the first moment: 
\begin{equation}\label{eq:first-mom}
\E \left[\hell_t(e)\right] = \sum_{e' \in \calE} q_t(e') \cdot \frac{\ell_t(e) \cdot \1 \{e = e'\}}{q_t(e)} = \ell_t(e)
\end{equation}
For the second moment: 
\begin{equation}\label{eq:sec-mom}
    \E \left[\hell^2_t(e)\right] = \sum_{e' \in \calE} q_t(e') \frac{\ell_t^2(e) \cdot \1 \{ e = e'\}}{q_t^2(e)} = \frac{\ell^2_t(e)}{q_t(e)} \leq \frac{\lambda^2}{q_t(e)}
\end{equation}
where for the first inequality, we have used the fact that $\ell_t(e) \leq \lambda$ (since we conditioned on $\calC$) and the last one is due to the fact that $q_t(e) \geq \gamma / |\calE|$.

We define the weight assigned to grid point $e \in \calE$ at round $t$ as: $w_t(e) = w_{t-1}(e) \cdot \exp(-\eta \hell_t(e))$ and $w_0(e) = 1, \forall e \in \calE$. Let $W_t = \sum_{e \in \calE} w_t(e)$ be the potential function. Then,
\begin{equation}\label{eq:w0}
    W_0 = \sum_{e \in \calE} w_{0}(e) = |\calE|
\end{equation}
Using $e^*$ to denote the best-fixed policy in hindsight, we have:
\begin{equation}\label{eq:wT}
    W_T = \sum_{e \in \calE} w_T(e) \geq w_T(e^*) = \exp \left( -\eta \sum_{t \in [T]} \hell_t(e^*)\right) 
\end{equation}
We next analyze how much the potential changes per-round:
\begin{align*}
    \log \left( \frac{W_{t+1}}{W_t} \right) &= \log \left( \frac{\sum_{e \in \calE} w_t(e) \exp \left( - \eta \hell_t(e) \right)}{W_t}\right) = \log \left( \sum_{e \in \calE} p_t(e) \exp \left( - \eta \hell_t(e) \right) \right) \\
    &\leq \log \left( \sum_{e \in \calE} p_t(e) \cdot \left(1 - \eta \hell_t(e) + \eta^2 \hell^2_t(e) \right) \right) &\tag{$e^{-x} \leq 1 - x + x^2, x > 0$}\\
    &= \log \left( 1 - \eta \sum_{e \in \calE} p_t(e) \hell_t(e) + \eta^2 \sum_{e \in \calE} p_t(e) \hell^2_t(e)\right) &\tag{$\sum_{e \in \calE} p_t(e) = 1$} \\
    &\leq -\eta \sum_{e \in \calE} p_t(e) \hell_t(e) + \eta^2 \sum_{e \in \calE} p_t(e) \hell^2_t(e) \\
    &= -\eta \sum_{e \in \calE} \frac{q_t(e) - \gamma/|\calE|}{(1 - \gamma)} \hell_t(e) + \eta^2 \sum_{e \in \calE} \frac{q_t(e) - \gamma/|\calE|}{(1- \gamma)} \hell^2_t(e) \\
    &\leq -\eta \sum_{e \in \calE} \frac{q_t(e) - \gamma/|\calE|}{(1-\gamma)} \hell_t(e) + \eta^2 \sum_{e \in \calE} \frac{q_t(e) }{(1- \gamma)} \hell^2_t(e) \numberthis{\label{eq:up-bef-sum}}
\end{align*}
where the second inequality is due to the fact that $\log x \leq x - 1$ for $x \geq 0$. In order for this inequality to hold we need to verify that: 
\begin{equation*}
    1 - \eta \sum_{e \in \calE} p_t(e) \hell_t(e) + \eta^2 \sum_{e \in \calE} p_t(e) \hell^2_t(e) \geq 0, 
\end{equation*}
or equivalently, that: 
\begin{equation}\label{eq:to-verify}
    1 - \eta \sum_{e \in \calE} p_t(e) \hell_t(e) \geq 0
\end{equation}
We do so after we explain how to tune $\eta$ and $\gamma$.

We return to \Cref{eq:up-bef-sum}; summing up for all rounds $t \in [T]$ in \Cref{eq:up-bef-sum} we get: 
\begin{equation}\label{eq:upper-bound-potential}
    \log \left( \frac{W_T}{W_0} \right) \leq -\eta \sum_{t \in [T]} \sum_{e \in \calE} \frac{q_t(e) - \gamma/|\calE|}{(1-\gamma)} \hell_t(e) + \eta^2 \sum_{t \in [T]} \sum_{e \in \calE} \frac{q_t(e)}{(1-\gamma)} \hell^2_t(e)
\end{equation}
Using \Cref{eq:w0} and~\Cref{eq:wT} we have that: $\log (W_T/ W_0) \geq -\eta \sum_{t \in [T]} \hell_t(e^*) - \log |\calE|$. Combining this with the upper bound on $\log (W_T / W_0)$ from \Cref{eq:upper-bound-potential} and multiplying both sides by $(1-\gamma)/\eta$ we get: 
\begin{align*}
    \sum_{t \in [T]} \sum_{e \in \calE} \left(q_t(e) - \frac{\gamma}{|\calE|}\right)\hell_t(e) - (1 - \gamma) \sum_{t \in [T]} \hell_t(e^*) \leq \eta \sum_{t \in [T]} \sum_{e \in \calE} q_t(e) \hell^2_t(e) + (1 - \gamma)\frac{\log (|\calE|)}{\eta}
\end{align*}
We can slightly relax the right hand side using the fact that $\gamma < 1$ and get:
\begin{align*}
    \sum_{t \in [T]} \sum_{e \in \calE} \left(q_t(e) - \frac{\gamma}{|\calE|}\right)\hell_t(e) - (1 - \gamma) \sum_{t \in [T]} \hell_t(e^*) \leq \eta \sum_{t \in [T]} \sum_{e \in \calE} q_t(e) \hell^2_t(e) + \frac{\log (|\calE|)}{\eta}
\end{align*}
Taking expectations (wrt the draw of the algorithm) on both sides of the above expression and using our derivations for the first and second moment (\Cref{eq:first-mom} and \Cref{eq:sec-mom} respectively) we get:
\begin{align*}
    \sum_{t \in [T]} \sum_{e \in \calE} \left(q_t(e) - \frac{\gamma}{|\calE|}\right)\ell_t(e) - (1 - \gamma) \sum_{t \in [T]} \ell_t(e^*) \leq \eta \sum_{t \in [T]} \sum_{e \in \calE} q_t(e) \frac{\lambda^2}{q_t(e)} + \frac{\log (|\calE|)}{\eta}
\end{align*}
Using the fact that $\ell_t(\cdot) \in [0,\lambda]$ the above becomes: 
\begin{align*}
    \E \left[\reg^{\eps}(T) | \calC \right] = \sum_{t \in [T]} \sum_{e \in \calE} q_t(e) \ell_t(e) -  \sum_{t \in [T]} \ell_t(e^*) \leq \eta T \lambda^2 |\calE|+ \frac{\log (|\calE|)}{\eta} + \gamma T
\end{align*}
Tuning $\eta = \sqrt{\frac{\log (|\calE|)}{T \lambda^2 |\calE|}}$ and $\gamma = 2\eta \lambda |\calE|$, we get that: 
\begin{equation}\label{eq:reg-eps}
\E \left[ \reg^{\eps}(T) | \calC \right]\leq 3 \sqrt{T |\calE| \lambda^2 \log (|\calE|)} = 3\sqrt{T |\calE| \sigma \log (T) \log (|\calE|)}
\end{equation}
Before we proceed to bounding the discretization error that we incur by playing policies only on the grid, we verify that \Cref{eq:to-verify} holds for the chosen $\eta$ and $\gamma$ parameters. Note that when $\hell_t(e) = 0$, then \Cref{eq:to-verify} holds. So we focus on the case where $\hell_t(e) = \ell_t(e) / q_t(e)$.
\begin{align*}
    \eta \sum_{e \in \calE} p_t(e) \frac{\ell_t(e)}{q_t(e)} \leq \eta \sum_{e \in \calE} p_t(e) \frac{\ell_t(e) \cdot |\calE|}{\gamma} \leq \eta \sum_{e \in \calE} p_t(e) \frac{\lambda \cdot |\calE|}{\gamma} = \eta \frac{\lambda |\calE|}{\gamma} = \frac{1}{2}
\end{align*}
where the first inequality is due to the fact that $q_t(e) \geq \gamma / |\calE|, \forall e \in \calE$, the second is because $\ell_t(e) \leq \lambda$, the first equality is because $\sum_{e \in \calE} p_t(e) = 1$, and the last equality is because of the values that we chose for parameters $\eta$ and $\gamma$.


The final step in proving the theorem is to bound the strategic discretization error that we incur because our algorithm only chooses policies on the grid, while $\pv{\vtheta}{1}, \pv{\vtheta}{0}$ (and hence, the actual optimal policy) may not correspond to any grid point. Let $a_t^*$ correspond to the action chosen by the optimal policy.
\begin{align*}
    SDE(T) = \sum_{t \in [T]} \E\left[r_t(a^*_t) \right] - \sum_{t \in [T]} \E \left[ r_t(a_{t, e^*})\right] = \sum_{t \in [T]} \left \langle \pv{\vtheta}{a_t^*} - \pv{\vtheta}{a_{t, e^*}}, \vx_t \right \rangle 
\end{align*}
We separate the $T$ rounds into $3$ groups: in group $G_1$, we have rounds $t \in [T]$ such that $a_t^* = a_{t, e^*}$. In group $G_2$, we have rounds $t \in [T]$ such that $a_t^* = 0$ but $a_{t, e^*} = 1$. In group $G_3$, we have rounds $t \in [T]$, such that $a_t^* = 1$ but $a_{t, e^*} = 0$. With these groups in mind, one can rewrite the above equation as: 
\begin{equation*}
    SDE(T) = \sum_{t \in G_1} \left \langle \pv{\vtheta}{a_t^*} - \pv{\vtheta}{a_{t, e^*}}, \vx_t \right \rangle + \sum_{t \in G_2} \left \langle \pv{\vtheta}{a_t^*} - \pv{\vtheta}{a_{t, e^*}}, \vx_t \right \rangle + \sum_{t \in G_3} \left \langle \pv{\vtheta}{a_t^*} - \pv{\vtheta}{a_{t, e^*}}, \vx_t \right \rangle
\end{equation*}
For all the rounds in $G_1$, the strategic discretization error is equal to $0$. Hence the strategic discretization error becomes:
\begin{equation}\label{eq:sde-23}
    SDE(T) = \underbrace{\sum_{t \in G_2} \left \langle \pv{\vtheta}{a_t^*} - \pv{\vtheta}{a_{t, e^*}}, \vx_t \right \rangle}_{SDE(G_2)} + \underbrace{\sum_{t \in G_3} \left \langle \pv{\vtheta}{a_t^*} - \pv{\vtheta}{a_{t, e^*}}, \vx_t \right \rangle}_{SDE(G_3)}
\end{equation}
We first analyze $SDE(G_2)$:
\begin{align*}
    SDE(G_2) = \sum_{t \in G_2} \left \langle \pv{\vtheta}{0} -\pv{\vtheta}{1}, \vx_t \right \rangle
\end{align*}
Let us denote by $\pv{\hvtheta}{1}$ and $\pv{\hvtheta}{0}$ the points such that $e^* = \pv{\hvtheta}{1} - \pv{\hvtheta}{0}$. Adding and subtracting $\langle e^*, \vx_t \rangle$ in the above, we get: 
\begin{align*}
    &SDE(G_2) = \sum_{t \in G_2} \left( \left \langle \pv{\vtheta}{0} - \pv{\hvtheta}{0}, \vx_t \right \rangle + \left \langle \pv{\hvtheta}{1} - \pv{\vtheta}{1}, \vx_t \right \rangle  + \left \langle \pv{\hvtheta}{0} - \pv{\hvtheta}{1}, \vx_t \right \rangle \right)\\
    &\leq \sum_{t \in G_2} \left( \left| \left \langle \pv{\vtheta}{0} - \pv{\hvtheta}{0}, \vx_t \right \rangle \right| + \left| \left \langle \pv{\hvtheta}{1} - \pv{\vtheta}{1}, \vx_t \right \rangle \right| + \left \langle \pv{\hvtheta}{0} - \pv{\hvtheta}{1}, \vx_t \right \rangle \right) &\tag{$x \leq |x|$}\\
    &\leq 2 \eps T + \sum_{t \in G_2}\underbrace{\left \langle \pv{\hvtheta}{0} - \pv{\hvtheta}{1}, \vx_t \right \rangle}_{Q_t} &\tag{Cauchy-Schwarz}
\end{align*}
Finally, we show that $Q_t \leq 0$. For the rounds where $a_{t, e^*} = 1$ but $a^*_t = 0$, it can be the case that $\vx_t \neq \vx_t'$ (as the agents only strategize in order to get assigned action $1$. But since $a_{t, e^*} = 1$, then from the algorithm: 
\begin{align*}
    \left \langle \pv{\hvtheta}{1} - \pv{\hvtheta}{0}, \vx_t' \right \rangle \geq \delta \| e^* \| \Leftrightarrow  \left \langle \pv{\hvtheta}{0} - \pv{\hvtheta}{1}, \vx_t' \right \rangle \leq  -\delta \| e^* \| \numberthis{\label{eq:manipulation}}
\end{align*}
Adding and subtracting $\vx_t'$ from quantity $Q_t$, we have: 
\begin{align*}
    Q_t &= \left \langle \pv{\hvtheta}{0} - \pv{\hvtheta}{1}, \vx_t - \vx_t'\right \rangle + \left \langle \pv{\hvtheta}{0} - \pv{\hvtheta}{1}, \vx_t'\right \rangle \\
    &\leq \left \langle \pv{\hvtheta}{0} - \pv{\hvtheta}{1}, \vx_t - \vx_t'\right \rangle - \delta \|e^* \| &\tag{\Cref{eq:manipulation}}\\
    &\leq \left\| \pv{\hvtheta}{0} - \pv{\hvtheta}{1} \right\| \cdot \| \vx_t - \vx_t' \| - \delta \|e^*\| &\tag{Cauchy-Schwarz}\\
    &\leq \|e^*\| \cdot \delta - \delta \|e^*\|.
\end{align*}
As a result:
\begin{equation}\label{eq:SDE-G2}
    SDE(G_2) \leq 2 \eps T
\end{equation}
Moving on to the analysis of $SDE(G_3)$:
\[
SDE(G_3) = \sum_{t \in G_3} \left \langle\pv{\vtheta}{0} - \pv{\vtheta}{1}, \vx_t  \right \rangle
\]
Again, we use $\pv{\hvtheta}{1}$ and $\pv{\hvtheta}{0}$ the points that $e^* = \pv{\hvtheta}{1} - \pv{\hvtheta}{0}$. Adding and subtracting $\langle e^*, \vx_t \rangle$ and following the same derivations as in $SDE(G_3)$, we have that: 
\begin{equation}\label{eq:SDE-G3}
    SDE(G_3) \leq 2 \eps T + \sum_{t \in G_3} \underbrace{\left \langle \pv{\hvtheta}{1} - \pv{\hvtheta}{0}, \vx_t\right \rangle}_{Q_t}
\end{equation}
Since $a_{t,e^*} = 0$, then it must have been the case that $\vx_t' = \vx_t$; this is because the agent would not spend effort to strategize if they would still be assigned action $0$. For this reason, it must be that $Q_t \leq 0$. 

Combining the upper bounds for $SDE(G_2)$ and $SDE(G_3)$ in~\Cref{eq:sde-23}, we have that $SDE(T) \leq 4\eps T$.


Putting everything together, we have that the regret is comprised by the regret incurred on the discretized grid and the strategic discretization error, i.e., 
\begin{equation*}
    \E[\reg(T)] \leq 3 \sqrt{T |\calE| \sigma \log (T) \log(|\calE|)} + 4 \eps T = 3 \sqrt{T d\left( \frac{1}{\eps} \right)^d \sigma \log (T) \log(1/\eps)} + 4 \eps T
\end{equation*}
Tuning $\eps = \left(\frac{d \sigma \log T}{T}\right)^{1/(d+2)}$ we get that the regret is: \[
\E[\reg(T)] \leq 6 T^{(d+1)/(d+2)} (d \sigma \log T)^{1/(d+2)} = \widetilde{\mathcal{O}}\left( T^{(d+1)/(d+2)}\right).
\]
\end{proof}
\section{Extension to trembling hand best-response}\label{app:trembling}
Observe that when lazy tiebreaking (\Cref{def:br}), if agent $t$ modifies their context they modify it by an amount $\delta_L$ such that
\begin{equation*}
    \begin{aligned}
        \delta_{L,t} &:= \min_{0 \leq \eta \leq \delta} \eta\\
        \text{s.t.} \;\; &\pi_t(\vx_t') = 1\\
        &\|\vx_t' - \vx_t\|_2 = \eta.
    \end{aligned}
\end{equation*}

We define $\gamma$-\emph{trembling hand} tiebreaking as $\delta_{TH,t} = \delta_{L,t} + \alpha_t$, where $\alpha_t \in [0, \min\{\delta - \delta_{L,t}, \gamma\}]$ may be chosen arbitrarily.
Our results in~\Cref{sec:stochastic} may be extended to trembling hand tiebreaking by considering the following redefinition of a clean point:
\begin{condition}[Sufficient condition for $\vx' = \vx$]\label{cond:clean-app}
    Given a shifted linear policy parameterized by $\pv{\vbeta}{1} \in \mathbb{R}^d$, we say that a context $\vx'$ is \emph{clean} if $\langle \pv{\vbeta}{1}, \vx' \rangle > (\delta + \gamma) \| \pv{\vbeta}{1} \|_2 + r_0$.
\end{condition}
No further changes are required. 
This will result in a slightly worse constant in~\Cref{thm:stochastic-main} (i.e. all instances of $\delta$ will be replaced by $\delta + \gamma$).
Our algorithms and results in~\Cref{sec:adversarial} do not change.
Our definition of trembling hand best-response is similar in spirit to the $\epsilon$-best-response in \citet{haghtalab2022learning}. 
Specifically, \citet{haghtalab2022learning} study a Stackelberg game setting in which the follower best-responds $\epsilon$-optimally. 
In our trembling hand setting, the strategic agent can also be thought of as $\epsilon$-best responding (using the language of~\cite{haghtalab2022learning}), although it is important to note that an $\epsilon$-best response for the agent in our setting will cause the agent to only strategize \emph{more} than necessary.

\end{document}